\newtheorem{theorem}{Theorem}
\newtheorem{Definition}{Definition}
\newtheorem{Lemma}{Lemma}
\newtheorem{Proposition}{Proposition}
\DeclareMathOperator{\re}{\mathbb{R}}
\newcommand{\abs}[1]{\left|#1\right|}
\newcommand{\ind}{\mathds{1}}
\newcommand{\ee}{\mathrm{e}}
\newcommand{\Prob}{\mathbb{P}}
\DeclareMathOperator{\simdist}{\stackrel{\mathcal{D}}{\sim}}
\DeclareMathOperator{\yo}{y}
\DeclareMathOperator{\likeli}{\mathcal{L}}
\DeclareMathOperator{\convdist}{\stackrel{\mathcal{D}}{\rightarrow}}
\newcommand{\za}[1]{\stackrel{a}{#1}}
\newcommand{\zb}[1]{\stackrel{b}{#1}}
\newcommand{\zc}[1]{\stackrel{c}{#1}}
\newcommand{\zd}[1]{\stackrel{d}{#1}}
\newcommand{\ze}[1]{\stackrel{e}{#1}}
\begin{document}

\begin{frontmatter}

\title{Bayesian Robustness to Outliers in Linear Regression and Ratio Estimation}

\runtitle{Bayesian Robustness to Outliers in Linear Regression and Ratio Estimation}

\begin{aug}
\author{\fnms{Alain} \snm{Desgagn\'{e}}\thanksref{uqam}\ead[label=e3]{desgagne.alain@uqam.ca}}
\and
\author{\fnms{Philippe} \snm{Gagnon}\thanksref{udem}\ead[label=e1]{gagnonp@dms.umontreal.ca}}

\affiliation{Universit\'{e} du Qu\'{e}bec \`{a} Montr\'{e}al\thanksmark{uqam} and Universit\'{e} de Montr\'{e}al\thanksmark{udem}}

\address{D\'{e}partement de math\'{e}matiques\\
         Universit\'{e} du Qu\'{e}bec \`{a} Montr\'{e}al\\
       C.P. 8888, Succursale Centre-ville\\
        Montr\'{e}al, QC, H3C 3P8, Canada \\
\printead{e3}}

\address{D\'{e}partement de math\'{e}matiques et de statistique \cr
Universit\'{e} de Montr\'{e}al \cr
C.P. 6128, Succursale Centre-ville \cr
Montr\'{e}al, QC, H3C 3J7, Canada \cr
\printead{e1}}

\runauthor{Desgagn\'{e} A. and Gagnon P.}

\end{aug}

\begin{abstract}
Whole robustness is a nice property to have for statistical models. It implies that the impact of outliers gradually vanishes as they approach plus or minus infinity. So far, the Bayesian literature provides results that ensure whole robustness for the location-scale model. In this paper, we make two contributions. First, we generalise the results to attain whole robustness in simple linear regression through the origin, which is a necessary step towards results for general linear regression models. We allow the variance of the error term to depend on the explanatory variable. This flexibility leads to the second contribution: we provide a simple Bayesian approach to robustly estimate finite population means and ratios. The strategy to attain whole robustness is simple since it lies in replacing the traditional normal assumption on the error term by a super heavy-tailed distribution assumption. As a result, users can estimate the parameters as usual, using the posterior distribution.
\end{abstract}

\begin{keyword}[class=MSC]
\kwd[Primary ]{62F35}
\kwd[; secondary ]{62J05}
\end{keyword}

\begin{keyword}
\kwd{Built-in robustness}
\kwd{simple linear regression}
\kwd{ratio estimator}
\kwd{finite populations}
\kwd{population means}
\kwd{super heavy-tailed distributions.}
\end{keyword}

\end{frontmatter}

\section{Introduction}\label{intro}

Conflicting sources of information may contaminate the inference arising from statistical analysis. The conflicting information may come from outliers and also prior misidentification. In this paper, we focus on robustness with respect to outliers in a Bayesian simple linear regression model through the origin. We say that a conflict occurs when a group of observations produces a rather different inference than that proposed by the bulk of the data and the prior. Light-tailed distribution assumptions on the error term can lead to an undesirable compromise where the posterior distribution concentrates on an area that is not supported by any source of information. We believe that the appropriate way to address the problem is to limit the influence of outliers in order to obtain conclusions consistent with the majority of the observations.

\cite{1968tiao119} were the first to introduce a robust Bayesian linear regression model. They proposed to assume that the distribution of the error term is a mixture of two normals with one component for the nonoutliers and the other one, with a larger variance, for the outliers. This approach has been generalised by \cite{1984west431} who modelled errors with heavy-tailed distributions constructed as scale mixtures of normals, which include the Student distribution.
More recently, \cite{2009Pena2196} introduced a different robust Bayesian method where each observation has a weight decreasing with the distance between this observation and most of the data. They proved that the Kullback-Leibler divergence from the posterior arising from the nonoutliers only to the posterior arising from the sample containing outliers is bounded.

So far, the literature only provides solutions to attain whole robustness for the estimation of the slope
in the model of regression through the origin (e.g.\ if we assume that the error term has a Student distribution instead of a normal, see the results of \cite{andrade2011bayesian} in a context of location-scale model). However, only partial robustness is reached for the estimation of the scale parameter of the error term. Partial robustness means that the outliers have a significant but limited influence on the inference, as the conflict grows infinitely.
   In this  paper, we go  a  step further: we attain whole robustness to outliers for both the slope and scale parameters, in the sense that the impact of outliers gradually vanishes as they approach plus or minus infinity. To achieve this, we generalise the results of \cite{desgagne2015robustness}, which ensure whole robustness for both parameters of the location-scale model simultaneously, to the simple linear regression model through the origin. Our work is thus aligned with the \textit{theory of conflict resolution in Bayesian statistics}, as described by \cite{o2012bayesian} in their extensive literature review on that topic.

  The strategy to attain whole robustness for all parameters is, instead of assuming the traditional normality of the errors in the model, to assume that they have a super heavy-tailed distribution. The general model (with no specific distribution assumption on the error term) is described in Section~\ref{sec-model}. The class of super heavy-tailed distributions that we consider, which are log-regularly varying distributions, is presented in Section~\ref{sec-log-regularly}. When assuming a super heavy-tailed distribution on the error term, the resulting model is characterised by its built-in robustness that resolves conflicts in a sensitive and automatic way, as stated in our robustness results given in Section~\ref{sec-conflict}. The main result is the convergence of the posterior distribution towards the posterior arising from the nonoutliers only, when the outliers approach plus or minus infinity. Although our results are Bayesian analysis-oriented, they reach beyond this paradigm through the robustness of the likelihood function, and therefore, of both slope and scale maximum likelihood parameter estimation. These are the results that ensure that whole robustness is reached for the considered model.

  We believe our work will eventually lead to whole robustness results for the estimation of the parameters of the usual multiple linear regression model, which will in turn allow to introduce Bayesian robust ANOVA and t-test procedures. In fact, a preliminary numerical investigation suggests that similar results to those presented in this paper hold for multiple linear regressions. However, precise conditions and results will need to be specified. This can be achieved by the (non-trivial) extension of the proof presented in the following for the simple linear regression through the origin.

  In addition to representing a crucial step towards whole robustness for the more general case of multiple linear regressions, whole robustness for the simple linear regression through the origin finds an important application in the estimation of ratios and finite population means. As shown in Figure~\ref{fig_examples}, one may encounter the presence of outliers in achieving this task. In \cite{gwet1992outlier}, the ratio aimed to be estimated was the area under wheat in 1936 to the total cultivated area in 1931 in a given administrative geographical unit of Uttar Pradesh state in India. In \cite{chambers1986outlier}, it was the total population in 1970 in East Baltimore to the number of occupied dwelling in 1960 in the same area. In Section~\ref{sec_sim_study}, we illustrate the relevance of our robust approach through analyses in economic contexts. More precisely, the following contexts are considered: robust estimation of the personal disposable income per capita and of the average weekly household expenditure on food (using the ratio estimator). In Section~\ref{sec_sim_study}, we also detail the link between simple linear regression through the origin and finite population sampling, and present a simulation study. In all analyses, our approach is compared with the nonrobust (with the normal assumption) and partially robust (with the Student distribution assumption) approaches. It is showed that our model performs as well as the nonrobust and the partially robust models in absence of outliers, in addition to being completely robust. It indicates that, by only changing the assumption on the error term, we obtain adequate estimates in absence or presence of outliers. These estimates are computed as usual from the posterior distribution.

    \begin{figure}[h]
  \centering
  $\begin{array}{cc}
  \vspace{-0.0cm}\scriptsize\text{\textbf{Example from \cite{gwet1992outlier}}} & \scriptsize\text{\textbf{Example from \cite{chambers1986outlier}}}\cr
   \hspace{-3mm}\includegraphics[width=6.5cm]{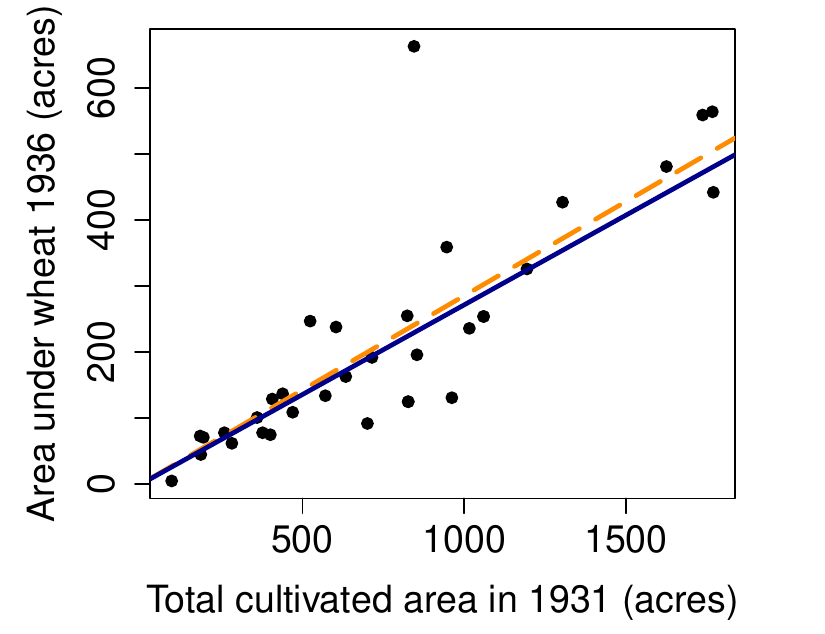} & \hspace{-3mm}\includegraphics[width=6.5cm]{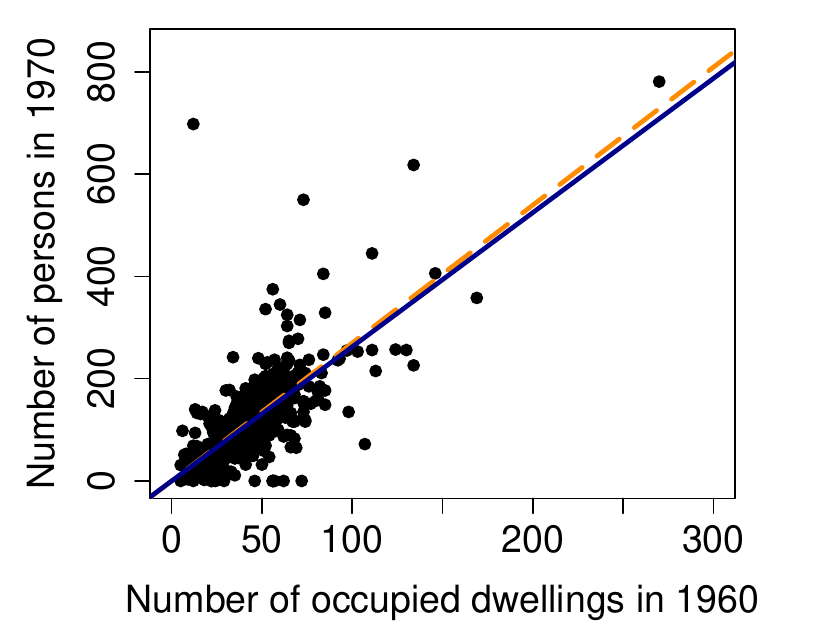}
  \end{array}$
  \caption{\small Example of data sets containing outliers with slope estimates under normal (orange dashed line), and super heavy-tailed (blue solid line) distribution assumptions; the data sets are provided in Section~\ref{sec-supp}}\label{fig_examples}
 \end{figure}
\normalsize

   \section{Resolution of Conflicts in Simple Linear Regression Through the Origin}\label{sec-robustness}

\subsection{Model}\label{sec-model}

\begin{description}
   \item[(i)]Let $Y_1,\ldots,Y_n\in\re$ be $n$ random variables and $x_1,\ldots,x_n\in\re\backslash{\{0\}}$ be $n$ known constants, where $n>2$ is assumed to be known. We assume that
   \begin{equation*}
        Y_i=\beta x_i+\epsilon_i,\quad i=1,\ldots,n,
\end{equation*}
where $\epsilon_1,\ldots,\epsilon_n\in\re$ and $\beta\in\re$ are $n+1$ conditionally independent random variables given $\sigma>0$ with a conditional density for $\epsilon_i$ given by
   \begin{equation*}
    \epsilon_i \mid \beta,\sigma\stackrel{\mathcal{D}}{=}\epsilon_i\mid\sigma \simdist \frac{1}{\sigma |x_i|^{\theta}}f\left(\frac{\epsilon_i}{\sigma |x_i|^{\theta}}\right), \quad i=1,\ldots,n,
  \end{equation*}
  $\theta\in\re$ being a known constant.

  \item[(ii)] We assume that $f$ is a strictly positive continuous probability density function on $\re$ that is symmetric with respect to the origin, and that is such that both tails of $|z| f(z)$ are monotonic, which implies that the tails of $f(z)$ are also monotonic. The density $f$ can have parameters, e.g.\ a shape parameter; however, their value is assumed to be known.

   \item[(iii)] We assume that the prior of $\beta$ and $\sigma$, denoted $\pi(\beta,\sigma)$, is bounded on $\sigma>1$, and is such that $\pi(\beta,\sigma)/(1/\sigma)$ is bounded on $0<\sigma\leq 1$, for all $\beta\in\re$. Together, these assumptions are equivalent to: $\pi(\beta,\sigma)/\max(1,$ $1/\sigma)$ is bounded on $\sigma>0$. A large variety of priors fit within this assumed structure; for instance, this is the case for all proper densities. In addition, non-informative priors such as $\pi(\beta,\sigma)\propto 1/\sigma$, the usual one for this type of random variables, and $\pi(\beta,\sigma)\propto 1$ satisfy these assumptions.
 \end{description}

  From this perspective, $x_1,\ldots,x_n$ represent observations of the explanatory variable, the dependent variable and the error term are respectively represented by the continuous random variables $Y_1,\ldots,Y_n$ and $\epsilon_1,\ldots,\epsilon_n$, and the parameter $\beta$ represents the slope of the regression line. Note that no assumptions are made on the explanatory variable, except that the value 0 cannot be observed.

  The scale of the distribution of the error term is $\sigma |x_i|^{\theta}$ and, therefore, the variability of the errors increases (decreases) as $x_i$ moves away from 0 when $\theta>0$ ($\theta<0$). This model can thus be used in a context of heteroscedasticity. When the classical framework is considered, i.e.\ a frequentist setting with the assumption that $f$ is the standard normal density, $\sigma |x_i|^{\theta}$ also represents the standard deviation of the error $\epsilon_i$. In this situation, the maximum likelihood estimator of $\beta$ is the weighted average of the $y_i/x_i$ given by $\hat{\beta}=\sum_{i=1}^{n}w_i (y_i/x_i)$, where $w_i=\abs{x_i}^{2(1-\theta)}/\sum_{j=1}^{n}\abs{x_j}^{2(1-\theta)}$.

  An important drawback of the classical framework is that outliers have a significant impact on the estimation, due to the normal assumption. In this paper, we study robustness of the estimation of $\beta$ and $\sigma$. The objective is to find sufficient conditions to attain whole robustness.
  The nature of the results presented in Section~\ref{sec-conflict} is asymptotic, in the sense that some $y_i$'s approach $+\infty$ or $-\infty$. The known vector $\mathbf{x_n}:=(x_1,\ldots,x_n)$ is considered as fixed. In Section~\ref{sec_illu_thm}, we explain that studying this theoretical framework is sufficient to attain, in practice, robustness against any type of outliers (i.e.\ outliers because of their extreme $x$ value, extreme $y$ value, or both).

Among the $n$ observations of $Y_1,\ldots,Y_n$, denoted by $\mathbf{y_n}$,
we assume that $k>2$ of them, denoted by the vector $\mathbf{y_k}$,
 form a group of nonoutlying observations, $m$ of them are considered as ``negative slope outliers'', with relatively small (large) values of $y_i$ when $x_i$ is positive (negative),
  and $p$ of them are considered as ``positive slope outliers'', with relatively large (small) values of $y_i$ when $x_i$ is positive (negative),
   with  $k+m+p=n$. Note that we use the letter $m$ for ``minus'' because the related outliers attract the slope towards negative values, and analogously, we use the letter $p$ for ``positive''. For $i=1,\ldots,n$, we define the binary functions $k_i, m_i$ and $p_i$ as follows: if $y_i$ is a nonoutlying value, $k_i=1$; if it is a negative slope outlier, $m_i=1$ and if it is a positive slope outlier, $p_i=1$. These functions take the value of 0 otherwise. Therefore, we have $k_i+m_i+p_i=1$ for $i=1,\ldots,n$, with $\sum_{i=1}^n k_i=k$, $\sum_{i=1}^n m_i=m$ and $\sum_{i=1}^n p_i=p$. We assume that each outlier approaches $-\infty$ or $+\infty$ at its own specific rate, to the extent that the ratio of two outliers is bounded. More precisely, we assume that $y_i=a_i+b_i \omega$, for $i=1,\ldots,n$, where $a_i$ and $b_i$ are constants such that $a_i\in\re$ and
\begin{description}
  \item[(i)] $b_i=0$ if $k_i=1$,
  \item[(ii)] $b_i<0$ if $y_i$ is ``small'', that is if $x_i<0, p_i=1$ or $x_i>0, m_i=1$,
  \item[(iii)] $b_i>0$ if $y_i$ is ``large'', that is if $x_i<0, m_i=1$ or $x_i>0, p_i=1$,
\end{description}
and we let $\omega\rightarrow \infty$.

Let the joint posterior density of $\beta$ and $\sigma$ be denoted by $\pi(\beta,\sigma\mid \mathbf{y_n})$
and the marginal density of $(Y_1,\ldots,Y_n)$ be denoted by $m(\mathbf{y_n})$, where
\begin{equation*}
\pi(\beta,\sigma\mid\mathbf{y_n})=[m(\mathbf{y_n})]^{-1}\pi(\beta,\sigma)\prod_{i=1}^n \frac{1}{\sigma |x_i|^{\theta}}f\left(\frac{y_i-\beta x_i}{\sigma |x_i|^{\theta}}\right),\hspace{1mm}\beta\in\re,\sigma>0.
\end{equation*}
Let the joint posterior density of $\beta$ and $\sigma$ arising from the nonoutlying observations only be denoted by $\pi(\beta,\sigma\mid \mathbf{y_k})$ and the corresponding marginal density be denoted by $m(\mathbf{y_k})$, where
\begin{equation*}\label{eqn-nonoutlier}
\pi(\beta,\sigma\mid\mathbf{y_k})=[m(\mathbf{y_k})]^{-1}\pi(\beta,\sigma)\prod_{i=1}^n \left[\frac{1}{\sigma |x_i|^{\theta}}f\left(\frac{y_i-\beta x_i}{\sigma |x_i|^{\theta}}\right)\right]^{k_i},\hspace{1mm}\beta\in\re,\sigma>0.
\end{equation*}
Note that if the prior $\pi(\beta,\sigma)$ is proportional to 1, the likelihood functions, given by the product term in the posteriors above, can also be expressed as follows:
\begin{equation}\label{likeli1}
\likeli(\beta,\sigma\mid\mathbf{y_n})= m(\mathbf{y_n})\pi(\beta,\sigma\mid\mathbf{y_n})\hspace{2mm}\text{ and }\hspace{2mm}
\likeli(\beta,\sigma\mid\mathbf{y_k})= m(\mathbf{y_k})\pi(\beta,\sigma\mid\mathbf{y_k}).
\end{equation}
\begin{Proposition}\label{proposition-proper}
Considering the Bayesian context given in Section~\ref{sec-model},
the joint posterior densities $\pi(\beta,\sigma\mid\mathbf{y_k})$ and $\pi(\beta,\sigma\mid\mathbf{y_n})$ are proper.
\end{Proposition}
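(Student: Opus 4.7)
The plan is to show that $m(\mathbf{y_k})<\infty$ and $m(\mathbf{y_n})<\infty$, since propriety of the two posteriors reduces to the finiteness of these normalising constants. Because both $k$ and $n$ exceed 2, the two cases follow the same pattern and I describe only the argument for $m(\mathbf{y_k})$. Three basic ingredients are used throughout. First, the prior bound $\pi(\beta,\sigma)\le C\max(1,1/\sigma)$, which is a rephrasing of the hypothesis that $\min(\sigma,1)\pi$ is bounded. Second, the fact that $M:=\sup_z f(z)<\infty$, which follows from the continuity of $f$ together with the monotonicity and decay of its tails. Third, the change-of-variable identity $\int_{-\infty}^\infty \frac{1}{\sigma|x_i|^\theta}f\bigl(\frac{y_i-\beta x_i}{\sigma|x_i|^\theta}\bigr)\,d\beta=1/|x_i|$, expressing that each factor is (up to a constant) a proper density in $\beta$.

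I would split $\sigma\in(0,\infty)$ at $\sigma=1$. On $\{\sigma\ge 1\}$, the prior is at most $C$, and majorising $k-1$ likelihood factors by their supremum $M/(\sigma|x_i|^\theta)$ while integrating the remaining factor in $\beta$ yields an upper bound on the kernel proportional to $\sigma^{-(k-1)}$, which is integrable on $[1,\infty)$ precisely because $k>2$. On $\{\sigma\le 1\}$, the prior bound $C/\sigma$ makes the crude sup bound non-integrable: the concentration of the likelihood in $\beta$ as $\sigma\to 0$ must then be exploited.

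Rewriting each factor using the symmetry of $f$ as $\frac{1}{\sigma|x_i|^\theta}f\bigl(\frac{z_i-\beta}{\sigma|x_i|^{\theta-1}}\bigr)$ with $z_i=y_i/x_i$, I would cover $\re$ by neighbourhoods $N_{i^*}=\{|\beta-z_{i^*}|\le\Delta\}$ of each nonoutlying $z_{i^*}$ (with $\Delta>0$ strictly smaller than half the minimum gap between the distinct $z_j$'s) plus their complement. In $N_{i^*}$, integrate the $i^*$-factor (giving at most $1/|x_{i^*}|$) and bound every other factor by a non-increasing tail majorant $f^*$ of $f$ (available from the monotonicity of the tails of $|z|f(z)$) evaluated at $\Delta/(\sigma|x_j|^{\theta-1})$, which tends to $\infty$ as $\sigma\to 0$; outside the peaks, all factors are in their tails and admit an analogous majorisation. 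The resulting bound on $\int L_k\,d\beta$ involves a product of $k-1$ tail values of $f^*$, and combined with the prior factor $C/\sigma$ and the decay $tf^*(t)\to 0$ implied by the hypotheses, it integrates to a finite quantity on $(0,1]$.

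The main obstacle is precisely this small-$\sigma$ step: the $1/\sigma$ from the prior must be absorbed by a sufficiently sharp concentration bound on the likelihood. Two hypotheses are essential here: $k>2$, which leaves at least two tail-decaying factors once one is used for the $\beta$-integration in each peak, and the monotonicity of the tails of $|z|f(z)$, which supplies the required quantitative control on $f^*$. The conclusion for $m(\mathbf{y_n})$ is obtained by the same argument with $n$ in place of $k$.
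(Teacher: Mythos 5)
Your proposal follows essentially the same route as the paper's proof: split the $\sigma$-domain into a region where $1/\sigma$ is bounded, handled by majorising all but one likelihood factor by $\sup_z f(z)$ and integrating the last one in $\beta$ (where $k>2$, resp.\ $n>2$, makes $\sigma^{-(k-1)}$ integrable at infinity), and a small-$\sigma$ region handled by localising $\beta$ around the points $y_i/x_i$, integrating the peaked factor, and pushing the other factors into the tails via the minimum gap (your $\Delta$ is the paper's $\delta$). The structure is sound, but one justification needs repair: you attribute the final integrability on $(0,1]$ to the decay $tf^*(t)\to 0$, and that alone is not sufficient. After the $\beta$-integration and the constant bounds, what remains is (up to constants) $\int_0^1 \sigma^{-2}f^*(c/\sigma)\,d\sigma$, and the substitution $t=c/\sigma$ turns this into $c^{-1}\int_c^\infty f^*(t)\,dt$; a majorant such as $f^*(t)=1/(t\log t)$ satisfies $tf^*(t)\to 0$ yet makes this integral diverge. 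The step is rescued not by the limit but by the fact that $f^*$ coincides with the integrable density $f$ beyond the tail-monotonicity threshold $M$ (this is exactly the paper's closing move: bound $k-2$ of the factors $\sigma^{-1}f^*(c_j/\sigma)$ by constants using the \emph{boundedness} of $|z|f(z)$, then integrate the remaining $\sigma^{-2}f^*(c/\sigma)$ by substitution against $\int_0^\infty f<\infty$). With that substitution argument in place, your proof is complete; the remaining differences from the paper --- splitting at $\sigma=1$ rather than at $\delta(M\zeta^{|\theta|})^{-1}$, and covering $\re$ by balls around the $y_i/x_i$ plus their complement rather than by a midpoint partition (in the complement you must still reserve one factor for the $\beta$-integration, since the tail majorants are constant in $\beta$ and the region is unbounded) --- are cosmetic.
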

 The proof of Proposition~\ref{proposition-proper} can be found in Section~\ref{sec-supp}.

  \subsection{Log-Regularly Varying Distributions}\label{sec-log-regularly}

 As mentioned in the introduction, our approach to attain robustness is to replace the traditional normal assumption on the error term by a log-regularly varying distribution assumption. The definition of such a distribution is now presented.

 \begin{Definition}[Log-regularly varying distribution]\label{def-log-regularly-distribution}
   A random variable Z with a symmetric density $f(z)$ is said to have a log-regularly varying distribution with index $\rho\geq 1$ if $z f(z)\in L_{\rho}(\infty)$, meaning that $zf(z)$ is \textit{log-regularly varying} at $\infty$ with index $\rho\geq1$.
\end{Definition}

Log-regularly varying functions is an interesting class of functions with useful properties for robustness. By definition, they are such that $g\in L_{\rho}(\infty)$ if $g(z^\nu)/g(z)$ converges towards $\nu^{-\rho}$ uniformly in any set $\nu\in[1/\tau,\tau]$ (for any $\tau\ge 1$) as $z\rightarrow\infty$, where $\rho\in\re$. This implies that for any $\rho\in\re$, we have $g\in L_{\rho}(\infty)$ if and only if there exists a constant $A>1$ and a function $s\in L_{0}(\infty)$ (which is called a log-slowly varying function) such that for $z\ge A$, $g$ can be written as $g(z)=(\log z)^{-\rho} s(z)$. An example of log-regularly varying distributions is presented in Section~\ref{sec_illu_thm}. The purpose of this section was to provide an overview of the tail behaviour of such distributions. For more information on log-regularly varying distributions, we refer the reader to \cite{desgagne2013full} and \cite{desgagne2015robustness}.

\subsection{Resolution of conflicts}\label{sec-conflict}

The results of robustness are now given in Theorem~\ref{thm-main}.
\begin{theorem}\label{thm-main}
Consider the model and the context described in Section~\ref{sec-model}.
If we assume that
 \begin{description}
         \item[(i)]$z f(z)\in L_{\rho}(\infty)$, with $\rho\ge 1$ \quad (i.e.\ that $f$ is a log-regularly varying distribution),
          \item[(ii)]$k>\max(m,p)$ \quad (i.e.\ that both the negative and positive slope outliers are fewer than the nonoutliers),
      \end{description}
then, recalling that $y_i=a_i+b_i \omega$ with $b_i=0$ for the nonoutliers and $b_i\neq 0$ for the outliers, we obtain the following results:
 \begin{description}
   \item[(a)]
   \begin{equation*}
   \lim_{\omega\rightarrow \infty}\frac{m(\mathbf{y_n})}{\prod_{i=1}^{n}[f(y_i)]^{m_i+p_i}}= m(\mathbf{y_k}),
   \end{equation*}
     \item[(b)]
  \begin{equation*}
   \lim_{\omega\rightarrow \infty}\pi(\beta,\sigma\mid\mathbf{y_n})=\pi(\beta,\sigma\mid\mathbf{y_k}),
   \end{equation*}
uniformly on $(\beta,\sigma)\in [-\lambda,\lambda]\times [1/\tau,\tau]$, for any $\lambda\ge 0$ and $\tau\ge 1$,
\item[(c)]
\begin{equation*}
\lim_{\omega\rightarrow \infty}\int_{0}^{\infty}\int_{-\infty}^{\infty}\big|\pi(\beta,\sigma\mid\mathbf{y_n})-\pi(\beta,\sigma\mid\mathbf{y_k})\big|\,d\beta\,d\sigma= 0,
\end{equation*}
\item[(d)]As $\omega\rightarrow \infty$,
\begin{equation*}
 \beta,\sigma\mid\mathbf{y_n} \convdist \beta,\sigma\mid\mathbf{y_k},
 \end{equation*}
and in particular
\begin{equation*}
\beta\mid\mathbf{y_n} \convdist \beta\mid\mathbf{y_k} \hspace{5mm}\text{ and }\hspace{5mm}
 \sigma\mid\mathbf{y_n} \convdist \sigma\mid\mathbf{y_k},
 \end{equation*}
  \item[(e)]
  \begin{equation*}
   \lim_{\omega\rightarrow \infty}[m(\mathbf{y_k})/m(\mathbf{y_n})]\likeli(\beta,\sigma\mid\mathbf{y_n})=\likeli(\beta,\sigma\mid\mathbf{y_k}),
   \end{equation*}
uniformly on $(\beta,\sigma)\in [-\lambda,\lambda]\times [1/\tau,\tau]$, for any $\lambda\ge 0$ and $\tau\ge 1$.
\end{description}
\end{theorem}

The proof of Theorem~\ref{thm-main} can be found in Section~\ref{sec-supp}. Note that, when $x_1=\ldots=x_n=1$, the simple linear regression model through the origin becomes the location-scale model, and this highlights the fact that our results generalise those of \cite{desgagne2015robustness}.

Theorem~\ref{thm-main} is particularly appealing for its simplicity, and therefore, for its practical use. Indeed, condition (i) only indicates that modelling must be done using a density $f$ with sufficiently heavy tails, specifically with a log-regularly varying distribution (see Definition~\ref{def-log-regularly-distribution}). For that purpose, \cite{desgagne2015robustness} introduced the family of log-Pareto-tailed symmetric distributions, which belongs to the family of log-regularly varying distributions and therefore satisfies condition (i).
This new family includes, for instance, piecewise densities constructed from well-known symmetric densities as the normal, uniform or Student by replacing their extremities by log-Pareto tails, i.e.\ tails that behave like $(1/|z|)(\log|z|)^{-\phi}$ with $\phi>1$. A special case of log-Pareto-tailed symmetric distributions, called the log-Pareto-tailed standard normal (LPTN) distribution with parameters $\alpha>1$ and $\phi>1$, is given in Section~\ref{sec_illu_thm}. It exactly matches the standard normal on the interval $[-\alpha,\alpha]$, with log-Pareto tails. This is the super heavy-tailed distribution that we use in our numerical analyses. Note that we can also construct symmetric densities with log-Pareto tails that are not piecewise through transformations of the Pareto distribution. For instance, from a Pareto random variable $Y$ with density $g(y)=\phi\theta^\phi y^{-(\phi+1)}, y>\theta$, we can make the change of variable $|Z|=e^Y-e^\theta \Leftrightarrow Y=\log(|Z|+e^\theta)$ to obtain a double-log-Pareto distribution with density
$$
 f(z)=(1/2)\phi\theta^\phi(|z|+e^\theta)^{-1}[\log(|z|+e^\theta)]^{-(\phi+1)}, \quad  -\infty<z<\infty, \theta>0, \phi>0.
 $$

  Condition (ii) indicates that both the negative and positive slope outliers must be fewer than the  nonoutlying observations, i.e.\ $m<k$ and $p<k$. In other words, the nonoutlying observations must form the largest group. For instance, with a sample of size $n=25$, the model rejects up to 16 outliers if they are split in $m=8$ negative and $p=8$ positive slope outliers, which leaves $k=9$ nonoutliers. At the other end of the spectrum, in the situation where all outliers are of the same type, for instance all positive slope outliers (which implies that $m=0$), the model rejects up to $p=12$ outliers, which leaves $k=13$ nonoutliers. Numerical simulations seem to confirm our expectation that a larger difference between $k$ and $\max(m,p)$ results in a more rapid rejection of the outliers.

  The breakdown point is generally defined as the largest proportion of outliers that an estimator can handle. In our situation, for a sample size of $n$, the condition $k>\max(m,p)$ translates into a breakdown point of $\lfloor(n-1)/2 \rfloor/n$, that is the integer part of $(n-1)/2$ divided by $n$, if we consider only positive slope outliers (or only negative slope outliers). As $n\rightarrow\infty$, the breakdown point converges to $0.5$, usually considered as the maximum desired value.

Not only do the conditions of Theorem~\ref{thm-main} are simple and intuitive, the results are also easy to interpret. The asymptotic behaviour of the marginal $m(\mathbf{y_n})$ is described by result~(a). While this result is more of theoretical interest, it is the cornerstone of this robustness theory; it leads to results (b) to (e), which are more practical. Result~(b) indicates that the posterior density, arising from the whole sample, converges towards the posterior density arising from the nonoutliers only, uniformly in any set $(\beta,\sigma)\in [-\lambda,\lambda]\times [1/\tau,\tau]$. The impact of the outliers then gradually decreases to nothing as they approach plus or minus infinity.

Result~(b) leads to result~(c): the convergence in $L_1$ of the posterior density, arising from the whole sample, towards the posterior density arising from the nonoutlying observations only. This last result implies the following convergence: $\Prob(\beta,\sigma\in E \mid\mathbf{y_n})\rightarrow\Prob(\beta,\sigma\in E \mid\mathbf{y_k})$ as $\omega\rightarrow\infty$, uniformly for all rectangles $E\in\re\times \re^{+}$. This result is slightly stronger than convergence in distribution (result~(d)) which requires only pointwise convergence. Then, the convergence of the posterior marginal distributions is directly obtained. Therefore, any estimation of $\beta$ and $\sigma$ based on posterior quantiles (e.g.\ posterior medians and Bayesian credible intervals) is robust to outliers.
Note that results~(a) to (d) are also valid if we assume that $n\ge 2, k\ge 2$ (instead of $n>2, k>2$), provided that we assume that
 $\sigma\pi(\beta,\sigma)$ is bounded (instead of $\min(\sigma,1)\pi(\beta,\sigma)$ is bounded).

Result~(e) indicates that, for a given sample, the likelihood (up to a multiplicative constant   that does not depend on  $\beta$ and $\sigma$) converges to the likelihood arising from the nonoutliers only, uniformly in any set $(\beta,\sigma)\in E$,   where $E=[-\lambda,\lambda]\times [1/\tau,\tau]$. Consequently, the maximum of $\likeli(\beta,\sigma\mid\mathbf{y_n})$ thus converges to the maximum of $\likeli(\beta,\sigma\mid\mathbf{y_k})$ on the set $E$ and, therefore the maximum likelihood estimate also converges, as $\omega\rightarrow\infty$.
Note that, using results (b) to (d), we know that, for both $\pi(\beta,\sigma\mid\mathbf{y_k})$  and $\pi(\beta,\sigma\mid\mathbf{y_n})$, the volume on $E^c$ over
    the volume on $E$ converges to 0 as $\lambda$ and $\tau$ increase; this relation holds in particular if $\pi(\beta,\sigma)\propto 1$
    and, in this case, the posterior is proportional to the likelihood.

 \section{Finite Population Means and Ratios}\label{sec_sim_study}

 To use the model described in Section~\ref{sec-model}, users have to set the value of $\theta$. Different particular values lead to interesting special cases. For instance, when $\theta=0$, the resulting model is the classical homoscedastic model, with $\text{Var}(\epsilon_i)=\sigma^2$ and $\hat{\beta}=\sum_{i=1}^{n}x_i y_i\big/ \sum_{j=1}^n x_j^2$, considering the classical framework. When $\theta=1$, the estimator of $\beta$ is the unweighted mean of the $y_i/x_i$, that is $\hat{\beta}=(1/n)\sum_{i=1}^{n}y_i/x_i$. Probably the most interesting special case results from $\theta=1/2$ and $x_i>0$ for all $i$. Indeed, considering again the classical framework, the estimator of $\beta$ is $\hat{\beta}=\sum_{i=1}^{n}y_i\big/\sum_{i=1}^{n}x_i$, which is commonly used to estimate the following finite population ratio: $\sum_{i=1}^{N}y_i\big/\sum_{i=1}^{N}x_i$, where $y_i$ and $x_i$ are measures of the variable of interest and of the auxiliary variable on unit $i$, respectively, and $N$ is the population size. The estimator $\hat{\beta}=\sum_{i=1}^n y_i\big/\sum_{i=1}^n x_i$ is also used to estimate the finite population mean $\mu_y$ of a variable of interest $y$ using auxiliary information of a variable $x$ as follows: $\hat{\mu}_y= \hat{\beta}\times\mu_x$, where $\mu_x$ is the known population mean of $x$. This last estimator is known as the ratio estimator and to be more accurate than the simple location model when the variable of interest is correlated with the auxiliary variable. Therefore, robust estimators of $\beta$ lead to robust estimators of finite population means and ratios.  To our knowledge, \cite{gwet1992outlier} introduced the first frequentist outlier resistant alternatives to the ratio estimator, using well known \textit{M}- (\cite{huber1973robust}) and \textit{GM}- (\cite{mallows1975some}) estimators. Their research was inspired by the work of \cite{chambers1986outlier}, the first author to use regression \textit{M}-estimators in survey sampling.

 In Section~\ref{sec_illu_thm}, we present real-life situations in which ratio estimation is useful, while illustrating the theoretical results of Theorem~\ref{thm-main}. First, in a context of estimation of personal disposable income (PDI) per capita, we show that, when we artificially move an observation, its impact on the estimation grows until it reaches a certain threshold.
 Beyond this threshold, the impact decreases to nothing as the observation approaches plus or minus infinity.
 Second, a more traditional Bayesian analysis is made, in which we study the proportion of income spent on food. More precisely, we present the posterior distributions, with particular emphasis on the impact of outliers, and we compute various estimates from the posteriors. In Section~\ref{sec_comparison}, again in a context of finite population sampling, a simulation study is conducted to evaluate the accuracy of the estimates arising from our model. In all analyses, we compare its performance with those of the nonrobust (the model with the normal assumption) and partially robust (the model with the Student distribution assumption) models. As mentioned in  Section~\ref{intro}, the model of \cite{1968tiao119} can be viewed as a special case of the partially robust model. We therefore omit the comparison with their model. In the simulation study, we also consider the following frequentist competitors: the \textit{M}- and \textit{S}- (\cite{rousseeuw1984robust}) estimators. R functions that are used for the computations are provided in Section~\ref{sec-supp}.

\subsection{Illustration of the Results of Theorem~\ref{thm-main}}\label{sec_illu_thm}

In the first context, we are interested in the estimation of the PDI per capita when the available data are the total disposable income ($y_i$) for $n$ households (in this analysis $n=20$), and the number of individuals ($x_i$) in each of these households. The data are presented in Table~\ref{tab_data}. The PDI per capita, which is a population mean per individual, would be directly computed by $\sum_{i=1}^N y_i\big/\sum_{i=1}^N x_i$ (where $N$ is the number of households in the population) if the information was available for all the households. We therefore use the simple linear regression model through the origin with $\theta=1/2$ to estimate this ratio (see Section~\ref{sec-model} for details about the model).

 \begin{table} [h]
  \centering
  \footnotesize
   \begin{tabular}{rrrrrrrrrrrrrrrrrrrrr}
    \toprule
     $y_i$ & 20.8 & 9.6 & 38.6 & 74.1 & 108.8 & 98.7 & 44.8 & 77.2 & 93.2 & 107.2 \cr
     $x_i$ & 1.0 & 1.0 & 2.0 & 3.0 & 3.0 & 3.0 & 3.0 & 3.0 & 3.0 & 3.0  \\
    \midrule
     $y_i$  & $y_{11}$ & 93.6 & 113.7 & 123.5 & 93.5 & 148.1 & 147.1 & 154.0 & 149.5 & 173.5 \\
     $x_i$ & 3.0 & 4.0 & 4.0 & 4.0 & 4.0 & 5.0 & 5.0 & 5.0 & 6.0 & 6.0 \cr
    \bottomrule
 \end{tabular}
 \caption{\small Total disposable income for household $i$ in thousands of dollars ($y_i$) and the number of individuals in household $i$ ($x_i$), for $i=1,\ldots,20$} \label{tab_data}
 \end{table}

In order to illustrate the threshold feature, an observation is randomly chosen (in this analysis, it is the 11th observation), and $y_{11}$ is gradually moved
from the value 100 (a nonoutlier) to 385 (a large outlier), while $x_{11}=3$ remains fixed. The parameters $\beta$ and $\sigma$ are estimated for each data set related to a different value of $y_{11}$ using maximum \textit{a posteriori} probability (MAP) estimation with a prior proportional to 1 (which corresponds to maximum likelihood estimation). This process is performed under three models, each corresponding to a different assumption on $f$: a standard normal density (in this case, $\hat{\beta}=\sum_{i=1}^{20} y_i\big/\sum_{i=1}^{20} x_i$,
 the classical ratio estimator), a Student density (the partially robust model) or a LPTN density (our robust model). The results are presented in Figure~\ref{fig_illus_thm}.

 \begin{figure}[h]
  \centering
  $\begin{array}{cc}
  \vspace{-0.1cm}\scriptsize\text{\textbf{Estimation of the PDI per capita ($\beta$)}} & \scriptsize\text{\textbf{Estimation of $\sigma$ under various dist.}}\cr
   \vspace{-0.1cm}\scriptsize\text{\textbf{under various dist. assumptions when}} & \scriptsize\text{\textbf{assumptions when $y_{11}$ increases}} \cr
   \scriptsize\text{\textbf{$y_{11}$ increases from 100 to 385}} & \scriptsize\text{\textbf{from 100 to 385}} \cr
   \hspace{-3mm}\includegraphics[width=6.5cm]{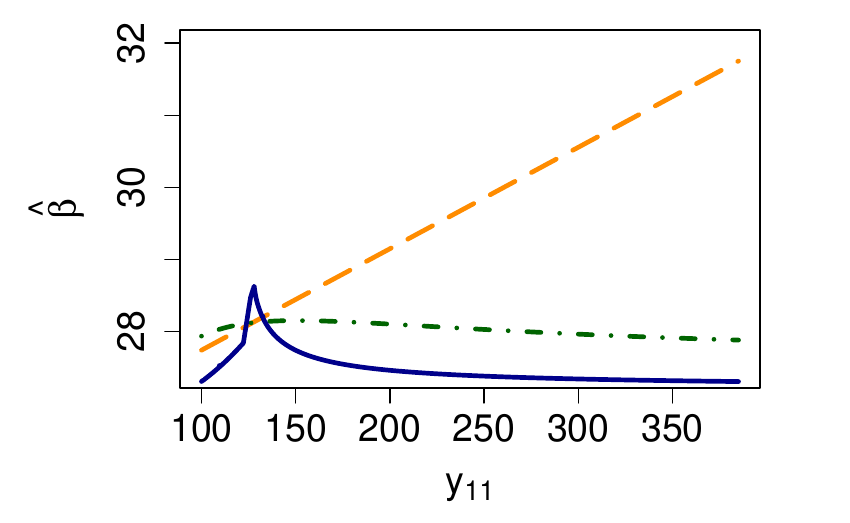} & \hspace{-3mm}\includegraphics[width=6.5cm]{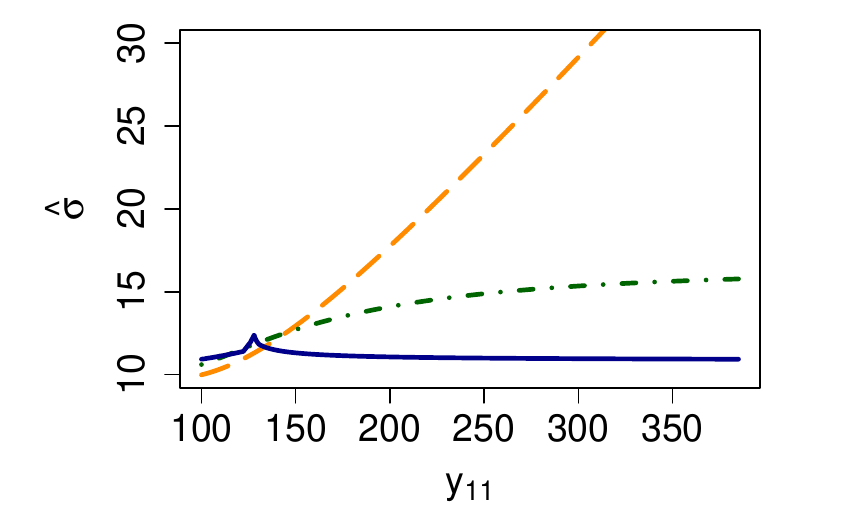}
  \end{array}$
  \caption{\small Estimation of the PDI per capita ($\beta$) and $\sigma$ when $y_{11}$ increases from 100 to 385 under three different assumptions on $f$: standard normal density (orange dashed line), Student density (green dot-dashed line) and LPTN density (blue solid line)}\label{fig_illus_thm}
 \end{figure}
\normalsize

The inference is clearly not robust when it is assumed that the error has a normal distribution (orange dashed line) since the values of the point estimates of $\beta$ and $\sigma$ increase with $y_{11}$. Regarding the second model, the degrees of freedom of the heavy-tailed Student distribution have been arbitrarily set to 10 and a known scale parameter of $0.88$ has been added to this distribution in order to have the same 2.5th and 97.5th percentiles as the standard normal. The estimation of $\beta$ is robust as the impact of the outlier slowly decreases after a certain threshold. However, the estimation of $\sigma$ is only partially robust, i.e.\ the impact of the outlier is limited, but does not decrease when the outlying value increases. For the last model, we set $\alpha$ of the LPTN to $1.96$ so that this distribution matches the standard normal on the interval $[-1.96,1.96]$, implying that both distributions have the same 2.5th and 97.5th percentiles. Therefore, all three distributions studied in this section have 95\% of their mass in the interval $[-1.96,1.96]$. The other parameter of the LPTN $\phi$ is equal to $4.08$ according to the procedure described in Section~4 of \cite{desgagne2015robustness} (this procedure ensures that $f$ is continuous and a probability density function). The density of the LPTN distribution, 
depicted in Figure~\ref{fig_comp_normal_log_pareto}, is given by
\begin{equation}\label{eqn_log_pareto}
\small  f(x)=\begin{cases}
        \varphi(x) \text{ if } \abs{x}\leq \alpha\,\,\text{(the standard normal part)}, \cr
	\varphi(\alpha)(\alpha/\abs{x})(\log \alpha/\log\abs{x})^\phi \text{ if } \abs{x}>\alpha\,\,\text{(the log-Pareto tails)},
       \end{cases}
 \end{equation}
where $\phi=1+2 \varphi(\alpha)\alpha \log(\alpha)/(1-q)$ and $q=\Phi(\alpha)-\Phi(-\alpha)$, $\varphi$ and $\Phi$ being the probability density function and cumulative distribution function of the standard normal distribution, respectively.


 \begin{figure}[h]
  \centering
  $\begin{array}{cc}
    \multicolumn{2}{c} {\vspace{2mm}\small\textbf{ Standard Normal Vs LPTN}} \cr
\includegraphics[width=6cm]{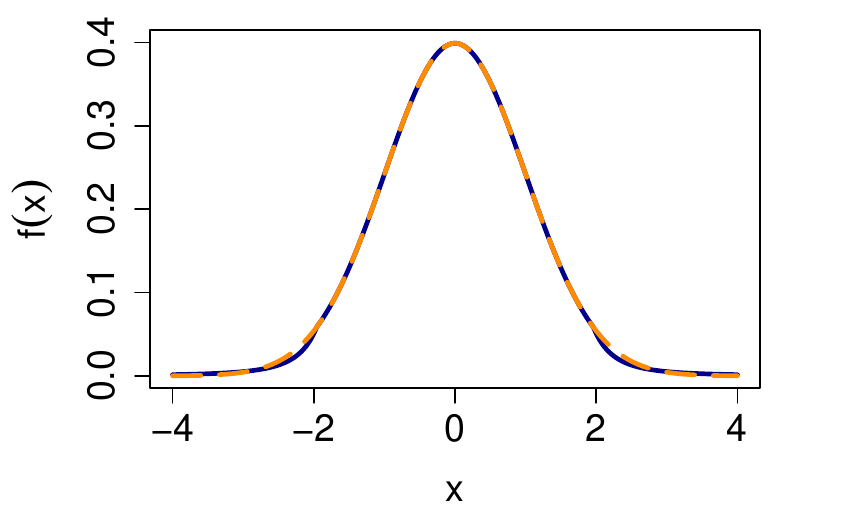} & \includegraphics[width=6cm]{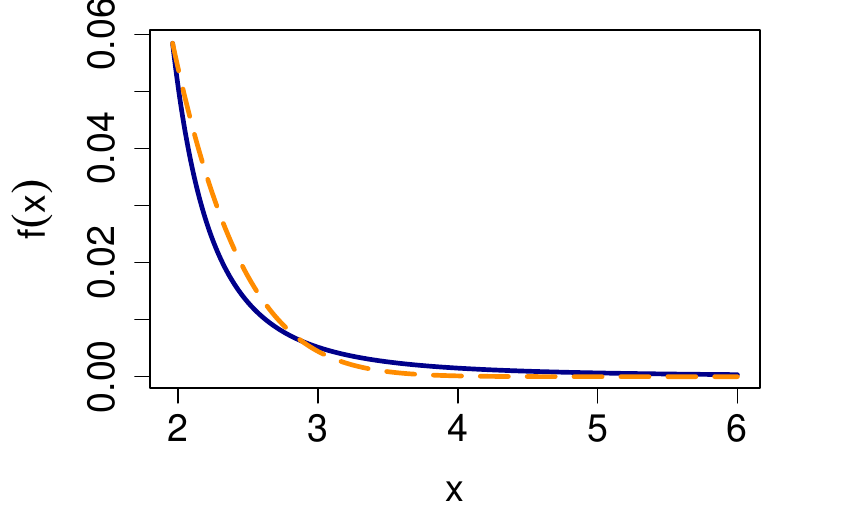}
  \end{array}$
  \caption{\small Densities of the standard normal (orange dashed line) and of the LPTN with $\alpha=1.96$ and $\phi=4.08$ (blue solid line)}\label{fig_comp_normal_log_pareto}
 \end{figure}
 \normalsize

For our robust model, it can be seen that $y_{11}$ has an increasing impact on the estimation until this observation reaches a threshold. In this analysis, the threshold is around $y_{11}=127.9$, and based on the data set with $y_{11}=127.9$, $\hat{\beta}=28.6$ and $\hat{\sigma}=12.4$, which is interpreted as: the personal disposable income per capita is approximately 28,600. Beyond this threshold, the impact of the outlier gradually decreases to nothing as the conflict grows infinitely. The point estimates converge towards $27.1$ for $\beta$ and $10.8$ for $\sigma$, which are the point estimates when $(x_{11},y_{11})$ is excluded from the sample. Whole robustness is therefore attained for both $\beta$ and $\sigma$. Note that an increase in the value of the parameter $\alpha$ would result in an increase in the value of the threshold. Setting $\alpha=1.96$ seems to be suitable for practical use.

In the second context, we are interested in the estimation of the proportion of weekly income spent on food for a population, when data are available per household. If the information was available for the population, we would directly compute the proportion by $\sum_{i=1}^N y_i\big/\sum_{i=1}^N x_i$, where $N$ is the number of households in the population, and $y_i$ and $x_i$ are respectively the weekly expenditure on food and the weekly income, for household $i$. This ratio can thus be approximated using the simple linear regression through the origin with $\theta=1/2$, and again, we compare our robust model with the nonrobust and partially robust models. We use the same Student and LPTN distributions as in the first context above, but we set the prior $\pi(\beta,\sigma)\propto1/\sigma$. A Markov chain Monte Carlo (MCMC) method is implemented for the estimation (see Section~\ref{sec-supp} for the R functions). It is run for 10,000,000 iterations.

Note that the ratio $\sum_{i=1}^N y_i\big/\sum_{i=1}^N x_i$ can be viewed as the following weighted average: $\sum_{i=1}^N w_i (y_i/$ $x_i)$, where $w_i:=x_i/\sum_{i=1}^N x_i$. It means that proportion of weekly income spent on food for a population, $\sum_{i=1}^N y_i\big/\sum_{i=1}^N x_i$, is also a weighted average of proportions of weekly income spent on food per household, where the weight is proportional to the weekly income.

The data set, comprised of the weekly expenditures on food and weekly incomes for twenty households, is presented in Table~\ref{tab_data_food}  and depicted in Figure~\ref{fig_food_with_outliers} (a). The posterior distributions of $\beta$ and $\sigma$ are presented in Figures~\ref{fig_food_with_outliers} (b) and (c). The posterior medians of $\beta$ are $0.283$, $0.306$ and $0.319$ with 95\% highest posterior density (HPD) intervals of $(0.217,0.348)$, $(0.243,0.367)$ and $(0.240,0.376)$ for the nonrobust, partially robust and robust models, respectively. As a result, the proportion of weekly income spent on food for this population is estimated at $0.319$ (considering our robust model) with a 95\% HPD interval of $(0.240,0.376)$. The average weekly household expenditure on food of this population can also be estimated using the ratio estimator. Considering our robust model, it is estimated at $\hat{\mu}_y=\hat{\beta}\times \mu_x=0.319\times 210=66.99$ (considering an average weekly household income of 210 for this population) with a 95\% HPD interval of $(50.40, 78.96)$. The posterior medians of $\sigma$ are $2.180$, $2.031$ and $1.634$ with 95\% HPD intervals of $(1.565,3.016)$, $(1.319,2.958)$ and $(0.962,2.674)$, for the nonrobust, partially robust and robust models, respectively.

\begin{table} [h]
  \centering
  \footnotesize
   \begin{tabular}{rrrrrrrrrrrrrrrrrrrrr}
    \toprule
      $y_i$ & 31.7 & 68.4 & 54.4 & 53.5& 78.4 &66.4 & 64.1 & 44.6 & 99.0 & 53.3 & \cr
     $x_i$ & 102.9 & 144.9 & 155.8 & 176.5 & 177.4 & 182.2 & 197.9 & 199.2 & 211.3 & 215.9   \cr
      \midrule
     $y_i$ & 67.3 & 68.6 & 63.0 & 100.6 & 82.2 & 113.4 & 6.1 & 76.6 & 92.7 & 41.1 \cr
     $x_i$ & 216.0 & 216.7 & 220.3 & 222.8 & 229.0 & 250.0 & 250.2 & 275.4 & 342.4 & 696.4 \cr
    \bottomrule
 \end{tabular}
 \caption{\small Weekly expenditure on food ($y_i$) and weekly income ($x_i$) for household $i$ in dollars, $i=1,\ldots,20$} \label{tab_data_food}
 \end{table}

   We observe the presence of two clear outliers: $(x_{17},y_{17})=(250.2,6.1)$ (because of its extremely low $y$ value) and $(x_{20},y_{20})=(696.4,41.1)$ (because of its extremely high $x$ value). In order to draw conclusions based on the bulk of the data and to evaluate the impact of outliers, we redo the analysis while excluding these two outliers. The results are presented in Figure~\ref{fig_food_without_outliers}. The estimates arising from the three models are now similar. The posterior medians of $\beta$ are $0.342$, $0.339$ and $0.343$ with 95\% HPD intervals of $(0.302,0.382)$, $(0.298,0.380)$ and $(0.303,0.382)$ for the nonrobust, partially robust and robust models, respectively. Therefore, the proportion of weekly income spent on food for this population is estimated at $0.343$ (considering our robust model) with a 95\% HPD interval of $(0.303,0.382)$, based on the bulk of the data. Considering our robust model, the average weekly household expenditure on food is now estimated at $\hat{\mu}_y=\hat{\beta}\times \mu_x=0.343\times 210=72.03$ (considering an average weekly household income of 210 for this population) with a 95\% HPD interval of $(63.63, 80.22)$, using the ratio estimator. The posterior medians of $\sigma$ are $1.177$, $1.268$ and $1.190$ with 95\% HPD intervals of $(0.825,1.656)$, $(0.850, 1.823)$ and $(0.854,1.661)$, for the nonrobust, partially robust and robust models, respectively.

  \begin{figure}[h]
    $\begin{array}{ccc}
   \hskip-0.2cm\scriptsize\text{\textbf{(a) Income spent on food}} & \hspace{-2mm} \scriptsize\text{\textbf{(b) Posterior density of $\beta$}} &\hspace{-1mm} \scriptsize\text{\textbf{(c) Posterior density of $\sigma$}} \cr
   \hskip-0.2cm\includegraphics[width=4.1cm]{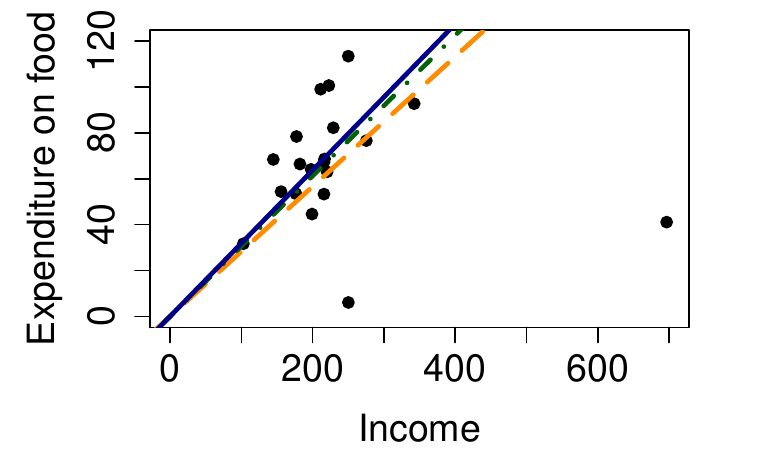} &\hspace{-2mm} \includegraphics[width=4.1cm]{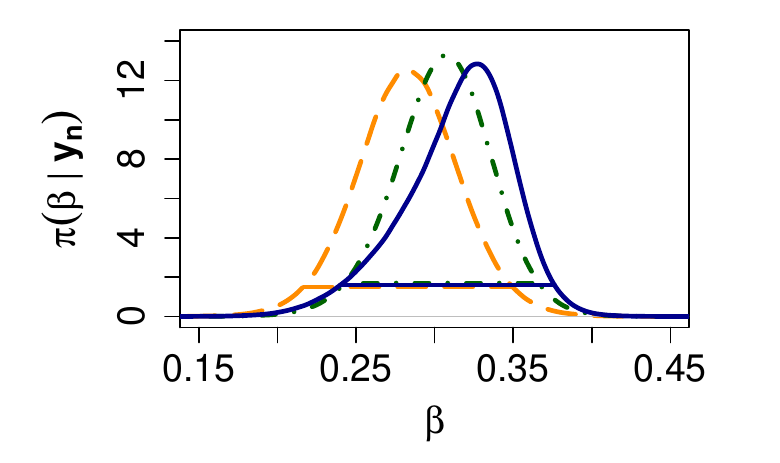} &\hspace{-1mm} \includegraphics[width=4.1cm]{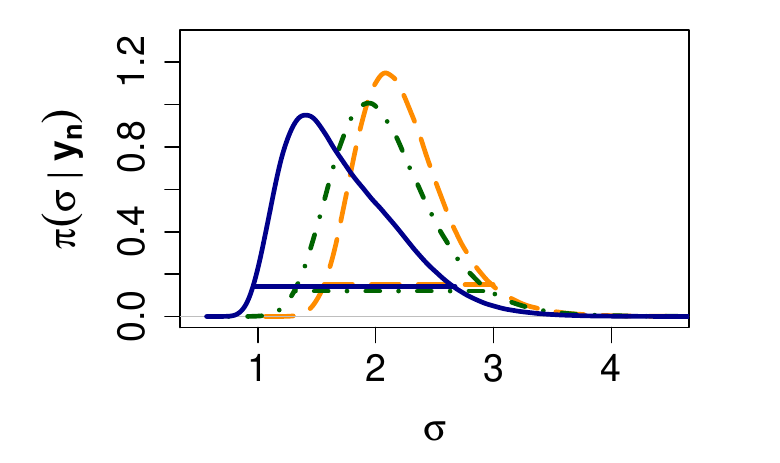}
  \end{array}$
  \caption{\small Expenditure on food as a function of the income with an estimation of the expenditure on food $\hat{\beta} x_i$ based on the posterior median, (b)-(c) Posterior densities of $\beta$ and $\sigma$ arising from the original data with 95\% HPD intervals (horizontal lines); for each graph, the orange dashed, green dot-dashed and blue solid lines are respectively related to the nonrobust, partially robust and robust models}\label{fig_food_with_outliers}
 \end{figure}
\normalsize

  \begin{figure}[h]
  $\begin{array}{ccc}
   \hskip-0.2cm\scriptsize\text{\textbf{(a) Income spent on food}} & \hspace{-2mm} \scriptsize\text{\textbf{(b) Posterior density of $\beta$}} & \hspace{-1mm} \scriptsize\text{\textbf{(c) Posterior density of $\sigma$}} \cr
   \hskip-0.2cm\includegraphics[width=4.1cm]{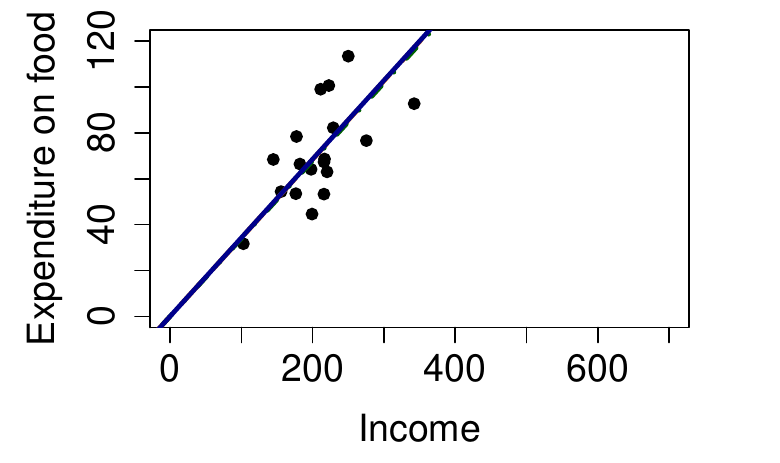} &\hspace{-2mm} \includegraphics[width=4.1cm]{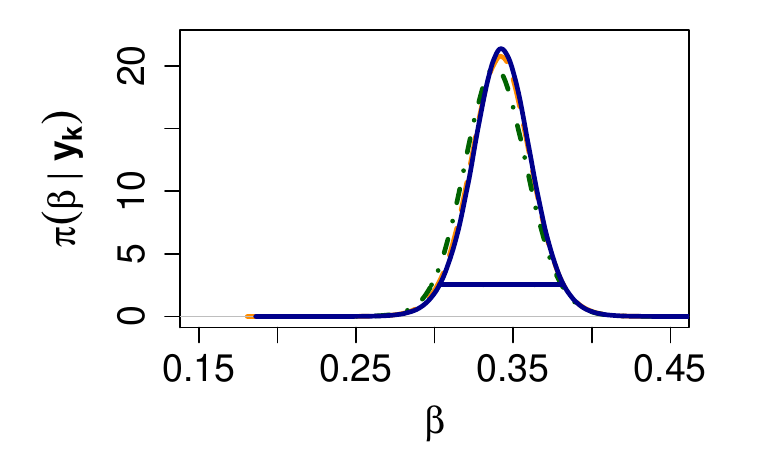} & \hspace{-1mm} \includegraphics[width=4.1cm]{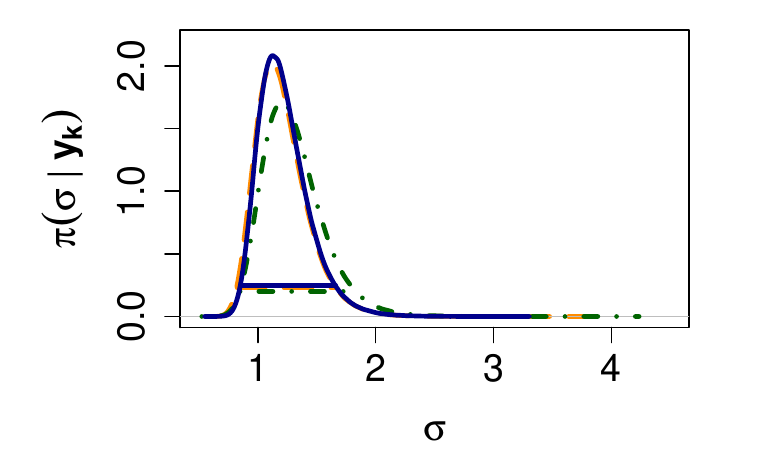}
  \end{array}$
  \caption{\small Expenditure on food as a function of the income with an estimation of the expenditure on food $\hat{\beta} x_i$ based on the posterior median, when the outliers are excluded, (b)-(c) Posterior densities of $\beta$ and $\sigma$ arising from the data set excluding the outliers with 95\% HPD intervals (horizontal lines); for each graph, the orange dashed, green dot-dashed and blue solid lines are respectively related to the nonrobust, partially robust and robust models}\label{fig_food_without_outliers}
 \end{figure}
\normalsize

 Based on the original data set, the inference arising from our robust model is the one that best reflects the behaviour of the bulk of the data, compared to the inferences arising from the non robust and partially robust models. Our robust model therefore succeeds in limiting the influence of outliers in order to obtain conclusions consistent with the majority of the observations.

 Note that an outlier with an extreme $x$ value, as $(x_{20},y_{20})=(696.4,41.1)$, can be viewed as an observation with a fixed $x$ value and an extreme $y$ value (in this case, as an observation with a fixed $x$ value of $696.4$ and an extremely low $y$ value of $41.1$, compared to the trend emerging from the bulk of the data). This explains why our robust model produces robust inference in the presence of this type of outliers.

\subsection{Simulation Study}\label{sec_comparison}

We now evaluate the accuracy of the estimates arising from our robust model in a context of finite population sampling. More precisely, the model $Y_i= \beta x_i+\epsilon_i$ with $\epsilon_i\mid\sigma \simdist 1/(\sigma x_i^{1/2})f(\epsilon_i/(\sigma x_i^{1/2}))$ and $x_i>0$, $i=1,\ldots,n$, is used to analyse the data, where $f$ is assumed to be a LPTN density in our robust model. We consider two sets of parameters for the LPTN: $(\alpha,\phi)=(1.96, 4.08)$ as in Section~\ref{sec_illu_thm}, and $(\alpha,\phi)=(1.50, 2.18)$. Our model is compared with the same linear regression model, but where $f$ is assumed to be a standard normal density in the nonrobust model, and where $f$ is assumed to be a Student density with 10 degrees of freedom and a known scale parameter of $0.88$ in the partially robust model, as in Section~\ref{sec_illu_thm}. We set $\pi(\beta,\sigma)\propto 1$ and we estimate $\beta$ and $\sigma$ using MAP estimation (which therefore corresponds to maximum likelihood estimation) for these three models. Given that the obtained estimates are the same as under the frequentist paradigm, we also compare with the \textit{M}- and \textit{S}-estimators.

We set $n=20$ and $x_1,x_2,\ldots,x_{20}=1,2,\ldots,20$. We simulate 1,000,000 data sets using values for $\beta$ and $\sigma$ arbitrarily set to 1 and 1.5, respectively, and we carry out this process for each of the three scenarios that we now describe.
In the first one, $f$ is a standard normal distribution; therefore, the probability to observe outliers is negligible. In the second scenario, $f$ is a mixture of two normals where the first component is a standard normal distribution and the second has a mean of 0 and a variance of $10^2$, with weights of $0.9$ and $0.1$, respectively.
 This last component can contaminate the data set by generating extreme values. In the third and last scenario, $f$ is also a mixture of two normals, but the contamination is due to the second component's location. More precisely, the first component is again a standard normal, but the second has a mean of 10 and a variance of 1, with weights of $0.95$ and $0.05$, respectively.

Within each simulation scenario, we evaluate the performance of each model and estimator using sample mean square errors (MSE), based on the true values $\beta=1$ and $\sigma=1.5$. The results are presented in Tables~\ref{tab_comparison_beta} and \ref{tab_comparison_sigma}.

If we first compare the models that we considered in Section~\ref{sec_illu_thm} (the models with the normal, Student and LPTN with $\alpha=1.96$ and $\phi=4.08$ assumptions), we observe that they have \textit{almost} identical performances for both the estimation of $\beta$ and $\sigma$, when there are no outliers (the 100\% $\mathcal{N}(0,1)$ scenario). This was expected given that the three related densities are very similar, especially on the interval $[-1.96,1.96]$ where they all have 95\% of their mass. They however differ in the thickness of their tails, and this feature plays a major role when the sample contains outliers, which is frequently the case for the two other scenarios. As expected, the presence of outlying observations has a major impact on the estimations when the traditional standard normal assumption is used. For the model with the Student distribution assumption, outliers influence the estimation of $\sigma$ significantly, while having a lesser effect on $\hat{\beta}$, which reflects the partial robustness of this approach. The impact on the estimation of both $\beta$ and $\sigma$ is limited for our robust alternative, as suggested by the theoretical results.

Our robust model with $\alpha=1.96$ and $\phi=4.08$ performs better than the frequentist competitors regarding the estimation of $\sigma$ in the absence of outliers. The latter however produce more accurate estimates in the probable presence of outliers. There is a trade-off between the extent to which a model (or a loss function for the frequentist competitors) matches the traditional normal one, and the level of robustness it features. We clearly observe this by decreasing the value for $\alpha$ of the LPTN to 1.5, which leads to a density that matches that of the normal on $[-1.5,1.5]$ (instead of on $[-1.96,1.96]$), but has heavier tails ($\phi=2.18$).


 \begin{table} [H]
 \centering
 \tiny
  \begin{tabular}{lrrr}
   \toprule
      \textbf{Assumptions on \textit{f}} & \multicolumn{3}{c}{\textbf{Scenarios}} \cr
      \cmidrule{2-4}
           & $100\% \mathcal{N}(0,1)$ & $90\% \mathcal{N}(0,1)+10\% \mathcal{N}(0,10^2)$ & $95\% \mathcal{N}(0,1)+5\% \mathcal{N}(10,1)$ \cr
   \midrule
     Standard normal & 0.011 & 0.117 & 0.110 \cr  
     Student (10 d.f.) & 0.011 & 0.027 & 0.033 \cr 
     \midrule
     LPTN \cr
     with $\alpha=1.96$ and $\phi=4.08$ & 0.011 &  0.020 &  0.018  \cr 
     with $\alpha=1.50$ and $\phi=2.18$ & 0.013 &  0.016 &  0.013  \cr 
     \midrule
     \textit{M}-estimator & 0.011 & 0.017 & 0.017\cr 
     \textit{S}-estimator & 0.029 & 0.027 & 0.027 \cr 
   \bottomrule
\end{tabular}
\caption{\small MSE of the estimators of $\beta$ under the three scenarios and the three assumptions of $f$} \label{tab_comparison_beta}
\end{table}

\begin{table} [H]
 \centering
 \tiny
  \begin{tabular}{lrrr}
   \toprule
        \textbf{Assumptions on \textit{f}} & \multicolumn{3}{c}{\textbf{Scenarios}} \cr
      \cmidrule{2-4}
           & $100\% \mathcal{N}(0,1)$ & $90\% \mathcal{N}(0,1)+10\% \mathcal{N}(0,10^2)$ & $95\% \mathcal{N}(0,1)+5\% \mathcal{N}(10,1)$ \cr
   \midrule
     Standard normal & 0.06 & 12.98 & 5.03\cr 
     Student (10 d.f.) & 0.06 & 4.02 & 1.99\cr 
     \midrule
     LPTN \cr
     with $\alpha=1.96$ and $\phi=4.08$ &  0.07 &  0.60 &  0.22   \cr 
     with $\alpha=1.50$ and $\phi=2.18$ &  0.09 &  0.20 &  0.11  \cr 
     \midrule
     \textit{M}-estimator & 0.14 & 0.24 & 0.21\cr 
     \textit{S}-estimator & 0.11 & 0.23 & 0.15 \cr 
   \bottomrule
\end{tabular}
\caption{\small MSE of the estimators of $\sigma$ under the three scenarios and the three assumptions of $f$} \label{tab_comparison_sigma}
\end{table}


\section{Conclusion}\label{sec_conclusion}

In this paper, we have provided a simple Bayesian approach to robustly estimate both parameters $\beta$ and $\sigma$ of a simple linear regression through the origin, in which the variance of the error term can depend on the explanatory variable. It leads to robust estimators of finite population means and ratios. The approach is to replace the traditional normal assumption on the error term by a super heavy-tailed distribution assumption. In particular, we considered log-regularly varying distributions. Whole robustness is attained provided that both the negative and positive slope outliers are fewer than the nonoutlying observations, i.e.\ $m<k$ and $p<k$, as stated in Theorem~\ref{thm-main}.

The theoretical results have been illustrated in Section~\ref{sec_sim_study} through typical real-life situations in which ratio estimation is used, and a simulation study. All the analyses leading to robust inference have been done using the log-Pareto-tailed standard normal (LPTN) density given in~(\ref{eqn_log_pareto}). Our model has been compared with the nonrobust (with the normal assumption) and partially robust (with the Student distribution assumption) models. The conclusion is: our model performs as well as the nonrobust and the partially robust models in absence of outliers, in addition to being completely robust. Therefore, our recommendation is to assume that the error has the density given in (\ref{eqn_log_pareto}) and obtain adequate results, regardless of whether there are outliers, by computing estimates as usual from the posterior distribution.

\section{Acknowledgements}

The authors acknowledge support from the NSERC (Natural Sciences and Engineering  Research  Council  of  Canada),  the  FRQNT  (Le Fonds de recher\-che du Qu\'{e}bec - Nature et technologies) and the SOA (Society of Actuaries). They also would like to thank the anonymous referees for their very helpful comments.

   \bibliographystyle{imsart-nameyear}
\bibliography{reference}

\begin{thebibliography}{16}

\bibitem[\protect\citeauthoryear{Andrade and
  O'Hagan}{2011}]{andrade2011bayesian}
\begin{barticle}[author]
\bauthor{\bsnm{Andrade},~\bfnm{Jose Ailton~Alencar}\binits{J.~A.~A.}} \AND
  \bauthor{\bsnm{O'Hagan},~\bfnm{Anthony}\binits{A.}}
(\byear{2011}).
\btitle{Bayesian Robustness Modelling of Location and Scale Parameters}.
\bjournal{Scand. J. Stat.}
\bvolume{38}
\bpages{691--711}.
\end{barticle}
\endbibitem

\bibitem[\protect\citeauthoryear{Box and Tiao}{1968}]{1968tiao119}
\begin{barticle}[author]
\bauthor{\bsnm{Box},~\bfnm{G.~E.~P.}\binits{G.~E.~P.}} \AND
  \bauthor{\bsnm{Tiao},~\bfnm{G.~C.}\binits{G.~C.}}
(\byear{1968}).
\btitle{A {B}ayesian Approach to Some Outlier Problems}.
\bjournal{Biometrika}
\bvolume{55}
\bpages{119-129}.
\end{barticle}
\endbibitem

\bibitem[\protect\citeauthoryear{Chambers}{1986}]{chambers1986outlier}
\begin{barticle}[author]
\bauthor{\bsnm{Chambers},~\bfnm{Raymond~L}\binits{R.~L.}}
(\byear{1986}).
\btitle{Outlier Robust Finite Population Estimation}.
\bjournal{J. Amer. Statist. Assoc.}
\bvolume{81}
\bpages{1063--1069}.
\end{barticle}
\endbibitem

\bibitem[\protect\citeauthoryear{Desgagn{\'e}}{2013}]{desgagne2013full}
\begin{barticle}[author]
\bauthor{\bsnm{Desgagn{\'e}},~\bfnm{Alain}\binits{A.}}
(\byear{2013}).
\btitle{Full Robustness in {B}ayesian Modelling of a Scale Parameter}.
\bjournal{Bayesian Anal.}
\bvolume{8}
\bpages{187--220}.
\end{barticle}
\endbibitem

\bibitem[\protect\citeauthoryear{Desgagn{\'e}}{2015}]{desgagne2015robustness}
\begin{barticle}[author]
\bauthor{\bsnm{Desgagn{\'e}},~\bfnm{Alain}\binits{A.}}
(\byear{2015}).
\btitle{Robustness to Outliers in Location--Scale Parameter Model using
  Log-Regularly Varying Distributions}.
\bjournal{Ann. Statist.}
\bvolume{43}
\bpages{1568--1595}.
\end{barticle}
\endbibitem

\bibitem[\protect\citeauthoryear{Gwet and Rivest}{1992}]{gwet1992outlier}
\begin{barticle}[author]
\bauthor{\bsnm{Gwet},~\bfnm{Jean-Philippe}\binits{J.-P.}} \AND
  \bauthor{\bsnm{Rivest},~\bfnm{Louis-Paul}\binits{L.-P.}}
(\byear{1992}).
\btitle{Outlier Resistant Alternatives to the Ratio Estimator}.
\bjournal{J. Amer. Statist. Assoc.}
\bvolume{87}
\bpages{1174--1182}.
\end{barticle}
\endbibitem

\bibitem[\protect\citeauthoryear{Huber}{1973}]{huber1973robust}
\begin{barticle}[author]
\bauthor{\bsnm{Huber},~\bfnm{Peter~J}\binits{P.~J.}}
(\byear{1973}).
\btitle{Robust Regression: Asymptotics, Conjectures and Monte Carlo}.
\bjournal{Ann. Statist.}
\bpages{799--821}.
\end{barticle}
\endbibitem

\bibitem[\protect\citeauthoryear{Maechler et~al.}{2016}]{robustbase}
\begin{bmanual}[author]
\bauthor{\bsnm{Maechler},~\bfnm{Martin}\binits{M.}},
  \bauthor{\bsnm{Rousseeuw},~\bfnm{Peter}\binits{P.}},
  \bauthor{\bsnm{Croux},~\bfnm{Christophe}\binits{C.}},
  \bauthor{\bsnm{Todorov},~\bfnm{Valentin}\binits{V.}},
  \bauthor{\bsnm{Ruckstuhl},~\bfnm{Andreas}\binits{A.}},
  \bauthor{\bsnm{Salibian-Barrera},~\bfnm{Matias}\binits{M.}},
  \bauthor{\bsnm{Verbeke},~\bfnm{Tobias}\binits{T.}},
  \bauthor{\bsnm{Koller},~\bfnm{Manuel}\binits{M.}},
  \bauthor{\bsnm{Conceicao},~\bfnm{Eduardo L.~T.}\binits{E.~L.~T.}} \AND
  \bauthor{\bsnm{{Anna di Palma}},~\bfnm{Maria}\binits{M.}}
(\byear{2016}).
\btitle{robustbase: Basic Robust Statistics}
\bnote{R package version 0.92-7}.
\end{bmanual}
\endbibitem

\bibitem[\protect\citeauthoryear{Mallows}{1975}]{mallows1975some}
\begin{barticle}[author]
\bauthor{\bsnm{Mallows},~\bfnm{Colin~L}\binits{C.~L.}}
(\byear{1975}).
\btitle{On some topics in robustness}.
\bjournal{Unpublished memorandum, Bell Telephone Laboratories, Murray Hill,
  NJ}.
\end{barticle}
\endbibitem

\bibitem[\protect\citeauthoryear{Pe{\~n}a, Zamar and Yan}{2009}]{2009Pena2196}
\begin{barticle}[author]
\bauthor{\bsnm{Pe{\~n}a},~\bfnm{Daniel}\binits{D.}},
  \bauthor{\bsnm{Zamar},~\bfnm{Ruben}\binits{R.}} \AND
  \bauthor{\bsnm{Yan},~\bfnm{Guohua}\binits{G.}}
(\byear{2009}).
\btitle{Bayesian likelihood robustness in linear models}.
\bjournal{J. Statist. Plann. Inference}
\bvolume{139}
\bpages{2196-2207}.
\end{barticle}
\endbibitem

\bibitem[\protect\citeauthoryear{O'Hagan and Pericchi}{2012}]{o2012bayesian}
\begin{barticle}[author]
\bauthor{\bsnm{O'Hagan},~\bfnm{Anthony}\binits{A.}} \AND
  \bauthor{\bsnm{Pericchi},~\bfnm{Luis}\binits{L.}}
(\byear{2012}).
\btitle{Bayesian heavy-tailed models and conflict resolution: A review}.
\bjournal{Braz. J. Probab. Stat.}
\bvolume{26}
\bpages{372--401}.
\end{barticle}
\endbibitem

\bibitem[\protect\citeauthoryear{Rousseeuw and
  Yohai}{1984}]{rousseeuw1984robust}
\begin{bincollection}[author]
\bauthor{\bsnm{Rousseeuw},~\bfnm{Peter~J.}\binits{P.~J.}} \AND
  \bauthor{\bsnm{Yohai},~\bfnm{Victor~J.}\binits{V.~J.}}
(\byear{1984}).
\btitle{Robust regression by means of S-estimators}.
In \bbooktitle{Robust and Nonlinear Time Series Analysis.}
\bpages{256--272}.
\bpublisher{Springer}.
\end{bincollection}
\endbibitem

\bibitem[\protect\citeauthoryear{Salibian-Barrera and
  Yohai}{2006}]{salibian2006fast}
\begin{barticle}[author]
\bauthor{\bsnm{Salibian-Barrera},~\bfnm{Mat{\'\i}as}\binits{M.}} \AND
  \bauthor{\bsnm{Yohai},~\bfnm{V{\'\i}ctor~J}\binits{V.~J.}}
(\byear{2006}).
\btitle{A fast algorithm for {S}-regression estimates}.
\bjournal{J. Comput. Graph. Statist.}
\bvolume{15}
\bpages{414--427}.
\end{barticle}
\endbibitem

\bibitem[\protect\citeauthoryear{Scheff{\'e}}{1947}]{scheffe1947useful}
\begin{barticle}[author]
\bauthor{\bsnm{Scheff{\'e}},~\bfnm{Henry}\binits{H.}}
(\byear{1947}).
\btitle{A Useful Convergence Theorem for Probability Distributions}.
\bjournal{Ann. Math. Statist.}
\bpages{434--438}.
\end{barticle}
\endbibitem

\bibitem[\protect\citeauthoryear{Venables and Ripley}{2002}]{masspackage}
\begin{bbook}[author]
\bauthor{\bsnm{Venables},~\bfnm{W.~N.}\binits{W.~N.}} \AND
  \bauthor{\bsnm{Ripley},~\bfnm{B.~D.}\binits{B.~D.}}
(\byear{2002}).
\btitle{Modern Applied Statistics with S},
\bedition{Fourth} ed.
\bpublisher{Springer}, \baddress{New York}.
\bnote{ISBN 0-387-95457-0}.
\end{bbook}
\endbibitem

\bibitem[\protect\citeauthoryear{West}{1984}]{1984west431}
\begin{barticle}[author]
\bauthor{\bsnm{West},~\bfnm{Mike}\binits{M.}}
(\byear{1984}).
\btitle{Outlier Models and Prior Distributions in {B}ayesian Linear
  Regression}.
\bjournal{J. R. Stat. Soc. Ser. B. Stat. Methodol.}
\bvolume{46}
\bpages{431-439}.
\end{barticle}
\endbibitem

\end{thebibliography}

\section{Supplementary Material}\label{sec-supp}

Proposition~\ref{proposition-proper} and Theorem~\ref{thm-main} are proved in Sections~\ref{proof-proposition-proper} and~\ref{sec-proof-a}, respectively. The R functions that were used for the computations are provided in Section~\ref{r_functions}.

\subsection{Proofs}\label{sec-proof}

The assumptions on $f$ imply that $f(z)$ and  $|z| f(z)$ are bounded on the real line, with a limit of 0 in their tails as $|z|\rightarrow \infty$. As a result, we can define the constant $B>0$ as follows:
\begin{equation*}
B:=\max\left\{\sup_{z\in\re}f(z),\sup_{z\in\re}|z| f(z),\sup_{\beta\in\re,\sigma>0}\min(\sigma,1)\pi(\beta,\sigma)\right\}.
\end{equation*}
We also define the constant $\zeta\geq 1$ as follows:
\begin{equation*}
 \zeta:= \max_{i}\left\{\max\left\{\abs{x_i},\abs{x_i}^{-1}\right\}\right\}.
\end{equation*}
It results that for all $i\in\{1,\ldots,n\}$ and for any $\varepsilon\in\re$, we have
\begin{equation*}
 \zeta^{-|\varepsilon|}\leq \abs{x_i}^{\varepsilon}\leq \zeta^{|\varepsilon|}.
\end{equation*}
 The monotonicity of the tails of $f(z)$ and $|z| f(z)$ implies that there exists
a constant $M> 0$ such that
\begin{equation}\label{eqn-monotonic}
|y|\ge |z|\ge M\Rightarrow f(y)\le f(z) \,\text{ and }\, |y| f(y)\le |z| f(z).
\end{equation}

\subsubsection{Proof of Proposition~1}\label{proof-proposition-proper}

To prove that  $\pi(\beta,\sigma\mid\mathbf{y_n})$ is proper (the proof for $\pi(\beta,\sigma\mid\mathbf{y_k})$ is omitted because it is similar), it suffices to show that
the marginal $m(\mathbf{y_n})$ is finite. Without loss of generality, we assume for convenience that $y_1/x_1<\ldots<y_{n}/x_{n}$. Note that we have strict inequalities
 because $Y_1,\ldots,Y_n$ are continuous random variables. Let the constant $\delta>0$ be defined as
 \begin{equation*}
\delta=\zeta^{-1}\times\min_{i\in\{1,\ldots,n-1\}}\left\{\left(y_{i+1}/x_{i+1}-y_i/x_i\right)/2\right\}.
\end{equation*}
 We first show that the function is integrable on the area where the ratio $1/\sigma$ is bounded above. More precisely, we consider $\beta\in\re$ and $\delta(M\zeta^{|\theta|})^{-1}\le \sigma<\infty$. We next show that the function is integrable on the area where the ratio $1/\sigma$ approaches infinity. We have
\begin{align*}
&\int_{\delta(M\zeta^{|\theta|})^{-1}}^{\infty}\int_{-\infty}^{\infty}\pi(\beta,\sigma)
  \prod_{i=1}^{n}\sigma^{-1} |x_i|^{-\theta}f\left(\sigma^{-1} |x_i|^{-\theta}(y_i-\beta x_i)\right)\,d\beta\,d\sigma\\
&\za{\le} \max\left(\frac{1}{\sigma},1\right) B^n \zeta^{|\theta|(n-1)}\int_{\delta/(M\zeta^{|\theta|})}^{\infty}\frac{1}{\sigma^{n-1}}\int_{-\infty}^{\infty}\frac{1}{\sigma|x_1|^{\theta}}f\left(\frac{y_1-\beta x_1}{\sigma |x_1|^{\theta}}\right)\,d\beta\,d\sigma\\
&\zb{\le}\max\left(\delta^{-1}M\zeta^{|\theta|},1\right) B^n \zeta^{|\theta|(n-1)}\abs{x_1}^{-1}\int_{\delta(M\zeta^{|\theta|})^{-1}}^{\infty}\frac{1}{\sigma^{n-1}}\,d\sigma\int_{-\infty}^{\infty}f(\beta')\,d\beta'\\
&\propto \int_{\delta(M\zeta^{|\theta|})^{-1}}^{\infty}\sigma^{-(n-1)}\,d\sigma\int_{-\infty}^{\infty}f(\beta')\,d\beta'
\zc{=} (\delta^{-1}M\zeta^{|\theta|})^{n-2}(n-2)^{-1}<\infty.
\end{align*}
In step $a$, we use $\abs{x_i}^{-\theta}\leq \zeta^{|\theta|}$ for $i=2,\ldots,n$, and we bound $\min(\sigma,1)\pi(\beta,\sigma)$ and each of $n-1$ densities $f$ by $B$. In step $b$, we use the change of variable $\beta'=\sigma^{-1} |x_1|^{-\theta}(y_1-\beta x_1)$. In step $c$, we use $n>2$. Note that if instead, in step $a$, we bound $\sigma\pi(\beta,\sigma)$ by $B$, one can verify that the condition $n\ge 2$ is sufficient to bound above the integral.

 We now show that the integral is finite on $\beta\in\re$ and $0<\sigma<\delta(M\zeta^{|\theta|})^{-1}$. We have to carefully analyse the subareas where $y_i-\beta x_i$ is close to 0 in order to deal with the $0/0$ form of the ratios $(y_i-\beta x_i)/(\sigma |x_i|^{\theta})$. In order to achieve this, we split the domain of $\beta$ into $n$ mutually exclusive areas as follows: $\re=\cup_{j=1}^n \{\beta:(y_{j-1}/x_{j-1}+y_{j}/x_{j})/2\le \beta \le(y_{j}/x_{j}+y_{j+1}/x_{j+1})/2\}$, where $y_0/x_0:=-\infty$ and $y_{n+1}/x_{n+1}:=\infty$. We now consider $0<\sigma<\delta(M\zeta^{|\theta|})^{-1}$ and $(y_{j-1}/x_{j-1}+y_{j}/x_{j})/2\le \beta \le$ \, $(y_{j}/x_{j}+y_{j+1}/x_{j+1})/2$, $j\in\{1,\ldots,n\}$.
\begin{align*}
&\pi(\beta,\sigma)\prod_{i=1}^{n}\sigma^{-1} |x_i|^{-\theta}f\left(\sigma^{-1} |x_i|^{-\theta}(y_i-\beta x_i)\right)\\
&\za{\le}\sigma^{-1}B\max(1,\delta(M\zeta^{|\theta|})^{-1})\prod_{i=1}^{n}\sigma^{-1} |x_i|^{-\theta}f\left(\sigma^{-1} |x_i|^{-\theta}(y_i-\beta x_i)\right)\\
&\propto\sigma^{-1}B \sigma^{-1} |x_j|^{-\theta}f\left(\frac{y_j-\beta x_j}{\sigma|x_j|^{\theta}}\right)\prod_{i=1 (i\ne j)}^{n}\sigma^{-1} |x_i|^{-\theta}f\left(\sigma^{-1} |x_i|^{-\theta}(y_i-\beta x_i)\right)\\
&\zb{\le}\sigma^{-1}B \sigma^{-1} |x_j|^{-\theta}f\left(\sigma^{-1} |x_j|^{-\theta}(y_j-\beta x_j)\right) \left[\sigma^{-1}\zeta^{|\theta|} f\left(\sigma^{-1}\zeta^{-|\theta|}\delta\right)\right]^{n-1}\\
&\zc{\le}B^{n-1}\zeta^{|\theta|(2n-3)}\delta^{-(n-2)}\sigma^{-1} |x_j|^{-\theta}f\left((y_j-\beta x_j)/(\sigma |x_j|^{\theta})\right)\sigma^{-2} f\left(\sigma^{-1} \zeta^{-|\theta|}\delta\right)\\
&\propto\sigma^{-1} |x_j|^{-\theta}f\left(\sigma^{-1} |x_j|^{-\theta}(y_j-\beta x_j)\right)\sigma^{-2} \zeta^{-|\theta|}\delta f\left(\sigma^{-1} \zeta^{-|\theta|}\delta\right).
\end{align*}
In step $a$, we use $\pi(\beta,\sigma)\le \max(\sigma^{-1},1)B= \sigma^{-1}B \max(1,\sigma)\le \sigma^{-1}B \max(1,$ $\delta(M\zeta^{|\theta|})^{-1})$.
In step $b$, for $i\ne j$, we first note that
 \begin{align*}
   &|y_i-\beta x_i|=|x_i||y_i/x_i-\beta|\ge \zeta^{-1}|y_i/x_i-\beta|\\
   &\qquad\ge \zeta^{-1}\times\min\left\{(y_{j}/x_{j}-y_{j-1}/x_{j-1})/2,(y_{j+1}/x_{j+1}-y_{j}/x_{j})/2\right\}\ge \delta,
\end{align*}
and then we use $f(\sigma^{-1} |x_i|^{-\theta}(y_i-\beta x_i))\le f(\sigma^{-1} \zeta^{-|\theta|}\delta)$ by the monotonicity of the tails of $f$ since
 $\sigma^{-1} |x_i|^{-\theta}|y_i-\beta x_i|\ge \sigma^{-1} |x_i|^{-\theta}\delta\ge  \sigma^{-1}\zeta^{-|\theta|}\delta \ge
  \delta^{-1} M\zeta^{|\theta|} \zeta^{-|\theta|}\delta=M$. Again for $i\neq j$, we use $\abs{x_i}^{-\theta}\leq \zeta^{|\theta|}$.
   In step $c$, we bound $n-2$ terms $\sigma^{-1}f(\sigma^{-1} \zeta^{-|\theta|}\delta)$  by $\zeta^{|\theta|}\delta^{-1}B$.

Finally, we have
\begin{align*}
\int_{0}^{\delta(M\zeta^{|\theta|})^{-1}}\frac{\delta}{\sigma^{2}\zeta^{|\theta|}}&  f\left(\frac{\delta}{\sigma \zeta^{|\theta|}}\right)
\int_{(y_{j-1}/x_{j-1}+y_{j}/x_{j})/2}^{(y_{j}/x_{j}+y_{j+1}/x_{j+1})/2}\frac{1}{\sigma |x_j|^{\theta}}f\left(\frac{y_j-\beta x_j}{\sigma |x_j|^{\theta}}\right)\,d\beta\,d\sigma\\
&\le \abs{x_j}^{-1}\int_{0}^{\infty}f(\sigma')\,d\sigma'\int_{-\infty}^{\infty}f(\beta')\,d\beta'\le\abs{x_j}^{-1}\le \zeta<\infty,
\end{align*}
where we use the change of variables  $\sigma'=\sigma^{-1} \zeta^{-|\theta|}\delta$ and $\beta'=\sigma^{-1} |x_j|^{-\theta}(y_j-\beta x_j)$. Note that we do not need to assume that $f$ is a log-regularly varying distribution to obtain the result.

\subsubsection{Proof of Theorem~1}\label{sec-proof-a}

Consider the model and the context described in Section~\ref{sec-model}. We assume that
$z f(z)\in L_{\rho}(\infty)$ and $k>\max(m,p)$. In addition, we assume that $m+p\geq 1$, i.e.\ that there is at least one outlier, otherwise the proof would be trivial. Two lemmas are first given and the proofs of results~(a) to (e) follows. The proofs of these two lemmas can be found in \cite{desgagne2015robustness}.

\begin{Lemma}\label{cor-location-scale-transformation}
$\forall\lambda\ge 0$, $\forall\tau\ge 1$, there exists a constant $D(\lambda,\tau)\ge 1$  such that $z\in\re$  and
$(\mu,\sigma)\in [-\lambda,\lambda]\times[1/\tau, \tau]\Rightarrow$
\begin{equation*}
1/D(\lambda,\tau)\le (1/\sigma) f((z-\mu)/\sigma)/f(z)\le D(\lambda,\tau).
\end{equation*}
\end{Lemma}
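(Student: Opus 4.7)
}
The plan is to obtain a two-sided bound on $(1/\sigma) f((z-\mu)/\sigma)/f(z)$ by splitting the analysis into a bounded-$|z|$ regime, where continuity and strict positivity of $f$ do all the work, and a tail regime, where the log-regularly varying hypothesis on $|z| f(z)$ provides the control. The first move is to rewrite the ratio in a form that isolates the function to which the hypothesis applies. Using the symmetry of $f$, I would write
$$
(1/\sigma)\,\frac{f((z-\mu)/\sigma)}{f(z)}
\;=\;\frac{|z|}{|z-\mu|}\cdot\frac{|(z-\mu)/\sigma|\,f((z-\mu)/\sigma)}{|z|\,f(z)}
\;=\;\frac{|z|}{|z-\mu|}\cdot\frac{g((z-\mu)/\sigma)}{g(z)},
$$
where $g(u):=|u|f(u)$. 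This decouples the problem into an elementary factor $|z|/|z-\mu|$ and a ratio of $g$-values, which is the object the hypothesis $g\in L_\rho(\infty)$ talks about.

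Next I would dispose of the compact regime. Fix some threshold $A>0$ (to be chosen large enough for the tail argument below). For $|z|\le A$, the transformed argument $(z-\mu)/\sigma$ ranges over a compact subset of $\re$ depending only on $A$, $\lambda$ and $\tau$. Since $f$ is continuous and strictly positive, both $f$ and $1/f$ are bounded on that compact set and on $[-A,A]$, and $1/\sigma\in[1/\tau,\tau]$. This yields a uniform two-sided bound on the ratio for $|z|\le A$.

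For the tail regime $|z|>A$, choose $A$ so that (i) $|z-\mu|\ge |z|/2$ uniformly in $\mu\in[-\lambda,\lambda]$, giving $1/2\le|z|/|z-\mu|\le 2$ (after possibly enlarging $A$ one obtains tighter bounds, but this suffices), and (ii) $|(z-\mu)/\sigma|\ge M$, so that the monotonicity regime of~(\ref{eqn-monotonic}) is available. I would then write $|(z-\mu)/\sigma|=|z|^{\nu}$ with
$$
\nu(z,\mu,\sigma)=1+\frac{\log|1-\mu/z|-\log\sigma}{\log|z|}.
$$
As $|z|\to\infty$, $\nu\to 1$ uniformly in $(\mu,\sigma)\in[-\lambda,\lambda]\times[1/\tau,\tau]$, so by enlarging $A$ once more one can guarantee $\nu\in[1/\tau',\tau']$ for a fixed $\tau'>1$. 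The definition of $g\in L_\rho(\infty)$ then yields $g(|z|^\nu)/g(|z|)\to \nu^{-\rho}$ uniformly over $\nu\in[1/\tau',\tau']$, and since $\nu^{-\rho}$ is itself bounded above and away from $0$ on $[1/\tau',\tau']$, the $g$-ratio is pinched between two positive constants for $|z|>A$. Together with the elementary bound on $|z|/|z-\mu|$, this completes the tail regime.

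Patching the two regimes yields a single constant $D(\lambda,\tau)\ge 1$ of the desired form, and the bound is automatic after taking the maximum of the reciprocals as well so that $D(\lambda,\tau)\ge 1$. The main obstacle is the tail regime: one must confirm that the convergence $\nu\to 1$ is genuinely uniform in $(\mu,\sigma)$ on the prescribed rectangle, and that the uniform convergence in the definition of $L_\rho(\infty)$ transfers into a uniform bound in $|z|$ as well, so that the two uniformities can be combined. The compact regime, by contrast, is essentially a continuity-plus-compactness argument.
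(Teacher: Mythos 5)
Your argument is correct, but it takes a different route from the paper, which gives no self-contained proof at all: it simply invokes Proposition~4 of Desgagn\'e (2015), of which the lemma is stated to be a corollary (that proposition asserts the stronger fact that $(1/\sigma)f((z-\mu)/\sigma)/f(z)\rightarrow 1$ as $|z|\rightarrow\infty$, uniformly on $[-\lambda,\lambda]\times[1/\tau,\tau]$; combined with continuity and strict positivity of $f$ on the remaining compact set, the two-sided bound follows). What you have done is essentially reconstruct the content of that external proposition from the definition of $L_{\rho}(\infty)$: the factorisation $(1/\sigma)f((z-\mu)/\sigma)/f(z)=\frac{|z|}{|z-\mu|}\cdot\frac{g((z-\mu)/\sigma)}{g(z)}$ with $g(u)=|u|f(u)$ is algebraically exact, the compact regime is a standard continuity-plus-positivity argument, and in the tail the substitution $|(z-\mu)/\sigma|=|z|^{\nu}$ with $\nu=1+(\log|1-\mu/z|-\log\sigma)/\log|z|$ does converge to $1$ uniformly on the rectangle (the numerator is bounded by $\log 2+\log\tau$ once $|z|\ge 2\lambda$, while $\log|z|\rightarrow\infty$), so after enlarging $A$ you may fix $\tau'$ with $\nu\in[1/\tau',\tau']$ and invoke the uniform convergence $g(|z|^{\nu})/g(|z|)\rightarrow\nu^{-\rho}$, whose limit is bounded above and away from zero on $[1/\tau',\tau']$; symmetry of $f$ lets you pass from $(z-\mu)/\sigma$ to its absolute value. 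The two uniformities combine without circularity, and since $\nu\rightarrow 1$ and $|z|/|z-\mu|\rightarrow 1$ your computation in fact yields the full limit statement of the cited proposition, not merely the bound. The trade-off is clear: the paper's route is shorter but outsources the analytic content to a reference, whereas yours is self-contained at the cost of redoing that reference's work. One cosmetic point: the monotonicity condition~(\ref{eqn-monotonic}) that you invoke in the tail regime is never actually used in your argument and can be dropped.
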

  Note that Lemma~\ref{cor-location-scale-transformation} is a corollary of Proposition~4 of \cite{desgagne2015robustness}.

\begin{Lemma}\label{lem-convolution1}
There exists a constant $C>0$ such that
\begin{equation*}
|z|\ge 2 M\Rightarrow\sup_{\mu\in\re}\frac{f(\mu)f(z-\mu)}{f(z)}\le C,
\end{equation*}
where $M$ is given in equation~(\ref{eqn-monotonic}).
\end{Lemma}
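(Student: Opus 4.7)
The plan is to reduce the supremum to a simple quantitative ingredient—the boundedness of $f(z/2)/f(z)$ for $z\ge 2M$—via a four-region case split in $\mu$, handled using only the tail monotonicity (\ref{eqn-monotonic}) and the bound $f\le B$. By the symmetry of $f$, the change of variables $(\mu,z)\mapsto(-\mu,-z)$ leaves the ratio $f(\mu)f(z-\mu)/f(z)$ invariant, so it suffices to treat $z\ge 2M>0$. I then partition the real line into the regions (I.a) $|\mu|\le z/2$, (I.b) $\mu<-z/2$, (II.a) $z/2<\mu\le 3z/2$, and (II.b) $\mu>3z/2$.

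In region I.a one has $|z-\mu|\ge z/2\ge M$, so (\ref{eqn-monotonic}) gives $f(z-\mu)\le f(z/2)$ while $f(\mu)\le B$, yielding the bound $B\,f(z/2)/f(z)$. Region II.a is symmetric: now $\mu\ge z/2\ge M$ gives $f(\mu)\le f(z/2)$ and $f(z-\mu)\le B$, with the same resulting bound. Region I.b is the easy tail case: $|\mu|>z/2$ and $|z-\mu|=z-\mu>3z/2>z$, so $f(\mu)\le f(z/2)\le B$ and $f(z-\mu)\le f(z)$, so $f(\mu)f(z-\mu)/f(z)\le B$. Finally in region II.b one has $\mu>3z/2>z\ge M$ so $f(\mu)\le f(z)$ by monotonicity, and $|z-\mu|=\mu-z>z/2\ge M$ so $f(z-\mu)\le f(z/2)\le B$; again the ratio is $\le B$. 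Thus all four cases reduce to a bound of the form $\max(B,\ B\cdot f(z/2)/f(z))$.

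It remains to show that $f(z/2)/f(z)$ is bounded uniformly in $z\in[2M,\infty)$, and here the log-regular-variation hypothesis $|z|f(z)\in L_\rho(\infty)$ enters. Using the representation $|z|f(z)=(\log z)^{-\rho}s(z)$ with $s\in L_0(\infty)$ (recalled in Sect.~\ref{sec-log-regularly}), we get $f(z/2)/f(z)=2\,[(z/2)f(z/2)]/[zf(z)]$. Since $z/2=z^{\nu_z}$ with $\nu_z=1-\log 2/\log z\to 1$, and since log-regular variation is uniform on any compact $\nu$-interval $[1/\tau,\tau]$ containing $1$, this ratio converges to $\nu_z^{-\rho}\to 1$ as $z\to\infty$. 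Hence $f(z/2)/f(z)$ is bounded for $z$ larger than some threshold $Z_0$; on the compact interval $[2M,Z_0]$ the ratio is continuous and strictly positive, hence also bounded. Taking $C$ to be $B$ times the supremum of $f(z/2)/f(z)$ on $[2M,\infty)$ gives the desired constant.

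The main obstacle is the case split: one has to recognise that the only delicate regions are I.a and II.a, where neither $f(\mu)$ nor $f(z-\mu)$ can be pushed below $f(z)$ by monotonicity alone; this is precisely where one needs a genuine tail-thickness statement. Log-regular variation is tailor-made for this: it ensures that $f$ cannot drop too fast between $z/2$ and $z$, which is exactly what breaks down for, say, Gaussian or Student tails beyond index $\rho$. The rest of the argument is bookkeeping driven by the tail monotonicity already assumed on $f$ and $|z|f(z)$.
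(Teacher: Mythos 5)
Your proof is correct: the four-region split in $\mu$ together with the tail monotonicity in (\ref{eqn-monotonic}) reduces the supremum to the boundedness of $f(z/2)/f(z)$ on $[2M,\infty)$, which you correctly obtain from the uniform-in-$\nu$ convergence in the definition of log-regular variation (writing $z/2=z^{\nu_z}$ with $\nu_z\to 1$) plus continuity and strict positivity of $f$ on the remaining compact interval. The paper itself does not prove this lemma but defers to \cite{desgagne2015robustness}, and your argument is essentially the standard one used there; the only slip is the closing aside that the boundedness of $f(z/2)/f(z)$ ``breaks down for Student tails'' --- polynomially decaying tails in fact give a bounded (indeed convergent) ratio, and it is only genuinely light tails such as the Gaussian that fail --- but this remark does not affect the validity of the proof.
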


\begin{proof}[Proof of Result~(a)]
We first observe that
\begin{align*}
   &\frac{m(\mathbf{y_n})}{m(\mathbf{y_k})\prod_{i=1}^{n}[f(y_i)]^{m_i+p_i}}\\
   &= \frac{m(\mathbf{y_n})}{m(\mathbf{y_k})\prod_{i=1}^{n}[f(y_i)]^{m_i+p_i}}
     \int_{-\infty}^{\infty}\int_{0}^{\infty}\pi(\beta,\sigma\mid \mathbf{y_n})\,d\sigma\,d\beta\\
   &=   \int_{-\infty}^{\infty}\int_{0}^{\infty}\frac{\pi(\beta,\sigma)\prod_{i=1}^{n}
        \left[\sigma^{-1}|x_i|^{-\theta}f\left(\sigma^{-1}|x_i|^{-\theta}(y_i-\beta x_i)\right)\right]^{k_i+m_i+p_i}}{m(\mathbf{y_k})\prod_{i=1}^{n}[f(y_i)]^{m_i+p_i}}\,d\sigma\,d\beta\\
   &=   \int_{-\infty}^{\infty}\int_{0}^{\infty}
        \pi(\beta,\sigma\mid \mathbf{y_k})  \prod_{i=1}^{n}\left[\frac{\sigma^{-1}|x_i|^{-\theta}f\left(\sigma^{-1}|x_i|^{-\theta}(y_i-\beta x_i)\right)}{f(y_i)}\right]^{m_i+p_i}\,d\sigma\,d\beta.
\end{align*}
We show that the last integral converges to 1 as $\omega\rightarrow\infty$ to prove result~(a).
If we use Lebesgue's dominated convergence theorem to interchange the limit $\omega\rightarrow\infty$ and the integral, we have
\begin{align*}
\lim_{\omega\rightarrow\infty}&\int_{-\infty}^{\infty}\int_{0}^{\infty}\pi(\beta,\sigma\mid \mathbf{y_k})
 \prod_{i=1}^{n}\left[\frac{\frac{1}{\sigma|x_i|^{\theta}}f\left(\frac{y_i-\beta x_i}{\sigma|x_i|^{\theta}}\right)}{f(y_i)}\right]^{m_i+p_i}\,d\sigma\,d\beta\\
 & = \int_{-\infty}^{\infty}\int_{0}^{\infty}\lim_{\omega\rightarrow\infty}
  \pi(\beta,\sigma\mid \mathbf{y_k})\prod_{i=1}^{n}\left[\frac{\frac{1}{\sigma|x_i|^{\theta}}f\left(\frac{y_i-\beta x_i}{\sigma|x_i|^{\theta}}\right)}{f(y_i)}\right]^{m_i+p_i}
  \,d\sigma\,d\beta \\
 & = \int_{-\infty}^{\infty}\int_{0}^{\infty}
  \pi(\beta,\sigma\mid \mathbf{y_k})
  \,d\sigma\,d\beta = 1,
\end{align*}
using Proposition~4 of \cite{desgagne2015robustness} in the second equality, since $x_1,\ldots,$ $x_n$ and $\theta$ are fixed, and then Proposition~\ref{proposition-proper}. Note that pointwise convergence is sufficient, for any value of $\beta\in\re$ and $\sigma>0$, once the limit is inside the integral. However, in order to use Lebesgue's dominated convergence theorem, we need to  show that the integrand is bounded, for any value of $\omega\ge \yo$, by an integrable function of $\beta$ and $\sigma$ that does not depend on $\omega$. The constant $\yo$ can be chosen as large as we want, and minimum values for $\yo$ will be given throughout the proof. In order to bound the integrand, we divide the domain of integration into four quadrants delineated by the axes $\beta=0$ and $\sigma=1$. The proofs are given only for the two quadrants where $\beta\ge 0$ because the proofs for $\beta<0$ are similar. The strategy is again to separately analyse the area where the ratio $1/\sigma$ approaches infinity.

We assumed that $y_i$ can be written as $y_i=a_i+b_i \omega$, where $\omega\rightarrow\infty$, $a_i$ and $b_i$ are constants such that $a_i\in\re$ and $b_i\ne 0$ if $y_i$ is an outlier. Therefore, the ranking of the elements in the set $\{|y_i| : m_i+p_i=1\}$ is primarily determined by the values $|b_1|,\ldots,|b_n|$ and we can choose the constant $\yo$ larger than a certain threshold such that this ranking remains unchanged for all $\omega\ge \yo$. Without loss of generality, we assume for convenience that
\begin{equation*}
\omega=\min_{\{i\,:\,m_i+p_i=1\}}|y_i|\hspace{5mm}\text{ and consequently}\hspace{5mm} \min_{\{i\,:\,m_i+p_i=1\}} |b_i|=1,
\end{equation*}
and we also assume that $y_1$ is a nonoutlier (therefore $k_1=1$). We now bound above the integrand on the first quadrant.

\textbf{Quadrant~1:} Consider $0\le\beta<\infty$ and $1\le\sigma<\infty$. We have
\begin{align*}
&\pi(\beta,\sigma\mid \mathbf{y_k})  \prod_{i=1}^{n}\left[\frac{\sigma^{-1}|x_i|^{-\theta}f\left(\sigma^{-1}|x_i|^{-\theta}(y_i-\beta x_i)\right)}{f(y_i)}\right]^{m_i+p_i}\\
&\propto\frac{\pi(\beta,\sigma)}{\sigma^{n}}\prod_{i=1}^{n}\frac{|x_i|^{-\theta}f\left(\sigma^{-1}|x_i|^{-\theta}(y_i-\beta x_i)\right)}{\left[f(y_i)\right]^{m_i+p_i}}\\
&\za{\le}\frac{B}{\sigma^{n}}\prod_{i=1}^{n} \frac{D(|a_i|,\zeta^{|\theta|})f((b_i \omega-\beta x_i)/\sigma)}{\left[f(y_i)\right]^{m_i+p_i}}\\
&\zb{\le}\frac{1}{[f(\omega)]^{m+p}}\frac{B}{\sigma^{n}}\prod_{i=1}^{n} D(|a_i|,\zeta^{|\theta|})f((b_i \omega-\beta x_i)/\sigma)\left[|b_i|D(|a_i|,|b_i|)\right]^{m_i+p_i}\\
&\propto\frac{1}{[f(\omega)]^{m+p}}\frac{1}{\sigma^{n}}\prod_{i=1}^{n} f((b_i \omega-\beta x_i)/\sigma)\\
&\zc{=}\frac{1}{[f(\omega)]^{m+p}}\frac{1}{\sigma^{n}}
  \prod_{i=1}^{n}[f(\beta x_i/\sigma)]^{k_i}\left[f((b_i \omega-\beta x_i)/\sigma)\right]^{m_i+p_i}\\
  &\zd{=} \frac{\frac{1}{\sigma}f\left(\frac{\beta x_1}{\sigma}\right)}{\sigma^{k-3/2}}\left[\frac{\omega/\sigma}{\omega f(\omega)}\right]^{m+p}\frac{1}{\sigma^{1/2}}
  \prod_{i=2}^{n}\left[f\left(\frac{\beta x_i}{\sigma}\right)\right]^{k_i}\left[f\left(\frac{b_i \omega-\beta x_i}{\sigma}\right)\right]^{m_i+p_i}.
\end{align*}
In step $a$, we use $y_i=a_i+b_i \omega$ and Lemma~\ref{cor-location-scale-transformation} to obtain
\begin{align*}
\frac{1}{|x_i|^{\theta}}f\left(\frac{y_i-\beta x_i}{\sigma |x_i|^{\theta}}\right)&=\frac{1}{|x_i|^{\theta}}f\left(\frac{(b_i \omega-\beta x_i)/\sigma+a_i/\sigma}{|x_i|^{\theta}}\right)\cr
&\le D(|a_i|,\zeta^{|\theta|})f\left(\frac{b_i \omega-\beta x_i}{\sigma}\right)
\end{align*}
because $|a_i/\sigma|\le |a_i|$ and $\zeta^{-|\theta|}\le |x_i|^{\theta}\leq \zeta^{|\theta|}$, for all $i$. We also use $\pi(\beta,\sigma)\le \max(\sigma^{-1},1)B= B$.
In step $b$, we again use Lemma~\ref{cor-location-scale-transformation} to obtain $f(\omega)/f(y_i)=f((y_i-a_i)/b_i)/f(y_i)\le |b_i|D(|a_i|,|b_i|)$.
In step $c$, we set $b_i=0$ if $k_i=1$ and we use the symmetry of $f$ to obtain $f(-\beta x_i/\sigma)=f(\beta x_i/\sigma)$.
In step $d$, we use the assumption $k_1=1$, which implies that $m_1=p_1=0$.

Now it suffices to demonstrate that
\begin{equation}\label{fct_quadrant1}
\left[\frac{\omega/\sigma}{\omega f(\omega)}\right]^{m+p}\frac{1}{\sigma^{1/2}}
  \prod_{i=2}^{n}[f(\beta x_i/\sigma)]^{k_i}\left[f((b_i \omega-\beta x_i)/\sigma)\right]^{m_i+p_i}
  \end{equation}
is bounded by a constant that does not depend on $\omega,\beta$ and $\sigma$ since $(1/\sigma)^{k-3/2}$ $\times(1/\sigma)f(\beta x_1/\sigma)$ is an integrable function on quadrant~1.
Indeed, since $k>2$, we have
\begin{align*}
\int_{1}^{\infty}\frac{1}{\sigma^{k-3/2}}\int_{0}^{\infty}\frac{1}{\sigma}f\left(\frac{\beta x_1}{\sigma}\right) \,d\beta\,d\sigma
\le  \frac{1}{\abs{x_1}}\int_{1}^{\infty}\frac{1}{\sigma^{k-3/2}}\,d\sigma=\frac{\abs{x_1}^{-1}}{k-5/2}\le 2\zeta.
\end{align*}
Note that if instead, in step $a$, we bound $\pi(\beta,\sigma)$ by $\sigma^{-1}B$, one can verify that the condition $k\ge 2$ is sufficient to bound above the integral.

In order to bound above the function in (\ref{fct_quadrant1}), we separately analyse the three following cases: $\omega/\sigma$ is large, $\omega/\sigma$ is either large or bounded, and $\omega/\sigma$ is bounded. More precisely, we split quadrant~1 with respect to $\sigma$ into three parts: $1\leq \sigma<\omega^{1/2}$, $\omega^{1/2}\leq \sigma< \omega/(2M)$ and $\omega/(2M)\leq \sigma<\infty$, where $M$ is defined in equation~(\ref{eqn-monotonic}). Note that this is well defined if $\yo> \max(1,(2M)^2)$ since $\omega\ge \yo$.

First, we consider $0\le\beta<\infty$ and $\omega/(2 M)\le\sigma<\infty$. We have,
\begin{align*}
&\left[\frac{\omega/\sigma}{\omega f(\omega)}\right]^{m+p}\frac{1}{\sigma^{1/2}}
  \prod_{i=2}^{n}[f(\beta x_i/\sigma)]^{k_i}\left[f((b_i \omega-\beta x_i)/\sigma)\right]^{m_i+p_i} \cr
&\za{\le} B^{n-1}\left[\frac{\omega/\sigma}{\omega f(\omega)}\right]^{m+p}\frac{1}{\sigma^{1/2}}\zb{\le}B^{n-1}(2M)^{m+p+1/2}\frac{(1/\omega)^{1/2}}{[\omega f(\omega)]^{m+p}}\cr
&\zc{\le}B^{n-1}(2M)^{m+p+1/2}\frac{(1/\omega)^{1/2}}{{(\log \omega)^{-(\rho+1)(m+p)}}}\\
&\zd{\le}B^{n-1}(2M)^{m+p+1/2}[2(\rho+1)(m+p)/\ee]^{(\rho+1)(m+p)}<\infty.
\end{align*}
In step $a$, we use $f\le B$.
In step $b$, we use $\omega/\sigma\le 2M$ and $(1/\sigma)\le (2M)/\omega$.
In step $c$, we use $\omega f(\omega)>(\log \omega)^{-\rho-1}$ if $\omega\ge \yo\ge A(1)$, where $A(1)$ comes from
    Proposition~2 of \cite{desgagne2015robustness}.
For step $d$, it is purely algebraic to show that the maximum of $(\log \omega)^{\xi}/\omega^{1/2}$ is $(2\xi/\ee)^{\xi}$
      for $\omega>1$ and $\xi>0$, where $\xi=(\rho+1)(m+p)$ in our situation.

Now, consider the two other parts combined (we will split them in the next step), that is $0\le\beta<\infty$ and $1\le\sigma\le \omega/(2M)$.
 We have,
\begin{align*}
&\left[\frac{\omega/\sigma}{\omega f(\omega)}\right]^{m+p}\frac{1}{\sigma^{1/2}}
  \prod_{i=2}^{n}[f(\beta x_i/\sigma)]^{k_i}\left[f((b_i \omega-\beta x_i)/\sigma)\right]^{m_i+p_i}\\
  &\hspace{0mm}\za{\le}\left[\frac{\omega/\sigma}{\omega f(\omega)}\right]^{m+p}\frac{1}{\sigma^{1/2}}
  \prod_{i=2}^{n}[f(\beta x_i/\sigma)]^{k_i}\left[f(b_i \omega/\sigma)\right]^{m_i}\left[f((b_i \omega-\beta x_i)/\sigma)\right]^{p_i}\\
  &\hspace{0mm}=\left[\frac{\omega/\sigma}{\omega f(\omega)}\right]^{m+p}\frac{1}{\sigma^{1/2}}
  \prod_{i=2}^{n}\left[f\left(\frac{\beta x_i}{\sigma}\right)\right]^{k_i-p_i}\left[f\left(\frac{b_i \omega}{\sigma}\right)\right]^{m_i+p_i} \cr
  &\qquad\times\left[\frac{f\left(\frac{b_i \omega}{\sigma}-\frac{\beta x_i}{\sigma}\right)f\left(\frac{\beta x_i}{\sigma}\right)}{f\left(\frac{b_i\omega}{\sigma}\right)}\right]^{p_i}\\
&\hspace{0mm}\zb{\le}C^p \frac{1}{\sigma^{1/2}}
  \prod_{i=2}^{n}[f(\beta x_i/\sigma)]^{k_i-p_i}\left[\frac{(\omega/\sigma)f(b_i \omega/\sigma)}{\omega f(\omega)}\right]^{m_i+p_i}\\
&\hspace{0mm}\zc{\le}C^p \frac{1}{\sigma^{1/2}}
  \left[\frac{(\omega/\sigma)f(\omega/\sigma)}{\omega f(\omega)}\right]^{m+p}\prod_{i=2}^{n}[f(\beta x_i/\sigma)]^{k_i-p_i}\\
&\hspace{0mm}\zd{\le}C^p \frac{1}{\sigma^{1/2}}
  \left[\frac{(\omega/\sigma)f(\omega/\sigma)}{\omega f(\omega)}\right]^{m+p}[f(\beta /\sigma)]^{k-1-p}[\zeta D(0,\zeta)]^{k-1+p}\\
   &\hspace{0mm}\ze{\le}C^p \frac{1}{\sigma^{1/2}}  \left[\frac{(\omega/\sigma)f(\omega/\sigma)}{\omega f(\omega)}\right]^{m+p}B^{k-1-p}[\zeta D(0,\zeta)]^{k-1+p} \cr
   &\propto \frac{1}{\sigma^{1/2}}\left[\frac{(\omega/\sigma)f(\omega/\sigma)}{\omega f(\omega)}\right]^{m+p}.
\end{align*}
In step $a$, we use $f((b_i \omega-\beta x_i)/\sigma)\le f(b_i \omega/\sigma)$ if $m_i=1$ (in this case $x_i>0, b_i<0$ or $x_i<0, b_i>0$) by the monotonicity of the tails of $f$ since $|b_i \omega-\beta x_i|/\sigma=(|b_i| \omega+\beta |x_i|)/\sigma\ge |b_i| \omega/\sigma$ $ \ge |b_i| (2M) \ge 2M \ge M$.
 In step $b$,
   we use Lemma~\ref{lem-convolution1} since $|b_i| \omega/\sigma \ge |b_i| (2M) \ge 2M$.
In step $c$, we use
   $f(b_i \omega/\sigma)\le f(\omega/\sigma)$ by the monotonicity of the tails of $f$  since $|b_i|\omega/\sigma\ge \omega/\sigma\ge 2M \ge M$.
In step $d$, we use Lemma~\ref{cor-location-scale-transformation} to obtain $f(\beta \abs{x_i}/\sigma)\le |x_i|^{-1}D(0,\zeta)f(\beta /\sigma)\le
\zeta D(0,\zeta)f(\beta /\sigma)$, and similarly $1/f(\beta \abs{x_i}/\sigma)\le \zeta D(0,\zeta)/f(\beta /\sigma)$.
In step $e$, we use $[f(\beta /\sigma)]^{k-1-p}\le B^{k-1-p}$ since $k-1\geq p$ (by assumption $k>\max(m,p)\Rightarrow k>p$).

Now, we consider $0\le\beta<\infty$ and $\omega^{1/2}\le\sigma\le \omega/(2M)$. We have,
\begin{align*}
\frac{1}{\sigma^{1/2}}
  \left[\frac{(\omega/\sigma)f(\omega/\sigma)}{\omega f(\omega)}\right]^{m+p}& \za{\le} B^{m+p}
 \frac{(1/\omega)^{1/4}}{[\omega f(\omega)]^{m+p}}\zb{\le} B^{m+p}
 \frac{(1/\omega)^{1/4}}{(\log \omega)^{-(\rho+1)(m+p)}}\\
&\zc{\le} B^{m+p}[4(\rho+1)(m+p)/\ee]^{(\rho+1)(m+p)}<\infty.
\end{align*}
In step $a$, we use $(\omega/\sigma)f(\omega/\sigma)\le B$ and $(1/\sigma)^{1/2}\le (1/\omega)^{1/4}$.
In step $b$, we use $\omega f(\omega)>(\log \omega)^{-\rho-1}$ if $\omega\ge \yo\ge A(1)$, where $A(1)$ comes from
    Proposition~2 of \cite{desgagne2015robustness}.
In step $c$, it is purely algebraic to show that the maximum of $(\log \omega)^{\xi}/\omega^{1/4}$ is $(4\xi/\ee)^{\xi}$
      for $\omega>1$ and $\xi>0$, where $\xi=(\rho+1)(m+p)$ in our situation.

Finally, we consider $0\le\beta<\infty$ and $1\le\sigma\le \omega^{1/2}$. We have,
\begin{equation*}
\frac{1}{\sigma^{1/2}}
  \left[\frac{(\omega/\sigma)f(\omega/\sigma)}{\omega f(\omega)}\right]^{m+p}\za{\le}
  \left[\frac{\omega^{1/2}f(\omega^{1/2})}{\omega f(\omega)}\right]^{m+p}\zb{\le} 2^{(\rho+1)(m+p)}<\infty.
\end{equation*}
In step $a$, we use $1/\sigma\le 1$ and we use $(\omega/\sigma)f(\omega/\sigma)\le \omega^{1/2} f(\omega^{1/2})$ by the monotonicity of the tails of
$|z| f(z)$ since $\omega/\sigma\ge \omega^{1/2}\ge \yo^{1/2}\ge M$ if $\yo\ge M^2$. In step $b$, we use $\omega^{1/2} f(\omega^{1/2})/(\omega
f(\omega))$ $\le 2(1/2)^{-\rho}=2^{\rho+1}$
if $\omega\ge \yo\ge A(1,2)$, where $A(1,2)$ comes from the definition of a log-regularly varying function (see Definition~1 of \cite{desgagne2015robustness}).

\textbf{Quadrant~2:} Consider $-\infty<\beta< 0$ and $1\le\sigma<\infty$.
The proof for quadrant~2 is similar to that of quadrant~1.
The condition $k> p$ is replaced by $k> m$. Note that $k>\max(m,p)$ is assumed in Theorem~1.

\textbf{Quadrant~3:} Consider $-\infty<\beta< 0$ and $0<\sigma< 1$.
The proof for quadrant~3 is similar to that of quadrant~4, given below.
The condition $k>p$ is replaced by $k>m$. Note that $k>\max(m,p)$ is assumed in Theorem~1.

\textbf{Quadrant~4:} Consider $0\le\beta<\infty$ and $0<\sigma< 1$.
We actually need to show that
\begin{align*}
  \lim_{\omega\rightarrow\infty} \int_{0}^{\infty}\int_{0}^{1}
  \pi(\beta,\sigma\mid \mathbf{y_k})&\prod_{i=1}^{n}\left[\frac{\frac{1}{\sigma|x_i|^{\theta}}f\left(\frac{y_i-\beta x_i}{\sigma|x_i|^{\theta}}\right)}{f(y_i)}\right]^{m_i+p_i}
  \,d\sigma\,d\beta \\& = \int_{0}^{\infty}\int_{0}^{1}\pi(\beta,\sigma\mid \mathbf{y_k})\,d\sigma\,d\beta.
\end{align*}
For quadrant~4, we proceed in a slightly different manner than for quadrant~1. We begin by separating the first integral into two parts as follows:
\small\begin{align*}
  &\lim_{\omega\rightarrow\infty} \int_{0}^{\infty}\int_{0}^{1}
  \pi(\beta,\sigma\mid \mathbf{y_k})\prod_{i=1}^{n}\left[\frac{\frac{1}{\sigma|x_i|^{\theta}}f\left(\frac{y_i-\beta x_i}{\sigma|x_i|^{\theta}}\right)}{f(y_i)}\right]^{m_i+p_i}
  \,d\sigma\,d\beta \\
  &=\lim_{\omega\rightarrow\infty} \int_{0}^{\infty}\int_{0}^{1}
  \pi(\beta,\sigma\mid \mathbf{y_k})\prod_{i=1}^{n}\left[\frac{\frac{1}{\sigma|x_i|^{\theta}}f\left(\frac{y_i-\beta x_i}{\sigma|x_i|^{\theta}}\right)}{f(y_i)}\right]^{m_i+p_i}
  \ind_{[0,\zeta^{-1}\omega/2]}(\beta)\,d\sigma\,d\beta \\
  &+\lim_{\omega\rightarrow\infty} \int_{\zeta^{-1}\omega/2}^{\infty}\int_{0}^{1}
  \pi(\beta,\sigma\mid \mathbf{y_k})\prod_{i=1}^{n}\left[\frac{\frac{1}{\sigma|x_i|^{\theta}}f\left(\frac{y_i-\beta x_i}{\sigma|x_i|^{\theta}}\right)}{f(y_i)}\right]^{m_i+p_i}
  \,d\sigma\,d\beta,
\end{align*}\normalsize
where the indicator function $\ind_{A}(\beta)$ is equal to 1 if $\beta\in A$, and equal to 0 otherwise. We show that
the first part  is equal to the integral $\int_{0}^{\infty}\int_{0}^{1}\pi(\beta,\sigma\mid \mathbf{y_k})\,d\sigma\,d\beta$
and the second part is equal to 0.

For the first part, we again use Lebesgue's dominated convergence theorem in order to interchange the limit $\omega\rightarrow\infty$ and the integral. We have
\small\begin{align*}
&\lim_{\omega\rightarrow\infty}\int_{0}^{\infty}\int_{0}^{1}\pi(\beta,\sigma\mid \mathbf{y_k})  \prod_{i=1}^{n}\left[\frac{\frac{1}{\sigma|x_i|^{\theta}}f\left(\frac{y_i-\beta x_i
}{\sigma|x_i|^{\theta}}\right)}{f(y_i)}\right]^{m_i+p_i}
\ind_{[0,\zeta^{-1}\omega/2]}(\beta)\,d\sigma\,d\beta\\
 & = \int_{0}^{\infty}\int_{0}^{1}
  \pi(\beta,\sigma\mid \mathbf{y_k})\lim_{\omega\rightarrow\infty}\prod_{i=1}^{n}\left[\frac{\frac{1}{\sigma|x_i|^{\theta}}f\left(\frac{y_i-\beta x_i
}{\sigma|x_i|^{\theta}}\right)}{f(y_i)}\right]^{m_i+p_i}\ind_{[0,\zeta^{-1}\omega/2]}(\beta)
  \,d\sigma\,d\beta \\
 & =\int_{0}^{\infty}\int_{0}^{1}
  \pi(\beta,\sigma\mid \mathbf{y_k})\times 1 \times \ind_{[0,\infty)}(\beta)
  \,d\sigma\,d\beta  =\int_{0}^{\infty}\int_{0}^{1}  \pi(\beta,\sigma\mid \mathbf{y_k})\,d\sigma\,d\beta,
\end{align*}\normalsize
using Proposition~4 of \cite{desgagne2015robustness} in the second equality since $x_1,\ldots,x_n$ and $\theta$ are fixed.
Note that pointwise convergence is sufficient, for any value of $\beta\in\re$ and $\sigma>0$,
once the limit is inside the integral. We now demonstrate that the integrand is bounded, for any value of $\omega\ge \yo$, by an integrable function of $\beta$ and $\sigma$ that does not depend on $\omega$.

Consider  $0\le\beta\le \zeta^{-1}\omega/2$ (the integrand is equal to 0 if $\zeta^{-1}\omega/2< \beta<\infty$) and $0<\sigma< 1$. We have
\begin{align*}
\pi(\beta,\sigma\mid \mathbf{y_k})&
\prod_{i=1}^{n}\left[\frac{\sigma^{-1}|x_i|^{-\theta}f\left(\sigma^{-1}|x_i|^{-\theta}(y_i-\beta x_i)\right)}{f(y_i)}\right]^{m_i+p_i}\ind_{[0,\zeta^{-1}\omega/2]}(\beta)\\
&\qquad\za{\le}\pi(\beta,\sigma\mid \mathbf{y_k})
\prod_{i=1}^{n}\left[\frac{\zeta^{-|\theta|}f\left(\zeta^{-|\theta|}(y_i-\beta x_i)\right)}{f(y_i)}\right]^{m_i+p_i}\\
&\qquad\zb{\le}\pi(\beta,\sigma\mid \mathbf{y_k})
\prod_{i=1}^{n}\left[\frac{\zeta^{-|\theta|}f\left(\zeta^{-|\theta|}\omega/2\right)}{f(y_i)}\right]^{m_i+p_i}\\
&\qquad\zc{\le}\pi(\beta,\sigma\mid \mathbf{y_k})
\prod_{i=1}^{n}\left[2|b_i| D(|a_i|,2|b_i|\zeta^{|\theta|}\right]^{m_i+p_i},
\end{align*}
and $\pi(\beta,\sigma\mid \mathbf{y_k})$ is an integrable function. In step $a$, we use the equality $\ind_{[0,\zeta^{-1}\omega/2]}(\beta)= 1$. We also use
\begin{equation*}
 \frac{|y_i-\beta x_i|}{\sigma|x_i|^{\theta}}f\left(\frac{y_i-\beta x_i}{\sigma|x_i|^{\theta}}\right)\le \zeta^{-|\theta|}|y_i-\beta x_i|f\left(\zeta^{-|\theta|}(y_i-\beta x_i)\right)
\end{equation*}
by the monotonicity of the tails of $|z| f(z)$ and therefore we obtain
\begin{equation*}
 \sigma^{-1}|x_i|^{-\theta}f\left(\sigma^{-1}|x_i|^{-\theta}(y_i-\beta x_i)\right)\le \zeta^{-|\theta|}f\left(\zeta^{-|\theta|}(y_i-\beta x_i)\right),
\end{equation*}
and in step $b$, we use
\begin{equation*}
f\left(\zeta^{-\theta}(y_i-\beta x_i)\right)\le f(\zeta^{-|\theta|}\omega/2)
\end{equation*}
by the monotonicity of the tails of $f(z)$. Indeed, if $m_i=1$ (in this case $x_i>0, b_i<0$ or $x_i<0, b_i>0$), we have
 $\sigma^{-1}|x_i|^{-\theta}|y_i-\beta x_i|\ge |x_i|^{-\theta}|y_i-\beta x_i|\ge \zeta^{-|\theta|}|y_i-\beta x_i|=\zeta^{-|\theta|}(\abs{y_i}+\beta\abs{x_i})\ge \zeta^{-|\theta|}\abs{y_i}\ge  \zeta^{-|\theta|} \omega\ge  \zeta^{-|\theta|} \omega/2$ $\ge  \zeta^{-|\theta|} \yo/2\ge M$, if we choose $\yo\ge 2\zeta^{|\theta|} M$.
 And, if $p_i=1$, we have
 $\sigma^{-1}|x_i|^{-\theta}|y_i-\beta x_i|\ge \zeta^{-|\theta|}|y_i-\beta x_i|\ge \zeta^{-|\theta|}(\abs{y_i}-\beta\abs{x_i})\ge \zeta^{-|\theta|}(\omega-(\zeta^{-1}\omega/2)\zeta)=\zeta^{-|\theta|}\omega/2\ge  \zeta^{-|\theta|} \yo/2\ge M.$
Note that $0\le\beta\le \zeta^{-1}\omega/2$ is used only for the case $p_i=1$ ($\beta\ge 0$ is sufficient for the case $m_i=1$).
In step c, we use Lemma~\ref{cor-location-scale-transformation} to obtain
\begin{equation*}
\frac{f(\zeta^{-|\theta|}\omega/2)}{f(y_i)}=\frac{f((y_i-a_i)/(2b_i\zeta^{|\theta|}))}{f(y_i)}\le 2|b_i|\zeta^{|\theta|} D(|a_i|,2|b_i|\zeta^{|\theta|}).
\end{equation*}
We now prove that
\begin{equation*}
\lim_{\omega\rightarrow\infty} \int_{\zeta^{-1}\omega/2}^{\infty}\int_{0}^{1}
  \pi(\beta,\sigma\mid \mathbf{y_k})\prod_{i=1}^{n}\left[\frac{\frac{1}{\sigma|x_i|^{\theta}}f\left(\frac{y_i-\beta x_i}{\sigma|x_i|^{\theta}}\right)}{f(y_i)}\right]^{m_i+p_i}
  \,d\sigma\,d\beta=0.
  \end{equation*}
We first bound above the integrand and then we prove that the integral of the upper bound converges towards 0 as $\omega\rightarrow\infty$.

Consider  $\zeta^{-1}\omega/2< \beta <\infty$ and $0<\sigma< 1$. We have
\small\begin{align*}
&\pi(\beta,\sigma\mid \mathbf{y_k})\prod_{i=1}^{n}\left[\frac{\sigma^{-1}|x_i|^{-\theta}f\left(\sigma^{-1}|x_i|^{-\theta}(y_i-\beta x_i)\right)}{f(y_i)}\right]^{m_i+p_i}\\
&\hspace{0mm}\za{\le}\pi(\beta,\sigma\mid \mathbf{y_k})  \prod_{i=1}^{n}[2|b_i|   D(|a_i|,2|b_i|\zeta^{|\theta|})]^{m_i}
\left[\frac{|b_i|D(|a_i|,|b_i|)\frac{1}{\sigma|x_i|^{\theta}}f\left(\frac{y_i-\beta x_i}{\sigma|x_i|^{\theta}}\right)}{f(\omega)}\right]^{p_i}\\
&\hspace{0mm}\propto \pi(\beta,\sigma)\prod_{i=1}^{n}\left[\sigma^{-1} |x_i|^{-\theta}f\left(\sigma^{-1} |x_i|^{-\theta}(a_i-\beta x_i)\right)\right]^{k_i}\left[\frac{\frac{1}{\sigma|x_i|^{\theta}}f\left(\frac{y_i-\beta x_i}{\sigma|x_i|^{\theta}}\right)}{f(\omega)}\right]^{p_i}\\
&\hspace{0mm}\zb{\le}\sigma^{-1}B\left[4\zeta^{2|\theta|+2} D(0,4\zeta^{2+|\theta|}) (1/\sigma)f(\omega/\sigma)\right]^{k}\prod_{i=1}^{n} \left[\frac{\frac{1}{\sigma|x_i|^{\theta}}f\left(\frac{y_i-\beta x_i}{\sigma|x_i|^{\theta}}\right)}{f(\omega)}\right]^{p_i}\\
&\hspace{0mm}\propto \sigma^{-1}\left[\sigma^{-1}f(\sigma^{-1}\omega)\right]^{k}\prod_{i=1}^{n}\left[\frac{\sigma^{-1}|x_i|^{-\theta}f\left(\sigma^{-1}|x_i|^{-\theta}(y_i-\beta x_i)\right)}{f(\omega)}\right]^{p_i} \\
&\hspace{0mm}\zc{\le} \sigma^{-1}\left[\sigma^{-1}f(\sigma^{-1}\omega)\right]^{k-p}\prod_{i=1}^{n}\left[\sigma^{-1}|x_i|^{-\theta}f\left(\sigma^{-1}|x_i|^{-\theta}(y_i-\beta x_i)\right)\right]^{p_i} \\
&\hspace{0mm}\zd{=} \sigma^{-1}\left[\sigma^{-1}f(\sigma^{-1}\omega)\right]^{k-p}\prod_{i=1}^{p}\sigma^{-1}|x_i|^{-\theta}f\left(\sigma^{-1}|x_i|^{-\theta}(y_i-\beta x_i)\right).
\end{align*}\normalsize
In step $a$, for the case $m_i=1$, we use the inequality
$\sigma^{-1}|x_i|^{-\theta}f(\sigma^{-1}|x_i|^{-\theta}(y_i$ $-\beta x_i))/f(y_i) \leq  2|b_i| D(|a_i|,2|b_i|\zeta^{|\theta|})$
 by the same arguments used for the first part (steps $a$ to $c$). Note that we still have $\beta\ge 0$ ($0\le\beta\le \zeta^{-1}\omega/2$ was used only for the case $p_i=1$).
 For the case $p_i=1$, we  use Lemma~\ref{cor-location-scale-transformation} to obtain $f(\omega)/f(y_i)=f((y_i-a_i)/b_i)/f(y_i) \le |b_i|D(|a_i|,|b_i|)$.
  In step $b$, we use $\pi(\beta,\sigma)\le \max(\sigma^{-1},1)B= \sigma^{-1} \max(1,\sigma)B=\sigma^{-1}B$.
 For the case $k_i=1$, we use the monotonicity of the tails of $f$ to obtain
\begin{align*}
|x_i|^{-\theta}f\left(\sigma^{-1}|x_i|^{-\theta}(a_i-\beta x_i)\right)&\le \zeta^{|\theta|} f(\sigma^{-1}\zeta^{-(2+|\theta|)}\omega/4) \cr
&\le 4\zeta^{2|\theta|+2} D(0,4\zeta^{2+|\theta|}) f(\omega/\sigma)
\end{align*}
 because, if we define the constant $a_{(k)}:=\max_{\{i\,:\,k_i=1\}}|a_i|$ with $\omega\ge \yo\ge 4\zeta^2 a_{(k)}$, we have
$\sigma^{-1}|x_i|^{-\theta}|a_i-\beta x_i|\geq \sigma^{-1}|x_i|^{-\theta}(\beta\abs{x_i}-\abs{a_i})\ge \sigma^{-1}\zeta^{-|\theta|}$ $\times((\zeta^{-1}\omega/2) \zeta^{-1}-a_{(k)}) \ge \sigma^{-1}\zeta^{-|\theta|}(\zeta^{-2}\omega/2-\zeta^{-2}\omega/4)=\sigma^{-1}\zeta^{-(2+|\theta|)}$ $\times\omega/4
 \ge\zeta^{-(2+|\theta|)}\omega/4 \ge \zeta^{-(2+|\theta|)}\yo/4\ge M$ if we choose $\yo\ge 4\zeta^{2+|\theta|} M$. We use Lemma~\ref{cor-location-scale-transformation} in the second inequality.
 In step $c$, we use the monotonicity of the tails of $|z|f(z)$ to obtain $\sigma^{-1}\omega f(\sigma^{-1}\omega)\le \omega f(\omega)$ because $\sigma^{-1}\omega\ge \omega \ge \yo\ge M$ if we choose $\yo\ge M$.
In step $d$, we assume for convenience and without loss of generality that $\{i:p_i=1\}=\{1,\ldots,p\}$, and we
 consider this assumption for the rest of the proof.

As in the proof of Proposition~1,
 we now split the real line (which includes $\zeta^{-1}\omega/2\le \beta <\infty$) into $p$ mutually disjoint intervals given by
 $(y_{j-1}/x_{j-1}+y_{j}/x_{j})/2\le \beta \le (y_{j}/x_{j}+y_{j+1}/x_{j+1})/2,$ for $j=1,\ldots,p$,
 where we define $y_0/x_0:=-\infty$ and $y_{p+1}/x_{p+1}:=\infty$.
  We also define the constant $\delta>0$ as follows:
  \begin{equation*}
  \delta=\zeta^{-1}\times\min_{i\in\{1,\ldots,p-1\}}\left\{(y_{i+1}/x_{i+1}-y_i/x_i)/2\right\}.
  \end{equation*}

Consider   $(y_{j-1}/x_{j-1}+y_{j}/x_{j})/2\le \beta \le (y_{j}/x_{j}+y_{j+1}/x_{j+1})/2$, for $j\in\{1,\ldots,p\}$, and $0<\sigma< 1$. Thus,
 \begin{align*}
 &\sigma^{-1}\left[\sigma^{-1}f(\sigma^{-1}\omega)\right]^{k-p}\prod_{i=1}^{p}\sigma^{-1}|x_i|^{-\theta}f\left(\sigma^{-1}|x_i|^{-\theta}(y_i-\beta x_i)\right)\\
&\za{\le}(\delta^{-1} B)^{p-1} \sigma^{-1}\left[\sigma^{-1}f(\sigma^{-1}\omega)\right]^{k-p}\sigma^{-1}|x_j|^{-\theta}f\left(\sigma^{-1}|x_j|^{-\theta}(y_j-\beta x_j)\right)\\
&\zb{\le} (\delta^{-1} B)^{p-1}B^{k-p-1}\omega^{-(k-p)}\sigma^{-2}\omega f(\sigma^{-1}\omega)\times\frac{1}{\sigma|x_j|^{\theta}}f\left(\frac{y_j-\beta x_j}{\sigma|x_j|^{\theta}}\right).
\end{align*}
In step $a$, we use, for $i\ne j$, $\sigma^{-1}|x_i|^{-\theta}f\left(\sigma^{-1}|x_i|^{-\theta}(y_i-\beta x_i)\right)\le |y_i-\beta x_i|^{-1}B\le \delta^{-1} B$, where we bound $|z|f(z)$ by $B$ and we
 use $|y_i-\beta x_i|\ge\delta$ since
   \begin{align*}
   &|y_i-\beta x_i|=|x_i||y_i/x_i-\beta|\ge \zeta^{-1}|y_i/x_i-\beta|\\
   &\hspace{1.5cm}\ge \zeta^{-1}\times\min\left\{(y_{j}/x_{j}-y_{j-1}/x_{j-1})/2,(y_{j+1}/x_{j+1}-y_{j}/x_{j})/2\right\}\ge \delta.
\end{align*}
In step $b$, we use $\sigma^{-1}\omega f(\sigma^{-1}\omega)\le B$ for $k-p-1$ terms (by assumption $k>\max(m,p)\Rightarrow k>p$).

 Finally, we have
\begin{align*}
 &\omega^{-(k-p)}\int_{0}^{1} \sigma^{-2}\omega f(\sigma^{-1}\omega) \int_{(y_{j-1}/x_{j-1}+y_{j}/x_{j})/2}^{(y_{j}/x_{j}+y_{j+1}/x_{j+1})/2}\frac{1}{\sigma|x_j|^{\theta}}f\left(\frac{y_j-\beta x_j}{\sigma|x_j|^{\theta}}\right)\,d\beta\,d\sigma\\
 &\le \omega^{-(k-p)}\int_{0}^{\infty}\sigma^{-2}\omega f(\sigma^{-1}\omega)  \int_{-\infty}^{\infty}\sigma^{-1}|x_j|^{-\theta}f\left(\sigma^{-1}|x_j|^{-\theta}(y_j-\beta x_j)\right)\,d\beta\,d\sigma\\
& \za{=} |x_j|^{-1}\omega^{-(k-p)}\int_{0}^{\infty}f(\sigma')\,d\sigma'\int_{-\infty}^{\infty}f(\beta')\,d\beta'
\le \zeta \,\omega^{-(k-p)}\zb{\rightarrow} 0 \text{ as }\omega\rightarrow\infty.
\end{align*}
In step $a$, we use the change of variables  $\sigma'=\sigma^{-1}\omega$ and $\beta'=
\sigma^{-1}|x_j|^{-\theta}(y_j-\beta x_j)$. In step $b$, we use $k>p$.
\end{proof}

\begin{proof}[Proof of Result~(b)]
Consider $(\beta,\sigma)$ such that $\pi(\beta,\sigma)>0$ (the proof for the case $(\beta,\sigma)$ such that $\pi(\beta,\sigma)=0$ is trivial).
We have, as $\omega\rightarrow \infty$,
  \begin{align*}
\frac{\pi(\beta,\sigma\mid \mathbf{y_n})}{\pi(\beta,\sigma\mid \mathbf{y_k})}
&= \frac{m(\mathbf{y_k})}{m(\mathbf{y_n})}\times
\frac{\pi(\beta,\sigma)\prod_{i=1}^{n}(\sigma|x_i|^{\theta})^{-1}f\left((\sigma |x_i|^{\theta})^{-1}(y_i-\beta x_i)\right)}
{\pi(\beta,\sigma)\prod_{i=1}^{n}\left[(\sigma |x_i|^{\theta})^{-1}f\left((\sigma |x_i|^{\theta})^{-1}(y_i-\beta x_i)\right)\right]^{k_i}}\\
&= \frac{m(\mathbf{y_k})}{m(\mathbf{y_n})}\prod_{i=1}^{n}\left[(\sigma |x_i|^{\theta})^{-1}f\left((\sigma |x_i|^{\theta})^{-1}(y_i-\beta x_i)\right)\right]^{m_i+p_i}\\
&\hspace{-0mm}= \frac{m(\mathbf{y_k})\prod_{i=1}^{n}[f(y_i)]^{m_i+p_i}}{m(\mathbf{y_n})}\prod_{i=1}^{n}\left[\frac{\frac{1}{\sigma |x_i|^{\theta}}f\left((\frac{y_i-\beta x_i}{\sigma |x_i|^{\theta}}\right)}
{f(y_i)}\right]^{m_i+p_i}\rightarrow 1.
   \end{align*}

The first ratio in the last equality does not depend on $\beta$ and $\sigma$, and converges towards 1 as $\omega\rightarrow\infty$ using result~(a). The second part also converges to 1 uniformly
in any set $(\beta,\sigma)\in [-\lambda,\lambda]\times[1/\tau, \tau]$ using Proposition~4 of \cite{desgagne2015robustness} since $x_1,\ldots, x_n$ and $\theta$ are fixed.
Furthermore, since $f$ and $\sigma\pi(\beta,\sigma)$ are bounded, and $x_i\ne 0$ for all $i$, $\pi(\beta,\sigma\mid \mathbf{y_k})$ is also bounded on any set $(\beta,\sigma)\in [-\lambda,\lambda]\times[1/\tau, \tau]$.
 Then, we have
   \begin{equation*}
   \big|\pi(\beta,\sigma\mid \mathbf{y_n})-\pi(\beta,\sigma\mid \mathbf{y_k})\big|=\pi(\beta,\sigma\mid \mathbf{y_k})\abs{\frac{\pi(\beta,\sigma\mid \mathbf{y_n})}{\pi(\beta,\sigma\mid \mathbf{y_k})}-1}\rightarrow 0\text{ as }\omega\rightarrow\infty.
   \end{equation*}
\end{proof}

\begin{proof}[Proof of Results (c) and (d)]

Using Proposition~\ref{proposition-proper}, we know that $\pi(\beta,\sigma\mid\mathbf{y_k})$ and $\pi(\beta,\sigma\mid\mathbf{y_n})$ are proper. Moreover, using result~(b), we have the pointwise convergence $\pi(\beta,\sigma\mid \mathbf{y_n})\rightarrow\pi(\beta,\sigma\mid \mathbf{y_k})$   as $\omega\rightarrow\infty$ for any $\beta\in\re$ and $\sigma>0$, as a result of the uniform convergence. Then, the conditions of Scheff\'{e}'s theorem (see \cite{scheffe1947useful}) are satisfied and we obtain the convergence in $L_1$ given by result~(c) as well as the following result:
 \begin{equation*}
 \lim_{\omega\rightarrow\infty}\int_{E}\pi(\beta,\sigma\mid\mathbf{y_n})\,d\beta\,d\sigma=
  \int_{E}\pi(\beta,\sigma\mid\mathbf{y_k})\,d\beta\,d\sigma,
  \end{equation*}
uniformly for all rectangles $E$ in $\re\times \re^{+}$. Result~(d) follows directly.
%
%
\end{proof}
\begin{proof}[Proof of Result~(e)]
  Using equation~(\ref{likeli1}), result~(e) follows directly from result~(b). 
%
%
\end{proof}

\subsection{R Functions}\label{r_functions}

In this section, we provide the R functions that were used for the computations. They follow the same order as the numerical results presented in the paper. We start with the computer code needed to produce Figure~\ref{fig_examples}, and it is followed by that used for the numerical results contained in Sections~\ref{sec_illu_thm} and \ref{sec_comparison} of the paper.

%

For the \textit{M}-estimator in Section~\ref{sec_comparison}, we use the \textit{rlm} function from the \textit{MASS} package (\cite{masspackage}). The \textit{lmrob.S} function from the \textit{robustbase} package (\cite{robustbase}) is used for the \textit{S}-estimator. This last function was built according to the fast algorithm of \cite{salibian2006fast}.

\begin{verbatim}
###################### General Functions #####################

# Note that these R functions can also be used for a
# location-scale model (beta and sigma) if we replace the
# x observations by x <- rep(1, length(y))

normal_model <- function(x, y, theta){
  w <- abs(x) ^ (2 * (1 - theta))
  w <- w / sum(w)
  beta_estimate <- sum(w * y / x)
  sigma_estimate <- sqrt(mean((y - beta_estimate * x ) ^ 2 /
                    (abs(x)) ^ (2 * theta)))
  return(c(beta_estimate, sigma_estimate))
}

student_model <- function(x, y, theta, df, beta0, sigma0){
  # negative of the log-likelihood (up to a constant)
  neg_log_likely_student <- function(param, x, y, theta, df){
    # known_scale added to match 2.5 & 97.5th perc. of N(0,1)
    known_scale <- round(qnorm(0.025) / qt(.025, df), 2)
    z <- abs((y - param[1] * x) / (param[2] * abs(x) ^ theta))
    logL <- sum(dt(z / known_scale, df = df, log = TRUE) -
            log(param[2]))
    return( - logL)
  }
  estimates_student <- optim(c(beta0, sigma0),
    neg_log_likely_student, gr = NULL, x = x, y = y,
    theta = theta, df = df, method = "L-BFGS-B",
    lower = c(-Inf, 1e-08), upper = c(Inf, Inf),
    control = list(factr = 5))$par
  return(estimates_student)
}

LPTN_model <- function(x, y, theta, alpha, beta0, sigma0){
  # negative of the log-likelihood (up to a constant)
  neg_log_likely_LPTN <- function(param, x, y, theta, alpha){
    q <- 2 * pnorm(alpha) - 1
    phi <- 1 + 2 * dnorm(alpha) * alpha * log(alpha) / (1 - q)
    if (param[2] <= 0){return(Inf)} else{
    z <- abs((y - param[1] * x) / (param[2] * abs(x) ^ theta))
    tails <- as.numeric(z <= alpha)
    # to avoid undefined value of log(log(z)) if tails = 1
    z_floor <- apply(cbind(z), 1, max, 1.001)
    logL <- sum(tails * dnorm(z, log = TRUE) + (1 - tails) *
      (dnorm(alpha, log = TRUE) + log(alpha) - log(z_floor) +
       phi * log(log(alpha)) - phi * log(log(z_floor))) -
       log(param[2]))
    return( - logL)
  }}
  estimates_LPTN <- optim(c(beta0, sigma0),
      neg_log_likely_LPTN, gr = NULL, x = x, y = y,
      theta = theta, alpha = alpha, method = "Nelder-Mead",
      control = list(maxit = 40000, reltol=10^(-12) ))$par
  return(estimates_LPTN)
}

dLPTN <- function(y, alpha, mu = 0, sigma = 1, log.d = FALSE){
  # alpha must be larger than 1
  # the density is a N(mu, sigma^2) between
  # mu - alpha * sigma and mu + alpha * sigma, with a
  # mass of q, the density has log-Pareto tails propto
  # (1 / |z|) * (log|z|) ^ ( - phi)
  q <- 2 * pnorm(alpha) - 1
  phi <- 1 + 2 * dnorm(alpha) * alpha * log(alpha) / (1 - q)
  z <- abs((y - mu) / sigma)
  tails <- as.numeric(z <= alpha)
  # to avoid undefined value of log(log(z)) if tails = 1 :
  z_floor <- apply(cbind(z), 1, max, 1.001)
  # or equivalently z_floor <- z + 2 * tails
  logf <- tails * dnorm(z, log = TRUE) + (1 - tails) *
          (dnorm(alpha, log = TRUE) + log(alpha) -
           log(z_floor) + phi * log(log(alpha)) -
           phi * log(log(z_floor))) - log(sigma)
   if (log.d == TRUE) {res <- logf} else {res <- exp(logf)}
   return(res)
}

######################## Figure 1 #########################

# Total cultivated area in 1931 (acres)
x <- c(401, 634, 1194, 1770, 1060, 827, 1737, 1060, 360,
       946, 470, 1625, 827, 96, 1304, 377, 259, 186, 1767,
       604, 701, 524, 571, 962, 407, 715, 845, 1016, 184,
       282, 194, 439, 854, 824)

# Area under wheat in 1936 (acres)
y <- c(75, 163, 326, 442, 254, 125, 559, 254, 101, 359, 109,
       481, 125, 5, 427, 78, 78, 45, 564, 238, 92, 247, 134,
       131, 129, 192, 663, 236, 73, 62, 71, 137, 196, 255)

par(mar = c(4.5, 5, 1, 3))
plot(x, y, xlab = "Total cultivated area in 1931 (acres)",
     ylab = "Area under wheat 1936 (acres)", pch = 19,
     cex.lab = 1.5, cex.axis = 1.5, cex = 1)

beta_normal <- normal_model(x, y, theta = 0.5)[1]
beta_LPTN <- LPTN_model(x, y, theta = 0.5, alpha = 1.96,
                        beta0 = 0.27, sigma0 = 2)[1]

abline(a = 0, b = beta_normal, col = "darkorange", lwd = 3,
       lty = 5)
abline(a = 0, b = beta_LPTN, col = "darkblue", lwd = 3)

####################### Section 3.1 #########################

# Number of occupied dwellings in 1960
x1 <- c(82, 61, 42, 51, 58, 50, 60, 50, 54, 50, 51, 54, 27,
        25, 48, 50, 38, 43, 48, 50, 48, 70, 13, 56)
x2 <- c(50, 11, 31, 29, 45, 40, 43, 5, 40, 37, 48, 46, 55,
        45, 43, 51, 48, 49, 51, 41, 45, 42, 51, 48, 42, 58,
        63, 51)
x3 <- c(48, 53, 48, 31, 46, 43, 51, 42, 52, 57, 49, 50, 51,
        64, 76, 71, 44, 41, 39, 44, 43, 47, 49, 10, 10, 36,
        31, 41)
x4 <- c(36, 47, 27, 17, 21, 9, 12, 33, 21, 22, 30, 35, 46, 17,
        18, 18, 19, 50, 36, 60, 56, 46, 42, 36, 47, 34, 21)
x5 <- c(26, 12, 41, 20, 57, 62, 24, 20, 30, 25, 31, 10, 15,
        16, 37, 45, 63, 53, 34, 61, 52, 80, 14, 50, 25, 34,
        38, 63)
x6 <- c(73, 51, 47, 53, 68, 83, 113, 55, 64, 45, 46, 35, 26,
        45, 68, 32, 53, 58, 49, 45, 70, 64, 97, 103, 85, 60,
        57)
x7 <- c(39, 36, 41, 41, 43, 31, 77, 70, 43, 46, 53, 73, 63,
        46, 59, 54, 64, 58, 76, 32, 18, 49, 56, 41, 48, 33,
        55, 24)
x8 <- c(51, 48, 36, 54, 46, 57, 46, 37, 54, 61, 56, 56, 50, 21,
        19, 15, 8, 37, 42, 74, 81, 28, 17, 55, 58, 20, 46, 33)
x9 <- c(48, 50, 47, 85, 53, 50, 64, 52, 56, 22, 22, 17, 12, 25,
        33, 69, 34, 17, 16, 66, 38, 24, 12, 54, 48, 53, 54, 12)
x10 <- c(107, 134, 130, 72, 56, 46, 41, 18, 36, 38, 35, 23, 28,
         43, 27, 51, 16, 29, 30, 47, 18, 9, 26, 62)
x11 <- c(34, 7, 270, 169, 84, 146, 8, 6, 27, 35, 12, 22, 29,
         29, 32, 44, 59, 65, 73, 71, 74, 62, 111, 124, 28)
x12 <- c(38, 63, 15, 41, 68, 57, 74, 51, 64, 44, 39, 21, 49,
         64, 49, 84, 66, 64, 73, 54, 49, 55, 47, 44, 62, 49,
         56, 43, 53, 58)
x13 <- c(63, 50, 38, 44, 61, 66, 62, 50, 64, 57, 59, 62, 62,
         55, 60, 47, 51, 56, 66, 34, 65, 26, 56, 53, 53, 41,
         36)
x14 <- c(32, 14, 5, 12, 51, 57, 66, 65, 67, 62, 67, 68, 52, 40,
         50, 57, 51, 53, 57, 46, 51, 50, 44, 48, 52, 46, 19,
         49)
x15 <- c(42, 45, 46, 52, 54, 50, 52, 9, 51, 49, 55, 30, 53, 44,
         50, 46, 37, 43, 25, 45, 60, 52, 53, 60, 61, 64, 59,
         58)
x16 <- c(56, 60, 98, 93, 50, 52, 40, 45, 26, 16, 45, 54, 49,
         47, 49, 62, 52, 85, 24, 63, 51, 38, 51, 5, 50, 92, 56)
x17 <- c(64, 68, 28, 13, 40, 18, 44, 37, 32, 20, 32, 14, 28,
         47, 36, 44, 73, 81, 14, 50, 11, 41, 24, 23, 35, 23)
x18 <- c(9, 41, 26, 11, 12, 16, 26, 45, 59, 36, 23, 26, 30, 53,
         45, 58, 56, 59, 31, 20, 56, 58, 57, 51, 30, 25, 31,
         46, 28)
x19 <- c(16, 29, 26, 45, 20, 38, 37, 35, 23, 23, 31, 41, 39,
         30, 27, 22, 37, 43, 46, 46, 29, 31, 13, 30, 48, 35)
x20 <- c(40, 46, 27, 24, 32, 45, 47, 45, 33, 44, 32, 35, 12,
         47, 43, 49, 13, 19, 42, 42, 39, 45, 51, 50, 32, 63,
         68, 43)
x21 <- c(44, 52, 111, 67, 57, 54, 57, 134, 62, 70, 41, 37, 16)
x <- c(x1, x2, x3, x4, x5, x6, x7, x8, x9, x10, x11, x12, x13,
       x14, x15, x16, x17, x18, x19, x20, x21)

# Number of persons in 1970
y1 <- c(185, 145, 127, 136, 122, 116, 165, 134, 174, 141, 151,
        138, 90, 78, 129, 139, 72, 127, 153, 120, 132, 202,
        140, 375)
y2 <- c(136, 42, 19, 103, 151, 166, 177, 31, 156, 125, 172,
        141, 146, 49, 107, 177, 149, 134, 126, 119, 168, 95,
        96, 134, 102, 118, 147, 129)
y3 <- c(140, 130, 119, 89, 104, 103, 126, 116, 127, 151, 178,
        131, 132, 241, 185, 315, 158, 181, 92, 104, 135, 167,
        123, 42, 39, 77, 66, 88)
y4 <- c(110, 129, 117, 50, 41, 27, 14, 14, 33, 107, 76, 119,
        135, 62, 39, 39, 55, 150, 145, 214, 171, 198, 169, 128,
        149, 120, 42)
y5 <- c(117, 36, 102, 68, 186, 195, 64, 55, 77, 91, 115, 25,
        32, 39, 99, 77, 124, 165, 109, 143, 205, 157, 45, 92,
        74, 74, 66, 153)
y6 <- c(135, 59, 185, 138, 169, 211, 215, 164, 171, 164, 171,
        80, 52, 58, 118, 112, 136, 180, 156, 161, 188, 164,
        255, 253, 177, 174, 132)
y7 <- c(118, 68, 86, 152, 83, 85, 151, 212, 143, 159, 116,
        156, 125, 89, 150, 165, 303, 198, 237, 178, 60, 174,
        130, 126, 128, 100, 152, 75)
y8 <- c(108, 89, 151, 146, 0, 0, 70, 131, 89, 195, 0, 0, 79,
        30, 27, 26, 4, 82, 100, 211, 161, 91, 135, 112, 160,
        65, 187, 87)
y9 <- c(240, 144, 64, 329, 232, 32, 90, 69, 147, 0, 0, 14, 35,
        1, 18, 65, 88, 19, 15, 66, 123, 0, 0, 47, 82, 137,
        159, 698)
y10 <- c(72, 618, 256, 0, 0, 116, 32, 2, 53, 97, 11, 37, 73,
         104, 86, 136, 30, 0, 44, 73, 39, 3, 63, 0)
y11 <- c(242, 51, 781, 358, 405, 406, 53, 98, 113, 91, 27, 74,
         58, 41, 17, 72, 221, 236, 153, 227, 119, 138, 256,
         257, 37)
y12 <- c(86, 131, 16, 101, 83, 120, 116, 161, 179, 69, 130,
         112, 88, 229, 151, 247, 193, 325, 550, 162, 161, 155,
         166, 133, 175, 148, 155, 161, 142, 176)
y13 <- c(182, 182, 91, 171, 176, 176, 161, 185, 196, 188, 174,
         163, 166, 159, 168, 135, 139, 157, 133, 132, 270, 80,
         145, 158, 162, 106, 69)
y14 <- c(90, 43, 0, 57, 201, 237, 89, 274, 177, 213, 167, 116,
         107, 50, 87, 104, 156, 96, 97, 104, 150, 131, 109,
         142, 151, 103, 39, 148)
y15 <- c(95, 120, 115, 111, 132, 111, 124, 26, 96, 123, 134,
         87, 146, 105, 117, 133, 123, 155, 15, 153, 139, 229,
         139, 155, 180, 148, 168, 165)
y16 <- c(145, 345, 135, 239, 117, 193, 137, 106, 100, 37, 112,
         120, 154, 121, 146, 140, 138, 149, 138, 150, 118,
         107, 121, 2, 70, 236, 178)
y17 <- c(188, 162, 32, 94, 166, 42, 82, 81, 118, 37, 122, 133,
         92, 81, 165, 139, 124, 174, 68, 176, 27, 111, 47, 19,
         108, 21)
y18 <- c(8, 60, 12, 28, 29, 30, 79, 117, 100, 127, 95, 119, 92,
         142, 118, 156, 154, 177, 71, 72, 209, 187, 154, 147,
         52, 105, 76, 65, 82)
y19 <- c(8, 62, 29, 136, 126, 92, 112, 52, 42, 59, 58, 146, 55,
         177, 68, 74, 45, 141, 190, 96, 77, 86, 43, 56, 99,
         143)
y20 <- c(108, 151, 79, 62, 102, 117, 165, 115, 89, 112, 93, 44,
         69, 107, 100, 133, 17, 36, 82, 84, 76, 101, 112, 124,
         72, 226, 206, 63)
y21 <- c(130, 336, 445, 116, 133, 138, 168, 226, 87, 278, 109,
         121, 131)
y <- c(y1, y2, y3, y4, y5, y6, y7, y8, y9, y10, y11, y12, y13,
       y14, y15, y16, y17, y18, y19, y20, y21)

par(mar = c(4.5, 5, 1, 3))
plot(x, y, xlab = "Number of occupied dwellings in 1960",
     ylab = "Number of persons in 1970", pch = 19,
     cex.lab = 1.5, cex.axis = 1.5, cex = 1,
     xlim = c(0, 300), ylim = c(0, 850))

beta_normal <- normal_model(x, y, theta = 0.5)[1]
beta_LPTN <- LPTN_model(x, y, theta = 0.5, alpha = 1.96,
                        beta0 = 2.60, sigma0 = 6.00)[1]

abline(a = 0, b = beta_normal, col = "darkorange", lwd = 3,
       lty = 5)
abline(a = 0, b = beta_LPTN, col = "darkblue", lwd = 3)

###################### Table 1: Data set #####################

x <- c(1.0, 1.0, 2.0, 3.0, 3.0, 3.0, 3.0, 3.0, 3.0, 3.0,
       3.0, 4.0, 4.0, 4.0, 4.0, 5.0, 5.0, 5.0, 6.0, 6.0)

y <- c(20.8, 9.6, 38.6, 74.1, 108.8, 98.7, 44.8, 77.2, 93.2,
       107.2, NA, 93.6, 113.7, 123.5, 93.5, 148.1, 147.1,
       154.0, 149.5, 173.5)

# The computations for Figure 2

y11 <- seq(100, 385, 0.1)
n11 <- length(y11)
estimates_normal <- estimates_student <- estimates_LPTN <-
  matrix(ncol = 2, nrow = n11)

for (i in 1:n11) {
  y[11] <- y11[i]
  estimates_normal[i,] <- normal_model(x, y, theta = 0.5)
  estimates_student[i,] <- student_model(x, y, theta = 0.5,
                           df = 10, beta0 = 28, sigma0 = 14)
  estimates_LPTN[i,] <- LPTN_model(x, y, theta = 0.5,
                        alpha = 1.96, beta0 = 28, sigma0 = 14)
}

########################## Figure 2a #########################

par(mar = c(4.5, 6, 1, 3))
plot(y11, estimates_normal[,1], type = "l", xlim = c(100,
     385), ylim = c(27.4, 32), col = "darkorange",
     cex.lab = 1.5, cex.axis = 1.5, cex = 1.5, lwd = 3,
     lty = 5, xlab = expression(y[11]),
     ylab = expression(hat(beta)))
lines(y11, estimates_student[,1], type = "l",
     col = "darkgreen", lwd = 3, lty = 4)
lines(y11, estimates_LPTN[,1], type = "l", col = "darkblue",
     lwd = 3)

########################## Figure 2b #########################

par(mar = c(4.5, 6, 1, 3))
plot(y11, estimates_normal[,2], type = "l", xlim = c(100,
     385), ylim = c(10, 30), col = "darkorange",
     cex.lab = 1.5, cex.axis = 1.5, cex = 1.5, lwd = 3,
     lty = 5, xlab = expression(y[11]),
     ylab = expression(hat(sigma)))
lines(y11, estimates_student[,2], type = "l",
     col = "darkgreen", lwd = 3, lty = 4)
lines(y11, estimates_LPTN[,2], type = "l", col = "darkblue",
     lwd = 3)

########################## Figure 3a #########################

z <- seq(-4, 4, .01)
par(mar = c(4.5, 5, 1, 3))
plot(z, dLPTN(z, alpha = 1.96), type = "l", col = "darkblue",
     cex.lab = 1.5, cex.axis = 1.5, cex = 1.5, lwd = 3,
     xlab = expression(x), ylab = expression(f(x)))
lines(z, dnorm(z), type = "l", col = "darkorange", lwd = 3,
     lty = 5)

########################## Figure 3b #########################

z <- seq(1.96, 6, .01)
par(mar = c(4.5, 5, 1, 3))
plot(z, dLPTN(z, alpha = 1.96), type = "l", col = "darkblue",
     cex.lab = 1.5, cex.axis = 1.5, cex = 1.5, lwd = 3,
     xlab = expression(x), ylab = expression(f(x)))
lines(z, dnorm(z), type = "l", col = "darkorange", lwd = 3,
     lty = 5)
	
################ Location of the threshold ###################
	
beta_LPTN_maxinfluence <- max(estimates_LPTN[,1])
y11_LPTN_maxinfluence <- y11[which(estimates_LPTN[,1] ==
    beta_LPTN_maxinfluence)]
print(c(y11_LPTN_maxinfluence, beta_LPTN_maxinfluence ))
[1] 127.9000  28.6259

sigma_LPTN_maxinfluence <- max(estimates_LPTN[,2])
y11_LPTN_maxinfluence <- y11[which(estimates_LPTN[,2] ==
    sigma_LPTN_maxinfluence)]
print(c(y11_LPTN_maxinfluence, sigma_LPTN_maxinfluence))
[1] 127.90000  12.37836

################### y[11] goes to infinity ###################

y[11] <- 10^155
LPTN_model(x, y, 0.5, 1.96, 28, 11)
[1] 27.13129 10.77932

############### Inference without x[11], y[11] ###############

LPTN_model(x[-11], y[-11], 0.5, 1.96, 28, 11)
27.13016 10.77833

###################### Table 2: Data set #####################

x <- c(102.9, 144.9, 155.8, 176.5, 177.4, 182.2, 197.9, 199.2,
       211.3, 215.9, 216.0, 216.7, 220.3, 222.8, 229.0, 250.0,
       250.2, 275.4, 342.4, 696.4)

y <- c(31.7, 68.4, 54.4, 53.5, 78.4, 66.4, 64.1, 44.6, 99.0,
       53.3, 67.3, 68.6, 63.0, 100.6, 82.2, 113.4, 6.1, 76.6,
       92.7, 41.1)

############### First computation method: MCMC ###############
# A random walk Metropolis algorithm is used.
# The results in the paper were produced using this method.

library(PoweR)
library("coda")

# log-likelihood function under the normal (up to a constant)
log_likely_normal <- function(param, x, y, theta){
    z <- abs((y - param[1] * x) / (param[2] * abs(x) ^ theta))
    logL <- sum(dnorm(z, log = TRUE) - log(param[2]))
    return(logL)
}

# log-likelihood function under the Student (up to a constant)
log_likely_student <- function(param, x, y, theta){
    df <- 10
    # known_scale added to match 2.5 & 97.5th perc. of N(0,1)
    known_scale <- round(qnorm(0.025) / qt(.025, df), 2)
    z <- abs((y - param[1] * x) / (param[2] * abs(x) ^ theta))
    logL <- sum(dt(z / known_scale, df = df, log = TRUE) -
            log(param[2]))
    return(logL)
}

# log-likelihood function under the LPTN (up to a constant)
log_likely_LPTN <- function(param, x, y, theta){
    alpha <- 1.96
    q <- 2 * pnorm(alpha) - 1
    phi <- 1 + 2 * dnorm(alpha) * alpha * log(alpha) / (1 - q)
    if (param[2] <= 0){return(Inf)} else{
    z <- abs((y - param[1] * x) / (param[2] * abs(x) ^ theta))
    tails <- as.numeric(z <= alpha)
    # to avoid undefined value of log(log(z)) if tails = 1
    z_floor <- apply(cbind(z), 1, max, 1.001)
    logL <- sum(tails * dnorm(z, log = TRUE) + (1 - tails) *
      (dnorm(alpha, log = TRUE) + log(alpha) - log(z_floor) +
       phi * log(log(alpha)) - phi * log(log(z_floor))) -
       log(param[2]))
    return(logL)
}}

rLPTN <- function(n){
  alpha <- 1.96
  return(gensample(40, n, law.pars = c(alpha, 0.0, 1.0),
           check = FALSE)$sample)
}

rt10 <- function(n){
  df = 10
  # known_scale added to match 2.5 & 97.5th perc. of N(0,1)
  known_scale <- round(qnorm(0.025) / qt(.025, df), 2)
  return(known_scale * rt(n, df))
}

mcmc <- function(nb_iter, x, y, theta, initial_val = c(0, 1),
           scaling_prop = 0.10, law = rnorm, log_likely
           = log_likely_normal){
    # initial_val: we use the posterior medians after
    # some trial runs
    # scaling_prop = scaling for the random walk
    # it is efficient to use 0.15 with the outliers
    # and 0.10 without the outliers, for all models
    # law for the proposals = rnorm or rt10 or rLPTN
    nb_accept <- 0	
    n <- length(y)
    matrix_res <- matrix(ncol = 2, nrow = nb_iter + 1)
    matrix_res[1,] <- initial_val
    for(i in 2:(nb_iter+1)){
      # location = current state
      location <- as.matrix(matrix_res[(i-1),])
      ### generate the candidate
      w <- location + scaling_prop * law(2)
      # compute the acceptance probability	
      if(w[2] > 0){ # check that candidate for sigma > 0			
        # log of numerator, note: prior is 1 / w[2]
        log_num <- - log(w[2]) + log_likely(w, x, y, theta)			
        # log of denominator, note: prior is 1 / w[2]
        log_denom <- - log(location[2]) +
                          log_likely(location, x, y, theta)			
        if(log(runif(1)) <= log_num - log_denom){
          # accept the candidate
                 matrix_res[i,] <- w
                 nb_accept <- nb_accept + 1
        } else{
                 matrix_res[i,] <- matrix_res[(i-1),]				
        }
      } else{		
            matrix_res[i,] <- matrix_res[(i-1),]	
      }
    }
    return(list(estim = matrix_res[2:(nb_iter+1),],
           rate = nb_accept/nb_iter))
}

set.seed(1)
sim_norm <- mcmc(nb_iter = 10 ^ 7, x, y, theta = 0.5,
            initial_val = c(0.28, 2.2), scaling_prop = 0.15,
            law = rnorm, log_likely = log_likely_normal)

set.seed(1)
sim_student <- mcmc(nb_iter = 10 ^ 7, x, y, theta = 0.5,
            initial_val = c(0.3, 2), scaling_prop = 0.15,
            law = rt10, log_likely = log_likely_student)

set.seed(1)
sim_LPTN <- mcmc(nb_iter = 10 ^ 7, x, y, theta = 0.5,
            initial_val = c(0.32, 1.6), scaling_prop = 0.15,
            law = rLPTN, log_likely = log_likely_LPTN)

print(c(sim_norm$rate, sim_student$rate, sim_LPTN$rate))
[1] 0.2449020 0.2539403 0.2316780

beta_post <- cbind(sim_norm$estim[,1], sim_student$estim[,1],
             sim_LPTN$estim[,1])
beta_median <- apply(beta_post, 2, median)
print(beta_median)
[1] 0.2829841 0.3061633 0.3185546

sigma_post <- cbind(sim_norm$estim[,2], sim_student$estim[,2],
              sim_LPTN$estim[,2])
sigma_median <- apply(sigma_post, 2, median)
print(sigma_median)
[1] 2.180424 2.031161 1.634240

######################## HPD intervals #######################

HPD_beta <- apply(beta_post, 2, function(gen){HPDinterval(
            as.mcmc(gen), prob = 0.95)[1:2]})
print(HPD_beta)
[1,] 0.2168912 0.2427309 0.2399747
[2,] 0.3480296 0.3666505 0.3762819

HPD_sigma <- apply(sigma_post, 2, function(gen){HPDinterval(
            as.mcmc(gen), prob = 0.95)[1:2]})
print(HPD_sigma)
[1,] 1.565228 1.319387 0.9618679
[2,] 3.015768 2.958068 2.6742025

########################## Figure 4a #########################

par(mar = c(4.5, 5, 1, 3))
plot(x, y, type = "p", pch = 19, cex.lab = 1.5,
       cex.axis = 1.5, cex = 1, xlab = "Income",
       ylab = "Expenditure on food", xlim = c(0, 700),
       ylim = c(0, 120))
abline(a = 0, b = beta_median[1], col = "darkorange", lwd = 3,
       lty = 5)
abline(a = 0, b = beta_median[2], col = "darkgreen", lwd = 3,
       lty = 4)
abline(a = 0, b = beta_median[3], col = "darkblue", lwd = 3)

########################## Figure 4b #########################

par(mar = c(4.5, 6, 1, 3))
plot(density(beta_post[,1], adjust=2), xlim = c(0.15, 0.45),
       ylim = c(0, 14), col = "darkorange", cex.lab = 1.5,
       cex.axis = 1.5, cex = 1, lwd = 3, lty = 5,
       xlab = expression(beta), main = "",
       ylab = expression(pi(beta ~"|"~ bold(y[n]))))	
lines(c(0.2168912, 0.3480296), c(1.5, 1.5), col = "darkorange",
       lwd = 3, lty = 5)
lines(density(beta_post[,2], adjust=2), col = "darkgreen",
       lwd = 3, lty = 4)
lines(c(0.2427309, 0.3666505), c(1.7, 1.7), col = "darkgreen",
       lwd = 3, lty = 4)
lines(density(beta_post[,3], adjust=2), col = "darkblue",
       lwd = 3)
lines(c(0.2399747, 0.3762819), c(1.6, 1.6), col = "darkblue",
       lwd = 3)

########################## Figure 4c #########################

par(mar = c(4.5, 6, 1, 3))
plot(density(sigma_post[,1], adjust=2), xlim = c(0.5, 4.5),
       ylim = c(0, 1.3), col = "darkorange", cex.lab = 1.5,
       cex.axis = 1.5, cex = 1, lwd = 3, lty = 5,
       xlab = expression(sigma), main = "",
       ylab = expression(pi(sigma ~"|"~ bold(y[n]))))
lines(c(1.565228, 3.015768), c(0.15, 0.15), col="darkorange",
       lwd = 3, lty = 5)
lines(density(sigma_post[,2], adjust=2), col = "darkgreen",
       lwd = 3, lty = 4)
lines(c(1.319387, 2.958068), c(0.12, 0.12), col = "darkgreen",
       lwd = 3, lty = 4)
lines(density(sigma_post[,3], adjust=2), col = "darkblue",
       lwd = 3)
lines(c(0.9618679, 2.6742025), c(0.14, 0.14),
       col = "darkblue", lwd = 3)

############# Analysis excluding the two outliers ############
############### First computation method: MCMC ###############
# The results in the paper were produced using this method.

set.seed(1)
sim_norm <- mcmc(nb_iter = 10 ^ 7, x[-c(17,20)], y[-c(17,20)],
    theta = 0.5, initial_val = c(0.34, 1.18), scaling_prop =
    0.10, law = rnorm, log_likely = log_likely_normal)

set.seed(1)
sim_student <- mcmc(nb_iter = 10 ^ 7, x[-c(17,20)],
  y[-c(17,20)], theta = 0.5, initial_val = c(0.34, 1.27),
  scaling_prop = 0.10, law = rt10, log_likely =
  log_likely_student)

set.seed(1)
sim_LPTN <- mcmc(nb_iter = 10 ^ 7, x[-c(17,20)], y[-c(17,20)],
  theta = 0.5, initial_val = c(0.34, 1.19), scaling_prop =
  0.10, law = rLPTN, log_likely = log_likely_LPTN)

print(c(sim_norm$rate, sim_student$rate, sim_LPTN$rate))
[1] 0.2212470 0.2537879 0.2224217

beta_post <- cbind(sim_norm$estim[,1], sim_student$estim[,1],
             sim_LPTN$estim[,1])
beta_median <- apply(beta_post, 2, median)
print(beta_median)
[1] 0.3420777 0.3389598 0.3427098

sigma_post <- cbind(sim_norm$estim[,2], sim_student$estim[,2],
              sim_LPTN$estim[,2])
sigma_median <- apply(sigma_post, 2, median)
print(sigma_median)
[1] 1.176685 1.267760 1.190079

######################## HPD intervals #######################

HPD_beta <- apply(beta_post, 2, function(gen){HPDinterval(
            as.mcmc(gen), prob = 0.95)[1:2]})
print(HPD_beta)
[1,] 0.302074 0.2980923 0.3033320
[2,] 0.381890 0.3803959 0.3817377

HPD_sigma <- apply(sigma_post, 2, function(gen){HPDinterval(
             as.mcmc(gen), prob = 0.95)[1:2]})
print(HPD_sigma)
[1,] 0.8246836 0.8499801 0.8540482
[2,] 1.6556265 1.8225438 1.6606780

########################## Figure 5a #########################

par(mar = c(4.5, 5, 1, 3))
plot(x[-c(17,20)], y[-c(17,20)], type = "p", pch = 19,
       cex.lab = 1.5, cex.axis = 1.5, cex = 1,
       xlab = "Income", ylab = "Expenditure on food",
       xlim = c(0, 700), ylim = c(0, 120))
abline(a = 0, b = beta_median[1], col="darkorange", lwd = 3,
       lty = 5)
abline(a = 0, b = beta_median[2], col="darkgreen", lwd = 3,
       lty = 4)
abline(a = 0, b = beta_median[3], col="darkblue", lwd = 3)

########################## Figure 5b #########################

par(mar = c(4.5, 6, 1, 3))
plot(density(beta_post[,1], adjust=2), xlim = c(0.15, 0.45),
       ylim = c(0, 22), col = "darkorange", cex.lab = 1.5,
       cex.axis = 1.5, cex = 1, lwd = 3, lty = 5,
       xlab = expression(beta), main = "",
       ylab = expression(pi(beta ~"|"~ bold(y[k]))))
lines(c(0.302074, 0.381890), c(2.57, 2.57), col="darkorange",
       lwd = 3, lty = 5)
lines(density(beta_post[,2], adjust=2), col = "darkgreen",
       lwd = 3, lty = 4)
lines(c(0.2980923, 0.3803959), c(2.53, 2.53),
       col = "darkgreen", lwd = 3, lty = 4)
lines(density(beta_post[,3], adjust=2), col = "darkblue",
       lwd = 3)
lines(c(0.3033320, 0.3817377), c(2.54, 2.54),
       col = "darkblue", lwd = 3)

########################## Figure 5c #########################

par(mar = c(4.5, 6, 1, 3))
plot(density(sigma_post[,1], adjust=2), xlim = c(0.5, 4.5),
       ylim = c(0, 2.2), col = "darkorange", cex.lab = 1.5,
       cex.axis = 1.5, cex = 1, lwd = 3, lty = 5,
       xlab = expression(sigma), main = "",
       ylab = expression(pi(sigma ~"|"~ bold(y[k]))))
lines(c(0.8246836, 1.6556265), c(0.23, 0.23),
       col= "darkorange", lwd = 3, lty = 5)
lines(density(sigma_post[,2], adjust=2), col = "darkgreen",
       lwd = 3, lty = 4)
lines(c(0.8499801, 1.8225438), c(0.20, 0.20),
       col = "darkgreen", lwd = 3, lty = 4)
lines(density(sigma_post[,3], adjust=2), col = "darkblue",
       lwd = 3)
lines(c(0.8540482, 1.6606780), c(0.25, 0.25),
       col = "darkblue", lwd = 3)

########### Second computation method: Riemann sums ##########
# This method can be used to validate the results produced by
# the first method, or simply when it is preferred by the
# user.

law_normal <- function(z) {dnorm(z, log = TRUE)}

law_student <- function(z) {
  df <- 10
  # known_scale added to match 2.5 & 97.5th perc. of N(0,1)
  known_scale <- round(qnorm(0.025) / qt(.025, df), 2)
  dt(z / known_scale, df = df, log = TRUE)
}

law_LPTN <- function(z) {
    alpha <- 1.96
    q <- 2 * pnorm(alpha) - 1
    phi <- 1 + 2 * dnorm(alpha) * alpha * log(alpha) / (1 - q)
    z_abs <- abs(z)
    tails <- as.numeric(z_abs <= alpha)
    # to avoid undefined value of log(log(z)) if tails=1
    z_floor <- apply(cbind(z_abs), 1, max, 1.001)
    logf <- tails * dnorm(z, log = TRUE) + (1 - tails) *
      (dnorm(alpha, log = TRUE) + log(alpha) - log(z_floor) +
       phi * log(log(alpha)) - phi * log(log(z_floor)))
    return(logf)
}

posterior_density <- function(beta1, beta2, dbeta, sigma1,
                              sigma2, dsigma, x, y, law){
   theta <- 0.5
   beta <- seq(beta1, beta2, dbeta)
   sigma <- seq(sigma1, sigma2, dsigma)
   log_posterior_beta <- function(sig){
        n <- length(x)
        res <- 0
        for (i in 1:n){
            z <- (y[i] - beta * x[i]) /
                 (abs(x[i]) ^ theta * sig)
            res <- res + law(z)
        }
        # prior is 1 / sigma
        return(res - (n + 1) * log(sig))
   }
   log_posterior <- apply(cbind(sigma), 1, log_posterior_beta)
   posterior_density_propto <- exp(log_posterior)
   normalizing_cte <- sum(posterior_density_propto) *
                      dbeta * dsigma
   return(posterior_density_propto / normalizing_cte)
}

# integration bounds
beta1 <- -0.3; beta2 <- 0.9; dbeta <- 0.001
beta_range <- seq(beta1, beta2, dbeta)
sigma1 <- 0.3; sigma2 <- 13.0; dsigma <- 0.001
sigma_range <- seq(sigma1, sigma2, dsigma)

############ posterior densities of beta and sigma ###########

post_normal <- posterior_density(beta1, beta2, dbeta, sigma1,
               sigma2, dsigma, x, y, law = law_normal)

post_student <- posterior_density(beta1, beta2, dbeta, sigma1,
                sigma2, dsigma, x, y, law = law_student)

post_LPTN <- posterior_density(beta1, beta2, dbeta, sigma1,
             sigma2, dsigma, x, y, law = law_LPTN)

######## posterior densities of beta for the 3 models ########

beta_post <- cbind(apply(post_normal, 1, sum),
                   apply(post_student, 1, sum),
                   apply(post_LPTN, 1, sum)) * dsigma

# Check if bounds are suitable, i.e. if the densities are
# low
beta_post[1,]; beta_post[nrow(beta_post),]
[1] 1.766528e-12 2.376120e-13 6.711534e-13
[1] 5.863097e-13 1.219712e-13 1.897941e-13

beta_median <- beta_range[apply(apply(beta_post, 2, cumsum) *
                          dbeta < .5, 2, sum)]
print(beta_median)
[1] 0.282 0.305 0.318

######## posterior densities of sigma for the 3 models #######

sigma_post <- cbind(apply(post_normal, 2, sum),
                    apply(post_student, 2, sum),
                    apply(post_LPTN, 2, sum)) * dbeta

# Check if bounds are suitable, i.e. if the densities are
# low
sigma_post[1,]; sigma_post[nrow(sigma_post),]
[1] 3.329440e-190  4.095789e-31  6.891520e-15
[1] 2.597696e-12   5.541500e-12  9.870326e-13

sigma_median <- sigma_range[apply(apply(sigma_post, 2,
                     cumsum) * dbeta < .5, 2, sum)]
print(sigma_median)
[1] 2.181 2.032 1.633

######################## HPD intervals #######################

HPD_unimodal <- function(z, dz, dens,
                         conf_level_target = 0.95){
  conf_level <- 0
  dens_value <- max(dens)
  while(conf_level < conf_level_target){
    dens_value <- dens_value - 0.0001
    pos <- which(dens > dens_value)
    conf_level <- sum(dens[pos]) * dz
  }
  return(c(z[pos[1]], z[pos[length(pos)]]))
}

# Computation of the 95 % HPD Intervals for the 3 models
HPD_unimodal(beta_range, dbeta, beta_post[,1], 0.95)
HPD_unimodal(beta_range, dbeta, beta_post[,2], 0.95)
HPD_unimodal(beta_range, dbeta, beta_post[,3], 0.95)
HPD_unimodal(sigma_range, dsigma, sigma_post[,1], 0.95)
HPD_unimodal(sigma_range, dsigma, sigma_post[,2], 0.95)
HPD_unimodal(sigma_range, dsigma, sigma_post[,3], 0.95)
[1] 0.218 0.349
[1] 0.243 0.366
[1] 0.240 0.376
[1] 1.560 3.007
[1] 1.32 2.96
[1] 0.961 2.671

########################## Figure 4a #########################

par(mar = c(4.5, 5, 1, 3))
plot(x, y, type = "p", pch = 19, cex.lab = 1.5,
       cex.axis = 1.5, cex = 1, xlab = "Income",
       ylab = "Expenditure on food", xlim = c(0, 700),
       ylim = c(0, 120))
abline(a = 0, b = beta_median[1], col="darkorange", lwd = 3,
       lty = 5)
abline(a = 0, b = beta_median[2], col="darkgreen", lwd = 3,
       lty = 4)
abline(a = 0, b = beta_median[3], col="darkblue", lwd = 3)

########################## Figure 4b #########################

par(mar = c(4.5, 6, 1, 3))
posi1 <- round((0.15 - beta1) / dbeta + 1)
posi2 <- round((0.45 - beta1) / dbeta + 1)
plot(beta_range[posi1:posi2], beta_post[posi1:posi2,1],
       type="l", xlim = c(0.15, 0.45),
       ylim = c(0, 14), col = "darkorange", cex.lab = 1.5,
       cex.axis = 1.5, cex = 1, lwd = 3, lty = 5,
       xlab = expression(beta), main = "",
       ylab = expression(pi(beta ~"|"~ bold(y[n]))))
lines(c(0.218, 0.349), c(1.5, 1.5), col="darkorange",
       lwd = 3, lty = 5)
lines(beta_range[posi1:posi2], beta_post[posi1:posi2,2],
      col="darkgreen", lwd = 3, lty = 4)
lines(c(0.243, 0.366), c(1.7, 1.7), col = "darkgreen",
       lwd = 3, lty = 4)
lines(beta_range[posi1:posi2], beta_post[posi1:posi2,3],
      col="darkblue", lwd = 3)
lines(c(0.240, 0.376), c(1.6, 1.6), col = "darkblue",
       lwd = 3)

########################## Figure 4c #########################

par(mar = c(4.5, 6, 1, 3))
posi1 <- round((0.5 - sigma1) / dsigma + 1)
posi2 <- round((4.5 - sigma1) / dsigma + 1)
plot(sigma_range[posi1:posi2], sigma_post[posi1:posi2,1],
       type="l", ylim = c(0, 1.3),
       col = "darkorange", cex.lab = 1.5,
       cex.axis = 1.5, cex = 1, lwd = 3, lty = 5,
       xlab = expression(beta), main = "",
       ylab = expression(pi(sigma ~"|"~ bold(y[n]))))
lines(c(1.560, 3.007), c(0.15, 0.15), col="darkorange",
       lwd = 3, lty = 5)
lines(sigma_range[posi1:posi2], sigma_post[posi1:posi2,2],
     type="l", col="darkgreen", lwd = 3, lty = 4)
lines(c(1.32, 2.96), c(0.12, 0.12), col = "darkgreen",
       lwd = 3, lty = 4)
lines(sigma_range[posi1:posi2], sigma_post[posi1:posi2,3],
     type="l", col="darkblue", lwd = 3)
lines(c(0.961, 2.671), c(0.14, 0.14),
       col = "darkblue", lwd = 3)

############# Analysis excluding the two outliers ############
########### Second computation method: Riemann sums ##########

# integration bounds
beta1 <- -0.3; beta2 <- 0.9; dbeta <- 0.001
beta_range <- seq(beta1, beta2, dbeta)
sigma1 <- 0.3; sigma2 <- 13.0; dsigma <- 0.001
sigma_range <- seq(sigma1, sigma2, dsigma)

############ posterior densities of beta and sigma ###########

post_normal <- posterior_density(beta1, beta2, dbeta, sigma1,
               sigma2, dsigma, x[-c(17,20)], y[-c(17,20)],
               law = law_normal)

post_student <- posterior_density(beta1, beta2, dbeta, sigma1,
               sigma2, dsigma, x[-c(17,20)], y[-c(17,20)],
               law = law_student)

post_LPTN <- posterior_density(beta1, beta2, dbeta, sigma1,
               sigma2, dsigma, x[-c(17,20)], y[-c(17,20)],
               law = law_LPTN)

######## posterior densities of beta for the 3 models ########

beta_post <- cbind(apply(post_normal, 1, sum),
                   apply(post_student, 1, sum),
                   apply(post_LPTN, 1, sum)) * dsigma

# Check if bounds are suitable, i.e. if the densities are
# low
beta_post[1,]; beta_post[nrow(beta_post),]
[1] 5.557673e-16 3.139864e-16 5.853859e-16
[1] 6.934071e-15 4.419829e-15 7.292775e-15

beta_median <- beta_range[apply(apply(beta_post, 2, cumsum) *
                          dbeta < .5, 2, sum)]
print(beta_median)
[1] 0.341 0.338 0.342

######## posterior densities of sigma for the 3 models #######

sigma_post <- cbind(apply(post_normal, 2, sum),
                    apply(post_student, 2, sum),
                    apply(post_LPTN, 2, sum)) * dbeta

# Check if bounds are suitable, i.e. if the densities are
# low
sigma_post[1,]; sigma_post[nrow(sigma_post),]
[1] 8.204966e-41 4.550658e-19 1.859492e-12
[1] 1.066419e-15 7.892179e-15 1.123373e-15

sigma_median <- sigma_range[apply(apply(sigma_post, 2,
                cumsum) * dbeta < .5, 2, sum)]
print(sigma_median)
[1] 1.176 1.267 1.189

######################## HPD intervals #######################

# Computation of the 95 % HPD Intervals for the 3 models
HPD_unimodal(beta_range, dbeta, beta_post[,1], 0.95)
HPD_unimodal(beta_range, dbeta, beta_post[,2], 0.95)
HPD_unimodal(beta_range, dbeta, beta_post[,3], 0.95)
HPD_unimodal(sigma_range, dsigma, sigma_post[,1], 0.95)
HPD_unimodal(sigma_range, dsigma, sigma_post[,2], 0.95)
HPD_unimodal(sigma_range, dsigma, sigma_post[,3], 0.95)
[1] 0.303 0.382
[1] 0.298 0.380
[1] 0.304 0.382
[1] 0.824 1.653
[1] 0.850 1.824
[1] 0.853 1.660

########################## Figure 5a #########################

par(mar = c(4.5, 5, 1, 3))
plot(x, y, type = "p", pch = 19, cex.lab = 1.5,
       cex.axis = 1.5, cex = 1, xlab = "Income",
       ylab = "Expenditure on food", xlim = c(0, 700),
       ylim = c(0, 120))
abline(a = 0, b = beta_median[1], col="darkorange", lwd = 3,
       lty = 5)
abline(a = 0, b = beta_median[2], col="darkgreen", lwd = 3,
       lty = 4)
abline(a = 0, b = beta_median[3], col="darkblue", lwd = 3)

########################## Figure 5b #########################

par(mar = c(4.5, 6, 1, 3))
posi1 <- round((0.15 - beta1) / dbeta + 1)
posi2 <- round((0.45 - beta1) / dbeta + 1)
plot(beta_range[posi1:posi2], beta_post[posi1:posi2,1],
       type="l", xlim = c(0.15, 0.45),
       ylim = c(0, 22), col = "darkorange", cex.lab = 1.5,
       cex.axis = 1.5, cex = 1, lwd = 3, lty = 5,
       xlab = expression(beta), main = "",
       ylab = expression(pi(beta ~"|"~ bold(y[n]))))
lines(c(0.303, 0.382), c(2.57, 2.57), col="darkorange",
       lwd = 3, lty = 5)
lines(beta_range[posi1:posi2], beta_post[posi1:posi2,2],
      col="darkgreen", lwd = 3, lty = 4)
lines(c(0.298, 0.380), c(2.53, 2.53), col = "darkgreen",
       lwd = 3, lty = 4)
lines(beta_range[posi1:posi2], beta_post[posi1:posi2,3],
      col="darkblue", lwd = 3)
lines(c(0.304, 0.382), c(2.54, 2.54), col = "darkblue",
       lwd = 3)

########################## Figure 5c #########################

par(mar = c(4.5, 6, 1, 3))
posi1 <- round((0.5 - sigma1) / dsigma + 1)
posi2 <- round((4.5 - sigma1) / dsigma + 1)
plot(sigma_range[posi1:posi2], sigma_post[posi1:posi2,1],
       type="l", ylim = c(0, 2.2),
       col = "darkorange", cex.lab = 1.5,
       cex.axis = 1.5, cex = 1, lwd = 3, lty = 5,
       xlab = expression(beta), main = "",
       ylab = expression(pi(sigma ~"|"~ bold(y[n]))))
lines(c(0.824, 1.653), c(0.23, 0.23), col="darkorange",
       lwd = 3, lty = 5)
lines(sigma_range[posi1:posi2], sigma_post[posi1:posi2,2],
     type="l", col="darkgreen", lwd = 3, lty = 4)
lines(c(0.850, 1.824), c(0.20, 0.20), col = "darkgreen",
       lwd = 3, lty = 4)
lines(sigma_range[posi1:posi2], sigma_post[posi1:posi2,3],
     type="l", col="darkblue", lwd = 3)
lines(c(0.853, 1.660), c(0.25, 0.25),
       col = "darkblue", lwd = 3)

######################### Section 3.2 ########################

library("MASS")
library("robustbase")

M_model <- function(x, y, theta){
  adj <- abs(x) ^ theta
  regM <- rlm(y / adj ~ 0 + I(x / adj), method= "M",
          maxit = 150)
  return(c(coefficients(regM),summary(regM)$sigma))
}

S_model <- function(x, y, theta){
  adj <- abs(x) ^ theta
  regS <- lmrob.S(y = y / adj, x = x / adj, control =
          lmrob.control(nRes = 20, k.max = 5000,
          max.it = 5000, maxit.scale = 500))
  return(c(coefficients(regS),regS$scale))
}

rnorm_mix_sd <- function(n){
    sds <- c(1.0, 10.0) # a vector containing the std. dev.
    components <- sample(1:2, prob = c(0.9, 0.1), size = n,
                  replace = TRUE)
    return(rnorm(n, mean = 0, sd = sds[components]))
}

rnorm_mix_mu <- function(n){
    mus <- c(0.0, 10.0) # a vector containing the locations
    components <- sample(1:2, prob = c(0.95, 0.05), size = n,
                  replace = TRUE)
    return(rnorm(n, mean = mus[components], 1.0))
}

mse_normal_MS_model <- function(nb_sets, theta, n, beta,
                       sigma, law, model, seed = 1){
  # used for normal, M and S models
  set.seed(seed); x <- seq(1, n); matrix_error <- c(0, 0)
  for(i in 1:nb_sets){	
    y <- beta * x + sigma * (abs(x) ^ theta) * law(n)
    estimates <- model(x, y, theta)
    matrix_error <- matrix_error +
                    (estimates - c(beta, sigma)) ^ 2}
    return(matrix_error / nb_sets)}

mse_student_model <- function(nb_sets, theta, df, n, beta,
                     sigma, law, seed = 1){
  set.seed(seed); x <- seq(1, n); matrix_error <- c(0, 0)
  for(i in 1:nb_sets){	
    y <- beta * x + sigma * (abs(x) ^ theta) * law(n)
    estimates <- student_model(x, y, theta, df, beta, sigma)
    matrix_error <- matrix_error +
                    (estimates - c(beta, sigma)) ^ 2}
    return(matrix_error / nb_sets)}

mse_LPTN_model <- function(nb_sets, theta, alpha, n, beta,
                     sigma, law, seed = 1){
  set.seed(seed); x <- seq(1, n); matrix_error <- c(0, 0)
  for(i in 1:nb_sets){	
    y <- beta * x + sigma * (abs(x) ^ theta) * law(n)
    estimates <- LPTN_model(x, y, theta, alpha, beta, sigma)
    matrix_error <- matrix_error +
                    (estimates - c(beta, sigma)) ^ 2}
    return(matrix_error / nb_sets)}

nb_sets <- 10 ^ 6
r11 <- mse_normal_MS_model(nb_sets, 0.5, 20, 1.0, 1.5,
      rnorm, normal_model, 1)
r12 <- mse_normal_MS_model(nb_sets, 0.5, 20, 1.0, 1.5,
      rnorm_mix_sd, normal_model, 1)
r13 <- mse_normal_MS_model(nb_sets, 0.5, 20, 1.0, 1.5,
      rnorm_mix_mu, normal_model, 1)
r21 <- mse_student_model(nb_sets, 0.5, 10, 20, 1.0, 1.5,
       rnorm, 1)
r22 <- mse_student_model(nb_sets, 0.5, 10, 20, 1.0, 1.5,
       rnorm_mix_sd, 1)
r23 <- mse_student_model(nb_sets, 0.5, 10, 20, 1.0, 1.5,
       rnorm_mix_mu, 1)
r31 <- mse_LPTN_model(nb_sets, 0.5, 1.96, 20, 1.0, 1.5,
       rnorm, 1)
r32 <- mse_LPTN_model(nb_sets, 0.5, 1.96, 20, 1.0, 1.5,
       rnorm_mix_sd, 1)
r33 <- mse_LPTN_model(nb_sets, 0.5, 1.96, 20, 1.0, 1.5,
       rnorm_mix_mu, 1)
r41 <- mse_LPTN_model(nb_sets, 0.5, 1.5, 20, 1.0, 1.5,
       rnorm, 1)
r42 <- mse_LPTN_model(nb_sets, 0.5, 1.5, 20, 1.0, 1.5,
       rnorm_mix_sd, 1)
r43 <- mse_LPTN_model(nb_sets, 0.5, 1.5, 20, 1.0, 1.5,
       rnorm_mix_mu, 1)
r51 <- mse_normal_MS_model(nb_sets, 0.5, 20, 1.0, 1.5,
      rnorm, M_model, 1)
r52 <- mse_normal_MS_model(nb_sets, 0.5, 20, 1.0, 1.5,
      rnorm_mix_sd, M_model, 1)
r53 <- mse_normal_MS_model(nb_sets, 0.5, 20, 1.0, 1.5,
      rnorm_mix_mu, M_model, 1)
r61 <- mse_normal_MS_model(nb_sets, 0.5, 20, 1.0, 1.5,
      rnorm, S_model, 1)
r62 <- mse_normal_MS_model(nb_sets, 0.5, 20, 1.0, 1.5,
      rnorm_mix_sd, S_model, 1)
r63 <- mse_normal_MS_model(nb_sets, 0.5, 20, 1.0, 1.5,
      rnorm_mix_mu, S_model, 1)

j <- 1
table3 <- matrix(c(r11[j], r12[j], r13[j], r21[j], r22[j],
          r23[j], r31[j], r32[j], r33[j], r41[j], r42[j],
          r43[j], r51[j], r52[j], r53[j], r61[j], r62[j],
          r63[j]), byrow = TRUE, nrow = 6)

table3
[1,] 0.01069591 0.11655420 0.10978280
[2,] 0.01099606 0.02741262 0.03307914
[3,] 0.01115355 0.01990043 0.01756001
[4,] 0.01258564 0.01608100 0.01347029
[5,] 0.01127038 0.01705433 0.01732239
[6,] 0.02917979 0.02702641 0.02739469

round(table3, digits = 3)
[1,] 0.011 0.117 0.110
[2,] 0.011 0.027 0.033
[3,] 0.011 0.020 0.018
[4,] 0.013 0.016 0.013
[5,] 0.011 0.017 0.017
[6,] 0.029 0.027 0.027

j <- 2
table4 <- matrix(c(r11[j], r12[j], r13[j], r21[j], r22[j],
          r23[j], r31[j], r32[j], r33[j], r41[j], r42[j],
          r43[j], r51[j], r52[j], r53[j], r61[j], r62[j],
          r63[j]), byrow = TRUE, nrow = 6)

table4
[1,] 0.05875654 12.9842653 5.0272968
[2,] 0.06386030  4.0156380 1.9934910
[3,] 0.06727904  0.6044520 0.2220696
[4,] 0.09011541  0.1969683 0.1111314
[5,] 0.14246930  0.2415919 0.2064951
[6,] 0.10685390  0.2285541 0.1546262

round(table4, digits = 2)

[1,] 0.06 12.98 5.03
[2,] 0.06  4.02 1.99
[3,] 0.07  0.60 0.22
[4,] 0.09  0.20 0.11
[5,] 0.14  0.24 0.21
[6,] 0.11  0.23 0.15

The computation time for the normal model is negligible
(a few seconds). The second fastest approach is the S-model
with approximately 400 seconds for 1,000,000 data sets and
19 seconds for 50,000 data set using a regular laptop. In
comparison with this model, the Student model takes
approximately 1.5 times longer, the M-model takes 5 times
longer, the LPTN model with alpha = 1.96 takes 10 times
longer and the LPTN model with alpha = 1.5 takes 14 times
longer.
\end{verbatim}

\end{document}